\def \IsDraft{}

\documentclass[11pt]{article}

\usepackage{color}
\usepackage{amsfonts,amsmath,amssymb,amsthm,boxedminipage,color,url,fullpage}
\definecolor{weborange}{rgb}{.8,.3,.3}
\definecolor{webblue}{rgb}{0,0,.8}
\definecolor{internallinkcolor}{rgb}{0,.5,0}
\definecolor{externallinkcolor}{rgb}{0,0,.5}

\usepackage[pagebackref,
hyperfootnotes=false,
colorlinks=true,
urlcolor=externallinkcolor,
linkcolor=internallinkcolor,
filecolor=externallinkcolor,
citecolor=internallinkcolor,
breaklinks=true,
pdfstartview=FitH,
pdfpagelayout=OneColumn]{hyperref}

\usepackage{amsthm} 

\usepackage{comment}

\usepackage{enumerate,paralist}
\usepackage[labelfont=bf]{caption}

\usepackage{aliascnt}
\usepackage[numbers]{natbib} 
\usepackage{cleveref}
\usepackage{xspace}
\usepackage{xstring}

\usepackage{tikz}\usetikzlibrary{arrows}
\usetikzlibrary{decorations.markings}
\usepackage{subcaption}

\usepackage{tabularx}
\setcounter{secnumdepth}{3} 
\usepackage{bbold}
\newcommand{\remove}[1]{}

\newcommand{\TLLNCS}[2]{\ifdefined\IsLLNCS#1\else #2 \fi}


\ifdefined\IsDraft
    \newcommand{\authnote}[2]{{\bf [{\color{red} #1's Note:} {\color{blue} #2}]}}
      
\else
    \newcommand{\authnote}[2]{}
    
\fi

\ifdefined\IsDraft
    
\else
    
\fi

\ifdefined\IsDraft
    \newcommand{\deleted}[1]{{\color{blue} ~Deleted:~{\color{red} #1}}}
\else
    \newcommand{\deleted}[1]{}
\fi



\newenvironment{protocol}{\begin{mybox} \vspace{-.1in}\begin{proto}}{ \vspace{-.1in} \end{proto}\end{mybox}}

\newenvironment{algorithm}{\begin{mybox} \vspace{-.1in}\begin{algo}}{ \vspace{-.1in} \end{algo}\end{mybox}}

\newenvironment{mybox}{\begin{center}\begin{tabular}{|p{0.97\linewidth}|c|}   \hline} {  \\ \hline \end{tabular} \end{center}}

\newcommand{\1}{\mathbb{1}}

\newcommand{\Ensuremath}[1]{\ensuremath{#1}\xspace}
\newcommand{\MathAlg}[1]{\mathsf{#1}}

\newcommand{\MathAlgX}[1]{\Ensuremath{\MathAlg{#1}}}

\newcommand{\aka} {also known as,\xspace}
\newcommand{\resp}{resp.,\xspace}
\newcommand{\ie}  {i.e.,\xspace}
\newcommand{\eg}  {e.g.,\xspace}

\newcommand{\wrt} {with respect to\xspace}
\newcommand{\wlg} {without loss of generality\xspace}

\newcommand{\set}[1]{\ens{#1}}

\newcommand{\half}{\tfrac{1}{2}}

\newcommand{\R}{{\mathbb R}}
\newcommand{\N}{{\mathbb{N}}}

\newcommand{\io}{\class{i.o. \negmedspace-}}
\newcommand{\zo}{\set{0,1}}

\newcommand{\zs}{{\zo^\ast}}

\newcommand{\xor}{\oplus}

\newcommand{\eps}{\varepsilon}

\newcommand{\from}{\leftarrow}
\newcommand{\la}{\gets}

\newcommand{\poly}{\operatorname{poly}}

\newcommand{\Exp}{\Ex}

\newcommand{\negl}{\operatorname{neg}}

\newcommand{\Supp}{\operatorname{Supp}}







\renewcommand{\cref}{\Cref}

\TLLNCS{
\newaliascnt{claiml}{theorem}
\newtheorem{claiml}[claiml]{Claim}
\aliascntresetthe{claiml}

\renewenvironment{claim}{\begin{claiml}}{\end{claiml}}
}
{
\newtheorem{theorem}{Theorem}[section]

\newaliascnt{lemma}{theorem}
\newtheorem{lemma}[lemma]{Lemma}
\aliascntresetthe{lemma}


\newaliascnt{claim}{theorem}
\newtheorem{claim}[claim]{Claim}
\aliascntresetthe{claim}

\newaliascnt{corollary}{theorem}

\aliascntresetthe{corollary}

\newaliascnt{proposition}{theorem}
\newtheorem{proposition}[proposition]{Proposition}
\aliascntresetthe{proposition}

\newaliascnt{conjecture}{theorem}

\aliascntresetthe{conjecture}

\newaliascnt{definition}{theorem}
\newtheorem{definition}[definition]{Definition}
\aliascntresetthe{definition}

\newaliascnt{remark}{theorem}
\newtheorem{remark}[remark]{Remark}
\aliascntresetthe{remark}

\newaliascnt{example}{theorem}

\aliascntresetthe{example}
}

\crefname{lemma}{Lemma}{Lemmas}
\crefname{figure}{Figure}{Figures}
\crefname{claim}{Claim}{Claims}
\crefname{corollary}{Corollary}{Corollaries}
\crefname{proposition}{Proposition}{Propositions}
\crefname{conjecture}{Conjecture}{Conjectures}
\crefname{definition}{Definition}{Definitions}
\crefname{remark}{Remark}{Remarks}
\crefname{exmaple}{Example}{Examples}

\newaliascnt{construction}{theorem}

\aliascntresetthe{construction}
\crefname{construction}{Construction}{Constructions}

\newaliascnt{fact}{theorem}

\aliascntresetthe{fact}
\crefname{fact}{Fact}{Facts}

\newaliascnt{notation}{theorem}
\newtheorem{notation}[notation]{Notation}
\aliascntresetthe{notation}
\crefname{notation}{Notation}{Notation}

\crefname{equation}{Equation}{Equations}

\newaliascnt{proto}{theorem}

\newtheorem{proto}[proto]{Protocol}

\aliascntresetthe{proto}
\crefname{proto}{protocol}{protocols}

\newaliascnt{algo}{theorem}
\newtheorem{algo}[algo]{Algorithm}
\aliascntresetthe{algo}
\crefname{algo}{algorithm}{algorithms}

\newaliascnt{expr}{theorem}
\newtheorem{expr}[expr]{Experiment}
\aliascntresetthe{expr}
\crefname{experiment}{experiment}{experiments}

\newcommand{\stepref}[1]{Step~\ref{#1}}


\def\FullBox{$\Box$}
\def\qed{\ifmmode\qquad\FullBox\else{\unskip\nobreak\hfil
\penalty50\hskip1em\null\nobreak\hfil\FullBox
\parfillskip=0pt\finalhyphendemerits=0\endgraf}\fi}

\def\qedsketch{\ifmmode\Box\else{\unskip\nobreak\hfil
\penalty50\hskip1em\null\nobreak\hfil$\Box$
\parfillskip=0pt\finalhyphendemerits=0\endgraf}\fi}





\newcommand{\eex}[2]{\Ex_{#1}\left[#2\right]}
\newcommand{\ex}[1]{\Ex\left[#1\right]}
\newcommand{\Ex}{{\mathrm E}}
\renewcommand{\Pr}{{\mathrm {Pr}}}
\newcommand{\pr}[1]{\Pr\left[#1\right]}
\newcommand{\ppr}[2]{\Pr_{#1}\left[#2\right]}


\newcommand{\Ac}{\MathAlgX{A}}
\newcommand{\Ah}{\MathAlgX{\widehat{A}}}
\newcommand{\At}{\MathAlgX{\widetilde{A}}}
\newcommand{\Ao}{\MathAlgX{\overline{A}}}
\newcommand{\Inv}{\MathAlgX{Inv}}
\newcommand{\Bh}{\MathAlgX{\widehat{B}}}
\newcommand{\Bt}{\MathAlgX{\widetilde{B}}}
\newcommand{\Bo}{\MathAlgX{\overline{B}}}

\newcommand{\Po}{\MathAlgX{\overline{\pi}}}
\newcommand{\Ph}{\MathAlgX{\widehat{\pi}}}
\newcommand{\Pt}{\MathAlgX{\widetilde{\pi}}}

\newcommand{\Xo}{\overline{X}}

\newcommand{\Xt}{\widetilde{X}}

\newcommand{\Yo}{\overline{Y}}

\newcommand{\Yt}{\widetilde{Y}}

\newcommand{\To}{\overline{T}}

\newcommand{\Tt}{\widetilde{T}}

\newcommand{\Zo}{\overline{Z}}

\newcommand{\Zt}{\widetilde{Z}}

\newcommand{\Ot}{\widetilde{O}}

\newcommand{\Dc}{\MathAlgX{D}}

\newcommand{\Pc}{\MathAlgX{P}}

\newcommand{\Bc}{\MathAlgX{B}}
\newcommand{\Cc}{\MathAlgX{C}}

\newcommand{\ens}[1]{\left\{#1\right\}}
\newcommand{\size}[1]{\left|#1\right|}

\newcommand{\cindist}{\mathbin{\stackrel{\rm C}{\approx}}}

\newcommand{\out}{\operatorname{out}}

\newcommand{\trans}{{\operatorname{trans}}}

\newcommand{\view}{\operatorname{view}}

\newcommand{\Uni}{{\mathord{\mathcal{U}}}}


\newcommand{\prob}[1]{\mathsf{\textsc{#1}}}

\newcommand{\SD}{\prob{SD}}

\newcommand{\I}{\mathcal{I}}


\newcommand{\ppt}{{\sc ppt}\xspace}
\newcommand{\pptm}{{\sc pptm}\xspace}

\newcommand{\cS}{\mathcal{S}}

\newcommand{\cG}{\mathcal{G}}
\newcommand{\cF}{\mathcal{F}}



\newcommand{\p}{{{\mathsf{P}}}}
\newcommand{\D}{{{\mathsf{D}}}}

\newcommand{\cs}{{\cal{S}}}

\newcommand{\is}{{i^\ast}}

\newcommand{\Tableofcontents}{
\ifdefined\IsLLNCS \else
\thispagestyle{empty}
\pagenumbering{gobble}
\clearpage
\ifdefined\IsSubmission \else
\tableofcontents
\thispagestyle{empty}
\clearpage
\fi
\pagenumbering{arabic}
\fi
}

\newcommand{\party}[1]{%
    \IfEqCase{#1}{%
        {1}{\Ac}
        {2}{\Bc}
        {3}{\Cc}
    }[\PackageError{\party}{Undefined option to party: #1}{}]%
}%

\newcommand{\secParam}{\kappa}

\newcommand{\SMbox}[1]{\mbox{\scriptsize {\sc #1}}}
\newcommand{\REAL}{\SMbox{REAL}}

\mathchardef\mhyphen="2D


\newcommand{\remph}[1]{\textsf{#1}}

\newcommand{\XOR}{\text{DP-XOR}}

\newcommand{\Ec}{\MathAlgX{E}}

\newcommand{\Fc}{\MathAlgX{F}}

\newcommand{\pk}{\secParam}  

  %



\newcommand{\II}{\mathcal{I}}

\newcommand{\tth}[1]{#1\Ensuremath{^{\rm th}}}
\newcommand{\ith}{\tth{i}}

\newcommand{\SIML}{\SMbox{SML}}
\newcommand{\FSC}{\SMbox{FST}}
\newcommand{\PFSC}{\SMbox{PFST}}
\newcommand{\UCR}{\SMbox{UCR}}

\newcommand{\Price}{\mathsf{price}}

\newcommand{\Decr}{\MathAlgX{Decor}}

\newcommand{\KA}{\Phi}

\newcommand{\sX}{X'}
\newcommand{\sY}{Y'}

\newcommand{\sA}{{\widehat{\Ac}}}
\newcommand{\sB}{{\widehat{\Bc}}}

\newcommand{\Fh}{\MathAlgX{\widehat{\Fc}}}

\newcommand{\Prod}{{\mathsf{prod}}}
\newcommand{\Sim}{\MathAlgX{Sim}}

\newcommand{\INFO}{{\mathsf{IT}}}
\newcommand{\COMP}{{\mathsf{COM}}}

\newcommand{\pb}{p_{\Bc}}
\newcommand{\pa}{p_{\Ac}}

\newcommand{\pbb}[1]{p_{\Bc| #1}}

\newcommand{\pbz}{\pbb{0}}
\newcommand{\pbo}{\pbb{1}}

\remove{

}


\title{Computational Two-Party Correlation: \\A Dichotomy for Key-Agreement Protocols\thanks{The full version was published in the SIAM Journal on Computing 2020 \cite{SIAM2020}. An extended abstract of this work appeared in  the Annual Symposium on Foundations of Computer Science (FOCS)  2018 \cite{HNOSS18}.}
}

\author{Iftach Haitner\thanks{School of Computer Science, Tel Aviv University. E-mail: \texttt{iftachh@cs.tau.ac.il}. Member of the Check Point Institute for Information Security. Research supported by ERC starting grant 638121.}
\and Kobbi Nissim\thanks{Department of Computer Science, Georgetown University. E-mail: \texttt{kobbi.nissim@georgetown.edu}. Research supported by NSF grant CNS-1565387.}
\and Eran Omri\thanks{Department of Computer Science, Ariel University. E-mail: \texttt{omrier@ariel.ac.il}. Research supported by ISF grants 544/13 and 152/17.}
\and Ronen Shaltiel\thanks{Department of Computer Science. University of Haifa, E-mails: \texttt{ronen@cs.haifa.ac.il}. Research supported by ISF grant 1628/17.}
\and Jad Silbak\thanks{School of Computer Science, Tel Aviv University. E-mail: \texttt{jadsilbak@mail.tau.ac.il}. Research supported by ISF grant 1628/17 and by ERC starting grant 638121.}
}

\date{\vspace{-3ex}}

\begin{document}
\sloppy

\maketitle

\vspace{-0.1in}
\begin{abstract}
Let $\pi$ be an efficient two-party protocol that given security parameter $\kappa$, both parties output single bits $X_\kappa$ and $Y_\kappa$, respectively. We are interested in how $(X_\kappa,Y_\kappa)$ ``appears'' to an  efficient adversary  that only views the transcript $T_\kappa$. We make the following contributions:

\begin{itemize}
\item We develop new tools to argue about this loose notion and show (modulo some caveats) that for every such protocol $\pi$, there exists an efficient \textit{simulator} such that the following holds:  on input $T_\kappa$, the simulator outputs  a pair $(X'_\kappa,Y'_\kappa)$ such that $(X'_\kappa,Y'_\kappa,T_\kappa)$  is (somewhat) \emph{computationally indistinguishable} from  $(X_\kappa,Y_\kappa,T_\kappa)$.

\item We use these tools to prove the following \emph{dichotomy theorem}: every such  protocol $\pi$ is:
    \begin{itemize}
    \item either  \textit{uncorrelated} ---   it is (somewhat) indistinguishable from an efficient   protocol whose parties interact to produce $T_\kappa$, but then choose their outputs \emph{independently} from some product distribution (that is determined in poly-time from $T_\kappa$),

    \item or, the protocol implies a key-agreement protocol (for infinitely many $\kappa$'s). 
    \end{itemize}

    Uncorrelated protocols are  uninteresting from a cryptographic viewpoint, as the correlation between outputs is (computationally) trivial. Our dichotomy shows that every protocol is either completely uninteresting or implies key-agreement.

\item We use the above  dichotomy  to make progress on open problems on minimal cryptographic assumptions required for differentially private mechanisms for the XOR function.

\item A subsequent work of \citeauthor{HMO}  uses the above dichotomy  to makes progress on a long-standing open question regarding the  complexity  of fair two-party  coin-flipping protocols.
\end{itemize}

\noindent
We highlight the following two ideas regarding our technique:
\begin{itemize}
\item The simulator algorithm  is obtained by a carefully designed ``competition'' between efficient  algorithms attempting to forecast $(X_\kappa,Y_\kappa)|_{T_\kappa=t}$. The winner is used to simulate the outputs of the protocol.
\item Our key-agreement protocol uses the simulation to reduce to an information theoretic setup, and is in some sense non-black box.
\end{itemize}
\end{abstract}



\remove{
\thispagestyle{empty}
\pagenumbering{gobble}
\clearpage
\pagenumbering{arabic}
}
\Tableofcontents

\section{Introduction}\label{sec:intro}

In this paper, we discuss ``computational correlation" of efficient single-bit output  two-party protocols. We start with some notation for such protocols.

\paragraph{Two-party protocols with single bit output.} We are interested in probabilistic polynomial-time (\ppt), two-party, no-input,  single-bit output protocols: the  \ppt  parties  receive a common input $1^\kappa$ (\ie a security parameter), and each party outputs a single  bit. For such protocols $\pi=(\Ac,\Bc)$ we use the  notation:
\[ \pi(1^\kappa) = (\Ac,\Bc)(1^\kappa)=(X_\kappa,Y_\kappa,T_\kappa). \]
Where $X_\kappa$ is the output of $\Ac$, $Y_\kappa$ is the output of $\Bc$, and $T_\kappa$ is the transcript of the protocol.  Loosely speaking, we are interested in the correlation that an execution of $\pi(1^\kappa)$ generates between $X_\kappa$ and $Y_\kappa$, when viewed from the point of view of a \ppt algorithm that receives only the transcript $T_\kappa$ as input.

\paragraph{Key-agreement protocols.}
We will be interested in ``computational correlation'' between the outputs of a protocol. It is instructive to consider the  example of key-agreement protocols. The latter  are  \ppt protocols with the following properties:
\begin{description}
\item[Secrecy.] $\pr{\Ec(T_\kappa)=X_\kappa}\le \frac{1}{2} + s(\kappa)$ for  every \ppt algorithm (eavesdropper)  \Ec. (Here the standard choice for $s(\kappa)$ is a negligible function, but we will also consider versions where $s(\kappa)=s$ is a constant).

\item[Agreement.]  $\pr{X_\kappa=Y_\kappa} \ge \frac{1}{2} + a(\kappa)$. (Here the standard choice for $a(\kappa)$ is half minus a negligible function, but we will also consider versions where $a(\kappa)=a$ is a constant, and $a>s$).
\end{description}
The reader is referred to \cite{Holenstein06b} for a survey on key-agreement protocols. We remark that by \cite{Holenstein06b}, a key-agreement protocol for constants $s$ and $a$ with $s<a^2/10$, implies a full-fledged key-agreement protocol (\ie with the standard choices of agreement and secrecy).

\paragraph{Computational correlation.}
Loosely speaking, from the ``point of view'' of a \ppt algorithm \Ec that only sees  the transcript $t$ of a key-agreement protocol, the probability space $(X_\kappa,Y_\kappa)|_{T_\kappa=t}$ ``should look like'' $(R,R)$, for $R$ being a uniform bit (unknown to \Ec). This in contrast to the  view of an \emph{unbounded} \Ec: since for any protocol, and every transcript $t$, $(X_\kappa,Y_\kappa)|_{T_\kappa=t}$ is a product distribution.\footnote{In an information theoretic setup (without a dealer), if the views of parties $\Ac$ and $\Bc$ have no correlation (a product distribution), then even after the parties interact, conditioned on this interaction (transcript) the view of both parties remains a product distribution.}

An important contribution of this paper is developing tools to formalize the vague notion of ``computational correlation'' in a rigorous (and as we shall explain) useful way. Specifically, we show that (modulo some caveats and technicalities that we soon explain) for every single-bit output, two-party protocol, there exists a \ppt algorithm (simulator)  \Sim such that the following holds:    on input  $T_\kappa$, \Sim outputs  two bits (simulated outputs) $(X'_\kappa,Y'_\kappa)$ such that the simulated  experiment $(X'_\kappa,Y'_\kappa,T_\kappa)$ is computationally indistinguishable from (real) experiment $(X_\kappa,Y_\kappa,T_\kappa)$.

The simulated  experiment represents the ``best understanding'' that a \ppt can obtain on the real experiment. We find it quite surprising that such a clean notion exists. One could have expected that different \ppt's have ``different views'' or ``different understanding'' of the real execution, and it is impossible to come up with a \emph{single} simulated distribution that represents the ``collective understanding'' of all \ppt's. Loosely speaking, the above yields that such two-party protocols can be classified as follows:
\begin{itemize}
\item Protocols in which the simulated distribution $(X'_\kappa,Y'_\kappa,T_\kappa)$ has the property that $(X'_\kappa,Y'_\kappa)$ are independent, conditioned on every fixing of $T_\kappa$. We will call such protocols ``uncorrelated''.
\item Protocols in which the simulated distribution $(X'_\kappa,Y'_\kappa,T_\kappa)$ has the property that $(X'_\kappa,Y'_\kappa)$ are correlated given $T_\kappa$ (at least for some fixings of $T_\kappa$).
\end{itemize}

\paragraph{Uncorrelated protocols are cryptographically uninteresting.}
Uncorrelated protocols are uninteresting from a cryptographic viewpoint;  whenever we have such a protocol $\pi$, we can imagine that the parties use the following alternative trivial protocol $\widehat{\pi} =(\sA,\sB)$: party  $\sA$ samples a transcript $T_\kappa$ (on his own) and  sends $T_\kappa$ to $\sB$. Then each party samples its output (independently) by applying  the simulator for $\pi$ on $T_\kappa$.

As is often the case in simulation, if a \ppt  adversary $\Ec$  is able to perform some task (that is defined in terms of the original triplet $(X_\kappa,Y_\kappa,T_\kappa)$), then it achieves roughly the same success on the simulated triplet $(\sX_\kappa,\sY_\kappa,T_\kappa)$. Specifically, if $\pi$ is a key-agreement protocol, then $\widehat{\pi}$ is also a key-agreement protocol. The latter, however, is obviously false. This is because given $T_\kappa$, the adversary $\Ec$ can use the simulator to sample $X'_\kappa$ with probability that is at least as large as $\pr{X'_\kappa=Y'_\kappa}$. This means that in $\widehat{\pi}$ secrecy is less than agreement, ruling out any meaningful form of key-agreement.

\paragraph{Correlated protocols yield key-agreement.}
In this paper, we prove that (again, modulo some caveats and technicalities that we soon explain) if a protocol is correlated, then it can be transformed into a key-agreement protocol. This can be interpreted as the following dichotomy theorem:

\begin{quote}
Every \ppt single-bit output  two-party  protocol  is either uncorrelated (and is indistinguishable from a trivial and cryptographically uninteresting protocol), or it implies a key-agreement protocol.
\end{quote}

\smallskip
We find this quite surprising. Intuitively, key-agreement protocols and trivial protocols represent two extremes in the spectrum of two-party protocols, and one may expect that there are many interesting intermediate types in between the two extremes.\footnote{One illuminating ``intermediate setup'' is ``defective key-agreement protocols'' in which the agreement and secrecy properties above hold, but with $a<s$ (namely, agreement is smaller than secrecy, and this is not a cryptographically meaningful key-agreement). Such protocols can be uncorrelated (and trivial), but they can also be correlated, and thus, by our result, imply key-agreement. As we shall explain, this approach yields several new results, as in some cases it was previously unknown whether key-agreement protocols are implied, but it is possible to show that the protocol is not uncorrelated.}

\subsection{Our Results}

\subsubsection{Every two-party single bit output protocol has a simulator and a forecaster}

We show that every protocol has a \ppt \emph{simulator} that, seeing only the transcript, produces a simulated distribution  simulating the (real) output distribution of the protocol.

\begin{theorem}[Existence of \ppt simulators (informal)]\label{thm:intro:simulator}
Let $\pi=(\Ac,\Bc)$ be a \ppt no-input, single-bit output    two-party  protocol. For every $\rho>0$ there exists a \ppt $\Sim$ such that when given $(1^\kappa,t)$, $\Sim(1^\kappa,t)$ outputs two bits, $(x',y')$ such that the following holds:
Let $\REAL=\set{\REAL_\kappa}_{\kappa \in \N}$ and $\SIML=\set{\SIML_\kappa}_{\kappa \in \N}$ be ensembles defined as follows: $\REAL_\kappa=\pi(1^k)=(X_\kappa,Y_\kappa,T_\kappa)$ and let $\SIML_\kappa=(X'_\kappa,Y'_\kappa,T_\kappa)$ for $(X'_\kappa,Y'_\kappa)=\Sim(1^\kappa,T_\kappa)$. For infinitely many $\kappa \in \N$, $\REAL$ cannot be distinguished from $\SIML$ with advantage $\rho$ by \ppt algorithms.
\end{theorem}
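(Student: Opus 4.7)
The plan is to construct the simulator via a ``forecaster competition'' driven by a potential-function / boosting argument. I call a \ppt algorithm $F$ a \emph{forecaster} if on input $(1^\kappa, t)$ it outputs four non-negative rationals $p^F_t(x,y)$ summing to $1$, intended as an estimate of $\Pr[(X_\kappa,Y_\kappa) = (x,y) \mid T_\kappa = t]$. Every forecaster $F$ induces a simulator $\Sim_F$ that on input $(1^\kappa, t)$ samples and outputs $(x',y')$ from $p^F_t$. The (inefficient) \emph{ideal} forecaster returning the true conditional yields a simulator whose output ensemble coincides with $\REAL$; the task is to find an efficient $F$ that is indistinguishable from this ideal one by \ppt distinguishers.

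To quantify the quality of $F$, I would adopt a strictly proper scoring rule such as the Brier score $s(p,(x,y)) = \sum_{(x',y')} \bigl(p(x',y') - \1_{(x',y')=(x,y)}\bigr)^2$ and define the potential
\[ \Phi(F) \;=\; \Ex\!\left[\, s\bigl( p^F_{T_\kappa},\, (X_\kappa,Y_\kappa) \bigr) \,\right]. \]
Strict propriety ensures that $\Phi$ is minimized exactly by the ideal forecaster and that $\Phi(F) \in [0, O(1)]$ for every $F$. Suppose, toward a contradiction, that the theorem fails for some $\rho > 0$: for every \ppt simulator $\Sim$, for all but finitely many $\kappa$ there is a \ppt distinguisher $D$ achieving advantage at least $\rho$ between $\REAL_\kappa$ and $\SIML_\kappa$. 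The crux of the argument is to convert any such $D$ into a strict decrease in $\Phi$.

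Concretely, given a forecaster $F$ and a distinguisher $D$ refuting $\Sim_F$, I would read $D$ as assigning to each triple $(t,x,y)$ a signed weight $c_D(t,x,y) \in [-1,1]$ capturing how much more probability $\REAL$ assigns to $(t,x,y)$ than $\SIML_F$ does. Update $F$ to $F'$ by a multiplicative-weights step, $p^{F'}_t(x,y) \propto p^F_t(x,y) \cdot \exp\!\bigl(\eta \cdot c_D(t,x,y)\bigr)$, for step size $\eta = \Theta(\rho)$. A Taylor-expansion calculation of the kind used in boosting and dense-model arguments would then give $\Phi(F') \leq \Phi(F) - \Omega(\rho^2)$. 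Starting from the trivial uniform forecaster and iterating $N = O(1/\rho^2)$ such steps produces a forecaster $F^*$ of total description size $\poly(\kappa,1/\rho)$ that itself runs in polynomial time (each step just composes one additional evaluation of $D_i$ with $F_i$). For any $\kappa$ in the cofinite set on which distinguishers exist for all of the intermediate $\Sim_{F_0}, \dots, \Sim_{F_N}$ we would then force $\Phi(F^*) < 0$, a contradiction.

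The step I expect to be the main obstacle is ensuring that the final simulator is a \emph{uniform} \ppt, since the naive iteration only provides a non-uniform polynomial-size object (we merely assumed the $D_i$ exist, we did not say how to find them). I plan to handle this following the template of a Vadhan--Zheng-style uniform min-max theorem: because both forecasters and distinguishers are polynomial-time uniform, an approximate Nash equilibrium of the associated zero-sum game can be computed by a single uniform \ppt procedure, at the cost of requiring the conclusion to hold only for infinitely many $\kappa$ --- precisely the form of the theorem. A minor secondary issue is the precision of the forecaster's rational outputs: polynomially many bits suffice, since both the multiplicative update and the scoring-rule analysis are robust to $1/\poly(\kappa)$ rounding and the distribution lives on just four atoms.
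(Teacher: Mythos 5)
Your overall approach matches the paper's: both set up a competition among \ppt forecasters under a (strictly proper) squared-loss price, and both convert a distinguisher against the induced simulator into a strictly cheaper forecaster, yielding a potential-function argument. The differences in parametrization are cosmetic: you forecast on the full four-atom simplex with a Brier score and a multiplicative-weights update, while the paper forecasts via the three conditional parameters $(p_\Ac, p_{\Bc\mid 0}, p_{\Bc\mid 1})$, decomposes the two-sided distinguisher into three one-sided ones by a hybrid argument, and uses an additive update $\Fc(t) \mapsto \Fc(t) \pm \gamma$. Your per-step drop $\Omega(\rho^2)$ is actually better than the paper's $(\rho/3)^3$ (which pays for the one-sided decomposition), but since both are constant for fixed $\rho$ this does not change anything qualitatively.

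There is, however, a genuine gap in how you handle the quantifier over $\kappa$. You write that for any $\kappa$ ``in the cofinite set on which distinguishers exist for all of the intermediate $\Sim_{F_0},\dots,\Sim_{F_N}$'' you would drive $\Phi(F^*)<0$. But the negation of the theorem does not hand you distinguishers that succeed on a cofinite set of $\kappa$. Unwinding the ``io'' definition, it hands you: for every infinite $\II\subseteq\N$ there exists a \pptm $\Dc$ that achieves advantage $>\rho$ for infinitely many $\kappa\in\II$ --- not for all sufficiently large $\kappa\in\II$. Different intermediate distinguishers $D_0, D_1, \dots$ can therefore succeed on \emph{disjoint} infinite sets, and the intersection you need is not guaranteed nonempty. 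The paper's version of the iteration handles this precisely: at each step it holds a forecaster $\Fc^{(i)}$ together with an infinite set $\II_i$, and when it improves it replaces $\II_i$ by an infinite $\II_{i+1}\subseteq\II_i$ on which the improvement holds; the price being bounded by $2$ forces the chain to stop after $\le 2/\mu$ steps, at which point the current $(\Fc^{(i)},\II_i)$ is the desired $(\mu,\II_i)$-optimal pair. This nesting is exactly where the ``infinitely many $\kappa$'' caveat in the theorem comes from, and it cannot be replaced by a cofiniteness argument. Relatedly, the worry that drives you to a Vadhan--Zheng-style uniform min-max is a red herring: the iteration is a proof of existence, not an algorithm. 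Each $D_i$ is some fixed \pptm guaranteed to exist, each $F_{i+1}$ is built by composing the (fixed) code of $F_i$ and $D_i$, and the process terminates in a constant number of steps, so the resulting $F^*$ is a perfectly ordinary uniform \ppt machine. Once you realize this, the min-max machinery is unnecessary; what you actually need to fill in is the nested-infinite-subset bookkeeping described above.
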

\noindent
(A precise formal definition of computational indistinguishability with advantage $\rho$ is given in \cref{def:CompInd}. \cref{thm:intro:simulator} is formally stated in  \cref{sec:Classification} in a more general form.)

 \cref{thm:intro:simulator} comes with two caveats:

\begin{itemize}
\item The simulated  ensemble $\SIML$ is only guaranteed to resemble the real ensemble $\REAL$ on some infinite subset $\II$ of $\kappa \in \N$.
\item For $\kappa \in \II$, $\REAL$ and $\SIML$ are only weakly indistinguishable as $\rho$ is not negligible.
\end{itemize}

We do not know whether the theorem can be proven without these caveats. We mention that most of the machinery that we develop (with one notable exception) can be used towards proving a version without the caveats.
As we will demonstrate, in some cases, the caveats do not affect applications, and we can prove clean results using the theorem.

\begin{remark}[Auxiliary input simulators, and the leakage simulation lemmas]
\label{rem:leakage}
\cref{thm:intro:simulator} is similar in spirit to the so called ``leakage simulation lemma'' \cite{TTV09,JP14,VZ13,Sko16a,Sko16b,ChenChungLiao18}.

In the leakage simulation lemma one considers a pair $(T,Z)$ of random variables, and a finite class $\cal C$ of ``distinguisher functions'' (which is typically the class of circuits of some size $s$, and so we will assume this for this discussion). The lemma states that there is a ``simulator function'' $\Sim$ of circuit complexity $s'$, which on input $T$ produces a string $Z'$ such that no distinguisher $D$ from $\cal C$ can distinguish $(T,Z)$ from $(T,Z')$ with advantage greater than some parameter $\rho>0$. The complexity $s'$ is some polynomial in $s,\ell,\frac{1}{\rho}$ (where $\ell$ is the bit length of $Z$). The reader is referred to \cite{ChenChungLiao18} for a discussion of works in this framework.

There are two differences between the leakage simulation lemma and \cref{thm:intro:simulator}:
\begin{itemize}
\item The class of distinguishers $\cal C$ that we consider are randomized polynomial time machines, and we show the existence of a simulator $\Sim$ that \emph{belongs} to this class. This is crucial in our applications.
    In contrast, in the leakage simulation lemma the simulator is a circuit of size $s'>s$ and \emph{does not} belong to the class $\cal C$. Moreover, there are negative results \cite{TTV09,ChenChungLiao18} showing limitations on proving the leakage simulation lemma with $s' \le s$.
\item In \cref{thm:intro:simulator} we can only achieve $\rho>0$ that is constant, whereas the leakage simulation lemma can achieve much smaller $\rho$ (and this is crucial in some of its applications).
\end{itemize}
\end{remark}

\paragraph{Forecasters.}
In applications, it will be useful to assume that the simulators  work in the following specific fashion: there is a ``forecaster algorithm'' $\Fc$ which on input $t$, generates a description of the probability space $(X'_\kappa,Y'_\kappa)|_{T_\kappa=t}$. For technical reasons, it is helpful to think of the forecaster \Fc as a deterministic poly-time algorithm that receives its random coin $r$, as an additional input. Given input $(1^\kappa,t,r)$ the forecaster outputs three numbers:
\begin{itemize}
\item $\pa$ which is a ``forecast'' for $\pr{X_\kappa=1\mid T_\kappa=t}$.
\item $\pbz$ which is a ``forecast'' for $\pr{Y_\kappa=1\mid T_\kappa=t,X_\kappa=0}$.
\item $\pbo$ which is a ``forecast'' for $\pr{Y_\kappa=1\mid T_\kappa=t,X_\kappa=1}$.
\end{itemize}
All that is left for the simulator is to sample according to this forecast. For $p \in [0,1]$, we will use the notation $U_p$ to denote the distribution of a biased coin that is one with probability $p$. We can now restate \cref{thm:intro:simulator} in the following more general form:

\begin{theorem}[Existence of \ppt forecasters, informal]\label{thm:intro:forecaster}
Let $\pi=(\Ac,\Bc)$ be a \ppt no-input, single-bit output    two-party  protocol. For every $\rho>0$ there exists a deterministic poly-time machine \Fc that
on input $(1^\kappa,t,r)$ outputs three numbers $\pa,\pbz,\pbo \in [0,1]$ such that the following holds:
let $R_\kappa$ be a uniform  polynomially long string (intuitively $R$ serves as the random coins of \Fc), and let  $\REAL= \set{\REAL_\kappa=(\pi(1^k),R_\kappa)=(X_\kappa,Y_\kappa,T_\kappa,R_\kappa)}$ and $\SIML= \set{\SIML_\kappa=(X'_\kappa,Y'_\kappa,T_\kappa,R_\kappa)}$ be the distribution ensembles obtained by:
\begin{itemize}
\item $(\pa,\pbz,\pbo) = \Fc(1^{\kappa},T_\kappa,R_\kappa)$.

\item $X'_\kappa \from U_{\pa}$ and $Y'_\kappa \from U_{\pbb{X'_\kappa}}$.
\end{itemize}
Then for infinitely many $\kappa \in \N$, $\REAL$ cannot be distinguished from $\SIML$ with advantage $\rho$ by \ppt algorithms.
\end{theorem}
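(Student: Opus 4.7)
The plan is to build $\Fc$ as the winner of a carefully designed tournament among an efficient family of candidate forecasters, refereed by a strictly proper scoring rule. Fix a size bound $s=\poly(\kappa,1/\rho)$ and let $\cF_s$ enumerate all deterministic time-$s$ algorithms mapping $(1^\kappa,t)$ to a triple $(\pa,\pbz,\pbo)\in[0,1]^3$. On input $(1^\kappa,t,r)$ the meta-forecaster $\Fc$ interprets $r$ as randomness to (i) draw $m=\poly(\kappa,1/\rho)$ fresh independent executions of the protocol $(X^{(j)},Y^{(j)},T^{(j)})\sim\pi(1^\kappa)$, (ii) compute for each candidate $\Fc_i\in\cF_s$ the empirical Brier score
\[
\widehat{\mathrm{score}}(\Fc_i)=\frac{1}{m}\sum_{j=1}^{m}\Bigl(-(X^{(j)}-\pa^{(j,i)})^2 - (Y^{(j)}-\pbb{X^{(j)}}^{(j,i)})^2\Bigr),
\]
where $(\pa^{(j,i)},\pbz^{(j,i)},\pbo^{(j,i)})=\Fc_i(1^\kappa,T^{(j)})$, and (iii) return the forecast $\Fc_{i^*}(1^\kappa,t)$ of the candidate $i^*$ achieving the largest score. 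A Hoeffding union bound over $\cF_s$ guarantees that the empirical ranking matches the true expected-score ranking up to additive error $\rho/100$ with overwhelming probability.

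The analysis is by contradiction: assume that for every sufficiently large $\kappa$ there is a PPT distinguisher $\Dc_\kappa$ separating $\REAL_\kappa$ from $\SIML_\kappa$ with advantage strictly greater than $\rho$. Because the Brier score is strictly proper, its expectation under $(X,Y,T)\sim\pi(1^\kappa)$ is maximized precisely when the forecast equals the true conditionals $\Pr[X_\kappa=1\mid T=t]$ and $\Pr[Y_\kappa=1\mid T=t,X_\kappa=b]$, and the shortfall is exactly the expected squared error. The key extraction lemma then converts $\Dc_\kappa$ into an explicit efficient forecaster $\Fc_{\Dc_\kappa}$ whose expected Brier score exceeds $\Fc_{i^*}$'s by $\Omega(\rho^2)$: on the subset of transcripts where $\Dc_\kappa$ witnesses the gap, overwrite $\Fc_{i^*}$'s forecast with the Bayesian posterior induced by conditioning on $\Dc_\kappa$'s binary output, and elsewhere leave it unchanged. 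Provided $s$ is chosen large enough so that $\Fc_{\Dc_\kappa}\in\cF_s$, this contradicts $i^*$'s victory in the tournament.

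The hardest part is engineering the Bayesian extraction in a structure-preserving way: $\Dc_\kappa$ returns only a single noisy bit per sample, so translating its advantage into a quantitative improvement of the factored three-parameter forecast requires iterating the update (in the spirit of the hard-core lemma and the Impagliazzo--Holenstein uniform arguments), while preserving the factored form---marginal on $X$, then conditional on $Y$ given $X=0,1$---rather than producing an arbitrary joint approximation. The two caveats of the theorem are byproducts of this quantitative loss: $\rho$ must be a positive constant because the combined losses of Hoeffding estimation, Pinsker's inequality, and boosting prevent the simulator's complexity from staying polynomial as $\rho\to 0$; the conclusion is only infinitely often because the enumeration $\cF_s$ is fixed once and for all with $s$ polynomial in $\kappa$, so distinguishers whose size eventually outgrows $s$ may succeed for finite stretches of $\kappa$ but---by a standard pigeonhole argument---cannot succeed on a cofinite set.
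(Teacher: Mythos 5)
Your extraction step (a distinguisher yields a forecaster with strictly better Brier/squared-error score) is essentially the paper's \cref{lemma:DistinguisherImplyImprove}, and it is even simpler than you fear: a single additive $\pm\gamma$ shift of the forecast on the transcripts where the distinguisher's two answers differ suffices, with no boosting, no Pinsker, and no iteration, while automatically preserving the factored three-parameter form. The genuine gap is in the other half: your tournament cannot be run in polynomial time in the form you describe, and weakening it breaks the contradiction. If $\cF_s$ really is ``all deterministic time-$s$ algorithms,'' it has size exponential in $s$, so step (ii) of your meta-forecaster enumerates exponentially many candidates and $\Fc$ is not \ppt. If instead you restrict to polynomially many machines (say, bounded description length, clocked at time $s$), then the clause ``provided $s$ is chosen large enough so that $\Fc_{\Dc_\kappa}\in\cF_s$'' fails for a \emph{fixed} adversary: the distinguisher $\Dc_\kappa$ is an arbitrary \pptm whose running time and description may exceed any bound fixed in advance, so the improved forecaster it induces need never appear in your candidate pool, and such a distinguisher can then succeed on \emph{all} $\kappa$, not merely on ``finite stretches.'' Your closing pigeonhole claim is asserted but not proved, and it is exactly the point at which the argument collapses.

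The paper circumvents this by \emph{not} constructing the optimal forecaster effectively. It proves existence of a $(\mu,\II')$-optimal forecaster (\cref{lemLtwo:OptimalForcsterExist}) by a purely existential iterative refinement over the class of \emph{all} \ppt forecasters: start with any forecaster and $\II=\N$; whenever some \ppt forecaster improves the price by $\mu$ on an infinite subset, pass to that forecaster and that subset; since the price lies in $[0,2]$ this terminates after at most $2/\mu$ steps. Optimality is then against every \ppt forecaster of every polynomial running time, which is what makes the contradiction with \cref{lemma:DistinguisherImplyImprove} go through. This non-constructive step is also the true source of both caveats: the bound of $2/\mu$ iterations forces $\mu$ (hence $\rho$) to be a constant, and the repeated passage to infinite subsets is why the conclusion holds only for infinitely many $\kappa$ --- not the Hoeffding/Pinsker/boosting losses you cite. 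To repair your proposal you would have to either give up on an explicit tournament or supply a genuinely new uniform-enumeration argument; as written, the construction does not establish the theorem.
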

\noindent
(\cref{thm:intro:forecaster} is formally stated    in \cref{sec:Classification}.)

\cref{thm:intro:simulator,thm:intro:forecaster} may be of independent interest, and we believe that they will find more applications. This is because the simulator induces a \emph{single} distribution that is computationally indistinguishable (albeit only with advantage $\rho=o(1)$) from the real output distribution of the protocol. Moreover, in the simulated distribution $(X'_\kappa,Y'_\kappa,T_\kappa)$ (sampled using the forecaster) the variables $(X'_\kappa,Y'_\kappa)$ have \emph{information theoretic uncertainty} conditioned on $\set{T_\kappa=t}$. This enables us to use tools and techniques from information theory on the simulated distribution, and obtain results about the computational security of the original protocol (and protocols that we construct from it). Indeed, we use this approach in our applications.

We believe that a helpful analogy is the notion of \emph{computational entropy}: which given a distribution $X$ assigns a distribution $X'$ that is computationally indistinguishable from $X$ and has \emph{information theoretic uncertainty}.


\subsubsection{A Dichotomy of Single-bit Output  Two-Party  Protocols}

We now give an informal definition of uncorrelated protocols. For this purpose we introduce the following notion of a ``decorrelator''. Loosely speaking, a decorrelator is a forecaster that forecasts that $(X_\kappa,Y_\kappa)$ are independent conditioned on $T$. Once again, for technical reasons, it is helpful to think of a decorrelator as a deterministic poly-time algorithm that receives its random coin $r$, as an additional input.

\begin{definition}[$\rho$-decorrelator, and $\rho$-uncorrelated protocols, informal] \label{dfn:intro:decr}
A deterministic poly-time algorithm $\Decr(t,r)$ is a \remph{$\rho$-decorrelator} for protocol $\pi=(\Ac,\Bc)$ if the following holds: let $\REAL=\set{\REAL_\kappa}_{\kappa \in \N}$ and $\UCR=\set{\UCR_\kappa}_{\kappa \in \N}$ be ensembles defined as follows: $\REAL_\kappa=(\pi(1^k);R_\kappa)=(X_\kappa,Y_\kappa,T_\kappa,R_\kappa)$ where $R_\kappa$ is a uniformly chosen independent polynomially long string (that intuitively serves as the random coins of $\Decr$). Let $\UCR_\kappa=(X'_\kappa,Y'_\kappa,T_\kappa,R_\kappa)$ where $(\pa,\pb)=\Decr(T_\kappa,R_\kappa)$, and (independently sampled) $X'_\kappa \from U_{\pa}$ and $Y'_\kappa \from U_{\pb}$. It is required that for infinitely many $\kappa \in \N$, $\REAL$ cannot be distinguished from $\UCR$ with advantage $\rho$ by \ppt algorithms.

\noindent
A protocol $\pi$ is \remph{$\rho$-uncorrelated} if it has a $\rho$-decorrelator.
\end{definition}
\noindent (\cref{dfn:intro:decr} is formally stated in \cref{sec:Classification}.)

Loosely speaking, the fact that the randomness $R_\kappa$ appears in the two experiments, prevents the decorrelator from using $R_\kappa$ to correlate between $X'_\kappa$ and $Y'_\kappa$. In the definition the latter should appear independent, even after seeing $R_\kappa$.

We observe that $\rho$-uncorrelated protocols are uninteresting from a cryptographic viewpoint in the following sense (that is made precise in  \cref{sec:Classification}):
\begin{itemize}
\item A $\rho$-uncorrelated protocol cannot be a key-agreement protocol for $s<a+2\rho$.
\item If a ``black-box construction'' that makes $\ell$ invocations to a $\rho$-uncorrelated protocol, yields a key-agreement protocol with $s<a+3 \cdot \ell \cdot \rho$, then the black-box construction itself can be used to give a key-agreement (with the standard choices of secrecy and agreement) that does not use the original protocol. This means that a $\rho$-uncorrelated protocol cannot be converted into an ``interesting'' protocol by a black-box construction that invokes it few times.
\end{itemize}
Loosely speaking, both properties follow because an uncorrelated protocol is somewhat indistinguishable from one in which one party samples $(T_\kappa,R_\kappa)$ on his own, sends them to the other party, and each of the parties runs $\Decr(T_\kappa,R_\kappa)$ and samples its output independently (party $\Ac$ samples $X \from U_{\pa}$, and party $\Bc$ samples $Y \from U_{\pb}$).
The latter protocol can be easily attacked, and by indistinguishability, this attack also succeeds on the original protocol.

\noindent
We prove the following classification theorem:
\begin{theorem}[Dichotomy theorem, informal]\label{thm:mainInf}
Let $\pi=(\Ac,\Bc)$ be a \ppt no-input, single-bit output two-party  protocol. Then at least one of the following hold:
\begin{itemize}
\item $\pi$ can be transformed into a key-agreement protocol (for infinitely many $\kappa \in \N$).
\item For every constant $\rho>0$, $\pi$ is $\rho$-uncorrelated (for infinitely many $\kappa \in \N$).

\end{itemize}
\end{theorem}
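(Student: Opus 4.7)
The plan is to derive the dichotomy from the forecaster existence theorem (Theorem~\ref{thm:intro:forecaster}) combined with a case analysis on the behavior of the forecaster, followed by an information-theoretic key-agreement construction in the case where correlation is predicted. I would first fix a small constant $\rho'>0$, to be chosen later as a function of $\rho$ and of the loss in a Holenstein-type amplification step, and invoke Theorem~\ref{thm:intro:forecaster} to obtain a deterministic poly-time forecaster $\Fc$ such that $\SIML_\kappa=(X'_\kappa,Y'_\kappa,T_\kappa,R_\kappa)$ and $\REAL_\kappa=(X_\kappa,Y_\kappa,T_\kappa,R_\kappa)$ are $\rho'$-indistinguishable for every $\kappa$ in an infinite set $\II\subseteq\N$.

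With $\Fc$ in hand, I would perform the following dichotomy on $\II$. Let $\Delta_\kappa$ denote the random variable $|\pbz-\pbo|$ evaluated on $(T_\kappa,R_\kappa)$, where $(\pa,\pbz,\pbo)=\Fc(1^\kappa,T_\kappa,R_\kappa)$. If for every constant $\gamma>0$ the probability $\Pr[\Delta_\kappa\ge\gamma]$ is sufficiently small on an infinite subset of $\II$, then $\Fc$ essentially predicts conditional independence of $Y'$ from $X'$; I would then define a decorrelator $\Decr(t,r)$ that runs $\Fc(1^\kappa,t,r)$ and outputs the marginals $(\pa,\,\pa\cdot\pbo+(1-\pa)\cdot\pbz)$. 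A short hybrid between $\REAL_\kappa$, the forecasted joint distribution $\SIML_\kappa$, and the product distribution produced by $\Decr$ shows that the extra statistical distance introduced by forcing $Y'$ to be independent of $X'$ given $(T,R)$ is controlled by $\Ex[\Delta_\kappa]$, establishing that $\pi$ is $\rho$-uncorrelated for infinitely many $\kappa$ (the second alternative). Otherwise, there is a constant $\gamma>0$ and an infinite $\JJ\subseteq\II$ on which $\Pr[\Delta_\kappa\ge\gamma]$ is bounded away from zero, and I would build a key-agreement protocol.

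For the correlated case, both parties execute $\pi$ to obtain $(X,Y,T)$, then jointly evaluate $\Fc(1^\kappa,T,R)$ on a public random string $R$ (sampled and broadcast by one party) to obtain the same forecast. They retain the execution when $|\pbz-\pbo|\ge\gamma$ and restart otherwise, so that after a constant expected number of repetitions they are conditioned on the event $\Delta_\kappa\ge\gamma$. In the simulated distribution, conditioned on this event, $(X',Y')$ given $(T,R)$ has information-theoretic mutual information bounded below by $\Omega(\gamma^2)$, whereas any function of $(T,R)$ has predictive advantage on $X'$ or $Y'$ that is controlled by the forecasted marginals alone. This yields a constant information-theoretic gap between the agreement probability and the eavesdropper's best prediction in the simulated world; by $\rho'$-indistinguishability, the same gap holds up to an additive $O(\rho')$ in the real world, giving a weak computational key-agreement with constants $a>s$, which the Holenstein amplification cited in the introduction promotes to a full-fledged key-agreement protocol on the infinite set $\JJ$.

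The main obstacle is the correlated case, and specifically the translation from information-theoretic correlation in $\SIML$ to computational key-agreement in $\REAL$. Two coupled issues arise: first, the $\rho'$ loss from indistinguishability must be made strictly smaller than the information-theoretic gap $\Omega(\gamma^2)$, which forces $\rho'$ to be chosen carefully as a function of $\gamma$ (and hence of $\rho$); second, one must show that no efficient eavesdropper (who is herself free to run $\Fc$) can beat the information-theoretic secrecy bound, since any such attack would distinguish $\REAL_\kappa$ from $\SIML_\kappa$ with advantage exceeding $\rho'$ and contradict the forecaster guarantee. This step critically relies on the forecaster belonging to the same complexity class as the eavesdropper, as highlighted in Remark~\ref{rem:leakage}, and it is the only place where the non-negligibility of $\rho'$ (inherent to Theorem~\ref{thm:intro:forecaster}) could potentially cause trouble; the dichotomy is stated with constant $\rho$ precisely to accommodate this loss.
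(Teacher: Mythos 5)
Your skeleton is the paper's: obtain a forecaster via \cref{thm:intro:forecaster}, split according to how correlated its forecast is, define the decorrelator as the product of the forecasted marginals in one branch, build an information-theoretic key agreement in the simulated world in the other branch, pull it back at an additive cost of $\rho'$, and amplify with Holenstein. But the correlated branch has two concrete gaps. First, the case split is drawn on the wrong quantity. You declare the protocol correlated when $\Delta_\kappa=|\pbz-\pbo|\ge\gamma$ with noticeable probability, and assert that conditioned on this event $(X',Y')$ has mutual information $\Omega(\gamma^2)$ given $(T,R)$. This fails when $\pa\in\{0,1\}$: then $X'$ is determined by $(T,R)$, the forecasted joint distribution coincides with its product of marginals no matter how large $|\pbz-\pbo|$ is, and no key agreement exists --- yet such a protocol lands in your ``correlated'' case. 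The right measure is the statistical distance between the forecasted distribution and its product, which for fixed $(t,r)$ equals $2\pa(1-\pa)\cdot|\pbz-\pbo|$ (this is exactly \cref{claim:info-ka:3}); the weight $\pa(1-\pa)$ cannot be dropped, and the paper's dichotomy is on the expectation of this weighted quantity (the ``$(\eta,\I)$-correlated'' condition of \cref{def:corelation}).

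Second, and more seriously, you never construct the key-agreement protocol in the correlated case. A lower bound on the conditional mutual information of $(X',Y')$ does not by itself give agreement exceeding secrecy: if $\pa=0.9$, an eavesdropper predicts $X'$ from the transcript with probability $0.9$, which can exceed $\Pr[X'=Y']$ even when the two bits are strongly correlated. The missing ingredient is the one-sided von Neumann mechanism of \cref{proto:info-WKA}: $\Ac$ resamples an independent $X''\from U_{\pa}$ and outputs $X$ only on the event $X\ne X''$ (a fresh uniform bit otherwise), which makes her output \emph{exactly} uniform given $(T,R)$ --- perfect secrecy in the simulated world --- while $\Bc$ outputs $Y$ or $1-Y$ according to the sign of $\pbo-\pbz$, so that the residual correlation appears as agreement $\frac12+\eta/2$. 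Without this (or an equivalent explicit protocol, including the sign correction on $\Bc$'s side, which your sketch omits), the step from ``correlated in the simulated world'' to ``$a>s$'' is unproved. Your remaining steps --- the hybrid argument for the decorrelator, the $O(\rho')$ loss when returning to the real distribution, choosing $\rho'$ polynomially in $\rho$, and invoking \cref{thm:KeyAggAmp} --- do match the paper; your public restart step is an extra design choice the paper avoids by analyzing agreement in expectation over $T$.
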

\noindent
(\cref{thm:mainInf} is formally stated in \cref{sec:Classification}.)

The fact that we have statements on ``infinitely many $\kappa$'s'' seems to be unavoidable: it could be the case that on even $\kappa$, the protocol is a key agreement, and on odd $\kappa$, the protocol is trivial and performs no interaction.\footnote{However, the fact that we have ``for infinitely many $\kappa$'' in the two items, and not just in one, is an artifact of our proof technique, and it is natural to ask whether the result can be improved to have such a statement in only one of the items (as in the case of the Theorem of  \citet{ImpagliazzoLu89} that we mention in the next section).}

Once again, a caveat is the fact that we only get the result for $\rho=o(1)$ and not for negligible $\rho$ (as is the standard in computational indistinguishability). It is an interesting open problem to extend our results to small $\rho$.

We demonstrate the usefulness of  \cref{thm:mainInf}  below. It is important to emphasize that the caveats in \cref{thm:mainInf} (and specifically, the limitation on $\rho$) do not matter for some of our suggested applications.

\subsubsection{Perspective: Comparison to  \citeauthor{ImpagliazzoLu89} Dichotomy Theorem}
\label{sec:perspective IL}
A celebrated result of \citet{ImpagliazzoLu89} is that distributional one-way functions imply one-way functions. This can be loosely stated this way:

\begin{theorem}[\citet{ImpagliazzoLu89},  informal] Let $f$ be a poly-time computable function, then at least one of the following holds:
\begin{itemize}
\item $f$  can be transformed into a \emph{one-way function}.
\item $f$  has a \ppt \emph{inverter} (for infinitely many $\kappa \in \N$).

Namely, for every constant $c$, there exists a \ppt  $\Inv$ such that for infinitely many $\kappa \in \N$ the following holds: let $X_\kappa \from U_\kappa$ and $T_\kappa=f(X_\kappa)$. It holds that $(X_\kappa,T_\kappa)$  is $(\rho=\kappa^{-c})$-close to $(X'_\kappa,T_\kappa)$, for  $X'_\kappa=\Inv(T_\kappa)$.
\end{itemize}
\end{theorem}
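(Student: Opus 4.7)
The plan is to prove the contrapositive via the classical Impagliazzo--Luby reduction from distributional one-wayness to one-wayness. Assume item~2 fails: there is a constant $c>0$ such that every \ppt $\Inv$ fails to sample from $(X_\kappa\mid T_\kappa)$ within statistical distance $\kappa^{-c}$ for all sufficiently large $\kappa$. This is precisely the statement that $f$ is a \emph{distributional} one-way function, and the goal is to build from it a standard one-way function $g$, establishing item~1.

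First, I would set up the candidate. Let $\HFam_k$ be a pairwise-independent hash family from $\zo^\kappa$ to $\zo^k$, and for each $k\in\{0,1,\dots,\kappa\}$ define $g_k(x,h)=(f(x),h,h(x))$ with $h\in\HFam_k$. The candidate $g$ is the ``parallel'' function running $g_k$ for every $k$ on independent inputs and concatenating the outputs, so inverting $g$ entails inverting every $g_k$ simultaneously.

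Second, I would prove the key lemma: a \ppt inverter for $g$ yields a \ppt sampler $\Sim$ such that $(\Sim(T_\kappa),T_\kappa)$ is statistically close to $(X_\kappa,T_\kappa)$. On input $t$, $\Sim$ picks, for every $k$, a random $h_k\from\HFam_k$ and a random $z_k\in\zo^k$, runs the inverter on the tuple having $(t,h_k,z_k)$ in coordinate $k$, and outputs the preimage produced in one selected coordinate. Two facts drive the analysis: (a)~for the ``correct'' $k^\star$ satisfying $2^{k^\star}\approx |f^{-1}(t)|$, pairwise independence plus the leftover hash lemma imply that $(h_{k^\star},z_{k^\star})$ is close to a uniformly chosen element in the image of $g_{k^\star}$ restricted to the fiber over $t$; and (b)~a successful inverter must then return a preimage that is near-uniform inside $f^{-1}(t)\cap h_{k^\star}^{-1}(z_{k^\star})$, which, averaged over $(h_{k^\star},z_{k^\star})$, yields a near-uniform preimage of $t$. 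The parallel composition ensures the right $k^\star$ is available for every $t$.

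Third, I close the loop: if $g$ were invertible with inverse-polynomial probability, the resulting sampler would achieve statistical distance smaller than $\kappa^{-c}$, contradicting the failure of item~2; hence $g$ must be one-way. The main obstacle is the calibration in the key lemma, since the inverter guarantees only that \emph{some} preimage is returned, not a uniform one: extracting near-uniform sampling requires $k^\star$ to be threaded between being large enough for fibers $f^{-1}(t)\cap h^{-1}(z)$ to have constant typical size and small enough for them to be nonempty with good probability. The dependence of $k^\star$ on $t$ forces the parallel composition trick, and a hybrid argument is needed to convert a noticeable success probability on all of $g$ into noticeable success on the single critical coordinate $g_{k^\star}$.
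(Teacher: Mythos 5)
The paper does not prove this statement at all: it is quoted verbatim as a known, celebrated result of \citet{ImpagliazzoLu89}, stated only to set up an analogy with the paper's own dichotomy theorem (\cref{thm:mainInf}). So there is no ``paper proof'' to compare against; what you have written is a blind reconstruction of the original Impagliazzo--Luby argument, and as such it is the right approach and a serviceable sketch: hash the preimage with a pairwise-independent family at every output length $k$, compose in parallel so the critical length $k^\star\approx\log|f^{-1}(t)|$ is always present, and argue that an inverter for the composed function yields a near-uniform sampler of $f^{-1}(t)$, contradicting distributional one-wayness.

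One step of your second paragraph is stated in a way that would not survive a careful write-up. You assert that ``a successful inverter must then return a preimage that is near-uniform inside $f^{-1}(t)\cap h_{k^\star}^{-1}(z_{k^\star})$.'' The inverter is adversarial and may return \emph{any} element of that fiber; near-uniformity of the final sample cannot come from the inverter's behavior, only from the fiber being a singleton with high probability, which forces $k^\star$ to sit \emph{above} $\log|f^{-1}(t)|$ by an additive $\Theta(\log\kappa)$ --- whereas fact~(a), the leftover-hash-lemma step, wants $k^\star$ \emph{below} that threshold. You do flag this tension in your closing paragraph, but the standard resolution (take $k^\star=\lceil\log|f^{-1}(t)|\rceil+\Delta$ with $\Delta=O(\log\kappa)$, accept that a uniform $z$ has a preimage only with probability $\approx 2^{-\Delta}$, condition on success, and repeat polynomially many times) should replace claim~(b) rather than coexist with it. The other genuinely delicate point, which you name but do not carry out, is converting a $1/\poly$ inversion probability for $g$ into statistical distance $\kappa^{-c}$ for the sampler; this is where most of the work in \cite{ImpagliazzoLu89} lives. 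As an informal sketch of an informally stated theorem, though, your outline is faithful to the known proof.
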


This theorem is celebrated for (at least) two reasons: first, it gives a dichotomy of poly-time functions (ruling out intermediate cases). Second, it gives a methodology to show that cryptographic primitives imply one-way functions: it is sufficient to show that the primitive has a component that cannot be inverted.

 Our \cref{thm:mainInf} can be viewed as an analogous theorem for \emph{two-party protocols}: either a protocol $\pi$ implies \emph{key-agreement} or it has a \ppt \emph{decorrelator}. Indeed, \cref{thm:mainInf} gives a dichotomy of two-party protocols, and in order to show that a protocol implies key-agreement, it is now sufficient to show that it is not uncorrelated. We will present applications of this methodology in  \cref{sec:consequences}.

We remark that many of the applications of the \citet{ImpagliazzoLu89}  classification  do not require that $\rho$ is small, and would have worked just the same for constant $\rho$.\footnote{Loosely speaking, this happens whenever we have a cryptographic primitive where security can be amplified. For such protocols, a weaker version of \cite{ImpagliazzoLu89} yields that either the primitive implies one-way functions or it has a \ppt $\rho$-inverter for some constant $\rho>0$. Then, using security amplification we obtain a more secure target primitive, such that an adversary that breaks the target primitive with small success $\rho'=\kappa^{-c}$ can be transformed into one that breaks the original protocol with large success $\rho>0$.} Analogously, the fact that $\rho$ is not very small in our theorem is sometimes unimportant in applications.

\subsection{Consequences of our Dichotomy Theorem}
\label{sec:consequences}

We demonstrate the usefulness of our result by showing that it can be used to answer some open problems regarding differentially private protocols and coin flipping protocols. We now elaborate on these results.

\subsubsection{Application to Differentially Private XOR}

In  a symmetric  differentially  private  computation,  the parties wish to compute a joint function of their inputs while keeping their inputs somewhat private. This is somewhat different from the classical client-server setting that is commonly addressed in the  differentially  privacy literature, where the server, holding the data, answers the client's question while keeping the data somewhat private.

This setting is closely related to the setting of secure function evaluation: the parties $\Ac$ and $\Bc$ have private inputs $x$ and $y$, and wish to compute some functionality $f(x,y)$ without compromising the privacy of their inputs. In secure function evaluation, this intuitively means that parties do not learn any information about the other party's input, that cannot be inferred from their own inputs and outputs. This guarantee is sometimes very weak: For example, for the XOR function $f(x,y)=x \xor y$, secure function evaluation completely reveals the inputs of the parties (as a party that knows $x$ and $f(x,y)$ can infer $y$). Differentially private two-party computation aims to give some nontrivial security even in such cases (at the cost of compromising the \emph{accuracy} of the outputs).

A natural question is what assumptions are needed for such (symmetric) differentially private computation achieving certain level of  accuracy. A sequence of work showed that for certain tasks, achieving high  accuracy requires one-way functions \cite{BeimelNO08, ChanSS12,MMPRTV11,GoyalMPS2013}; some cannot even be instantiated in the random oracle model \cite{HaitnerOZ2016}; and some cannot be black-box reduced to key agreement  \cite{KhuranaMS2014}. See  \cref{sec:intro:relatedWork} for more details on these results. Recently, see more details below, \cite{Goyal2016KMPS}  have shown that a protocol for computing the XOR of \emph{optimal} accuracy (\ie that matches the client server accuracy for XOR) implies the existence of  oblivious transfer protocols (that are also sufficient for this task).

We show that  the existence of a  symmetric differential private protocol  for computing  Boolean XOR that achieves   \emph{non-trivial accuracy} (\ie better that what can be achieved when the eavesdropper is unbounded), implies the existence of a key-agreement protocol.

To prove the above result we  consider protocols in which the two parties receive inputs $x,y \in \zo$ and each outputs a bit.
A two-party protocol $\pi = (\Ac,\Bc)$ for computing the XOR functionality is \emph{$\alpha$-correct}, if

$$\pr{\pi(\Ac(x),\Bc(y)) = (x\xor y,x\xor y)} \ge \frac12 + \alpha$$

 Such a  protocol is (computationally) \emph{$\eps$-differentially private}, if for every  $x$ and efficient distinguisher $\Dc$
 $$\frac{\pr{\Dc(\view^\Ac_\pi(x,0)) = 1}}{\pr{\Dc(\view^\Ac_\pi(x,1)) = 1}} \in e^{ \pm \eps}$$
    letting $\view^\Ac_\pi(x,y)$ being  $\Ac$'s view in a random execution of $\pi(\Ac(x),\Bc(y))$;\footnote{A more general  definition allows  also an additive  error term. We address this definition in our formal theorem in  \cref{sec:DPXRtoKA}.} namely, the input of $\Bc$ remains somewhat private from the point of view of $\Ac$. And the same should hold for the privacy of $\Ac$.

The  protocol has   perfect \emph{agreement}, if the parties' output is  always the same (though might be different from the XOR). The results below are all stated \wrt such perfect agreement protocols,  though the lower bound (including ours) allows disagreement  in the magnitude of the differential privacy  parameter $\eps$.


\begin{theorem}[Differentially private XOR to key agreement, informal]\label{thm:DPXORInf} For every $\eps>0$, the existence of  $21\eps^2$-correct  $\eps$-differentially private protocol for computing  XOR, implies the existence of  an infinitely often secure key-agreement protocol.
 \end{theorem}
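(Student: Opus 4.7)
The plan is to reduce to the dichotomy theorem (Theorem~\ref{thm:mainInf}). First, I convert $\pi$ --- which takes inputs --- into a no-input, single-bit output two-party protocol $\pi'$ by having each party sample its input bit uniformly using private coins and then run $\pi$, outputting in $\pi'$ whatever it outputs in $\pi$. The protocol $\pi'$ meets the hypotheses of Theorem~\ref{thm:mainInf}, which yields a dichotomy: either $\pi'$ implies an infinitely often secure key-agreement protocol --- exactly the conclusion we seek --- or $\pi'$ is $\rho$-uncorrelated (infinitely often) for every constant $\rho > 0$. The entire proof then reduces to ruling out the second alternative for $\rho$ a sufficiently small constant depending on $\eps$.

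To rule out $\rho$-uncorrelation, I exploit the characterization of decorrelators via conditional covariance. Because the decorrelator's randomness $R$ is part of the distinguisher's view (see \cref{dfn:intro:decr}), conditional on $(T, R)$ the simulated distribution is a product with marginals $\tilde{p}_A(T, R)$ and $\tilde{p}_B(T, R)$. The real conditional $\Pr[X_A, X_B \mid T]$ need not be a product, and for binary outputs its total variation distance from the product of its own marginals $q_A(T) \cdot q_B(T)$ (with $q_A(T) := \Pr[X_A = 1 \mid T]$, and analogously $q_B$) is exactly $2\,|\mathrm{Cov}(X_A, X_B \mid T)|$. Hence $\rho$-uncorrelation of $\pi'$ would force $\E_T\!\left[|\mathrm{Cov}(X_A, X_B \mid T)|\right] \le O(\rho)$, up to the small poly-time approximation loss inherent to working with the poly-time outputs $\tilde{p}_A, \tilde{p}_B$ of the decorrelator in place of the (not necessarily efficient) true marginals $q_A, q_B$.

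Next I would lower-bound the conditional covariance by $\Omega(\eps^2)$. The correctness guarantee $\Pr[X_A = X_B] \ge 1/2 + 21\eps^2$, combined with the decomposition
\[
\Pr[X_A = X_B] \;=\; \E_T\!\left[q_A q_B + (1-q_A)(1-q_B)\right] \;+\; 2\,\E_T\!\left[\mathrm{Cov}(X_A, X_B \mid T)\right],
\]
reduces the task to upper-bounding the ``marginal-agreement'' term $\E_T[q_A q_B + (1-q_A)(1-q_B)] = 1/2 + 2\,\E_T[(q_A - 1/2)(q_B - 1/2)]$ by $1/2 + O(\eps^2)$. For pure $\eps$-DP, two hops of group privacy give $\Pr[T \mid x \oplus y = 0]/\Pr[T \mid x \oplus y = 1] \in e^{\pm 2\eps}$, and a chi-square-type computation yields $\E_T[(p(T) - 1/2)^2] = O(\eps^2)$ for $p(T) := \Pr[x \oplus y = 1 \mid T]$; by correctness each of $q_A, q_B$ tracks $p$ up to lower-order error, which delivers the required bound on the marginal-agreement term. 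Therefore $\E_T[\mathrm{Cov}(X_A, X_B \mid T)] \ge 21\eps^2 - O(\eps^2) = \Omega(\eps^2)$, contradicting $\rho$-uncorrelation of $\pi'$ for $\rho$ a small enough constant multiple of $\eps^2$.

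The hard part is making the chi-square / tracking argument rigorous in the \emph{computational}-DP setting, since the ``true'' $p(T)$ and $q_A(T), q_B(T)$ need not be poly-time computable. The right move is to carry the argument at the level of the decorrelator's poly-time outputs $\tilde{p}_A, \tilde{p}_B$: indistinguishability of the real and simulated ensembles forces $\tilde{p}_A \approx q_A$ and $\tilde{p}_B \approx q_B$ against every poly-time test (in an inner-product sense), and this is exactly the quantity controlled by computational DP. Additional bookkeeping is needed to handle possibly non-symmetric output marginals (via an auxiliary DP bound on $|\E[X_A] - 1/2|$), the additive slack in the formal DP definition, and the alignment of the ``infinitely often'' quantifiers --- since both the DP-XOR hypothesis and Theorem~\ref{thm:mainInf} give only infinitely-often guarantees, which combine consistently to produce the desired infinitely often secure key-agreement protocol.
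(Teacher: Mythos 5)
Your high-level strategy (pass to a no-input protocol, invoke the dichotomy theorem, and rule out the ``uncorrelated'' branch) is the same as the paper's, but your construction of the no-input protocol $\pi'$ is fatally wrong, and the covariance lower bound you build on top of it is false. You have both parties output \emph{the computed XOR value}. For a symmetric protocol the two parties always output the same bit, and conditioned on the transcript $T$ the two parties' views are independent; two conditionally independent random variables that are almost surely equal must be almost surely constant given $T$. Hence in your $\pi'$ the outputs are (a.s.) a deterministic function of $T$, the conditional covariance $\mathrm{Cov}(X_A,X_B\mid T)$ is identically zero, and $\pi'$ is genuinely uncorrelated --- so no contradiction with the dichotomy is available. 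The deeper reason is that differential privacy protects the \emph{inputs}, not the output: a protocol may legitimately announce the (noisy) XOR in the clear in its transcript (e.g., an OT-based secure computation of $x\oplus y\oplus U_{1/2-\alpha}$ with explicit output), in which case a decorrelator simply reads the output off $T$ and forecasts it perfectly as a product distribution. Your claimed bound $\E_T[|\mathrm{Cov}(X_A,X_B\mid T)|]=\Omega(\eps^2)$ therefore fails, and the identity $\Pr[X_A=X_B]=1$ (not $\tfrac12+21\eps^2$) for symmetric protocols already breaks your decomposition.

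The paper's protocol $\hPi$ is built differently and this is the key idea you are missing: party $\hAc$ outputs its sampled \emph{input} $x$, and party $\hBc$ outputs $y\oplus \out$ (i.e., $y$ if the common output is $0$ and $1-y$ otherwise). Correctness of $\pi$ then translates into $\Pr[\out^{\hAc}=\out^{\hBc}]\ge\tfrac12+\alpha$, while differential privacy of the \emph{inputs} implies that any $(\rho,\I)$-decorrelator must, with probability $1-\eps^2$, forecast both marginals within $3\eps$ of $\tfrac12$ (otherwise thresholding the forecast yields a \ppt distinguisher violating $(\eps,\eps^3)$-DP). Two independent coins with biases in $\tfrac12\pm3\eps$ agree with probability at most $\tfrac12+18\eps^2$, which together with the $\eps^2$ failure probability and the $\rho=\eps^3$ indistinguishability slack contradicts $\alpha=21\eps^2$-correctness. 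Your chi-square/group-privacy computation on $\Pr[T\mid x\oplus y]$ is not needed in this argument and, as applied to your $\pi'$, does not bound the right quantity.
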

\noindent
(\cref{thm:DPXORInf} is formally stated in \cref{sec:DPXRtoKA}.)

The above dependency between $\eps$ and $\alpha$ is  tight since  a $\Theta(\eps^2)$-correct, $\eps$-differential private,  protocol for computing  XOR  can  be constructed (with information theoretic security) using the so-called \textit{randomized response} approach shown in \citet{W65a}. It    improves, in the $(\eps,\alpha)$ dependency aspect,  upon \citet{Goyal2016KMPS} who showed  that, for some constant $c>0$,   a  $c\eps$-correct $\eps$-differentially private XOR implies oblivious transfer, and  upon \citet{GoyalMPS2013}  who showed that  $c\eps^2$-correct $\eps$-differentially XOR implies one-way functions.

\cref{thm:DPXORInf} extends for a  weaker notion of privacy  in which differential privacy is only guaranteed  to hold against an \emph{external}  observer (assuming that the protocol's transcript explicitly states the parties common output). For such protocols, key agreement is  a sufficient assumption.\footnote{One party sends its \emph{encrypted}  input to the other party, who in turn  computes the  XOR of both inputs and publishes a noisy version (\eg flipped with probability $\frac12 - \eps$) of the outcome.} Finally, we mention that since we use \cref{thm:mainInf},  the reduction we use  to prove \cref{thm:DPXORInf}  is non black box in the adversary.


A recent subsequent work by Haitner, Mazor, Shaltiel and Silbak \cite{HMSS} improved on the above result. They showed that a non-trivial differentially private protocol for computing XOR can be used to construct a standard oblivious transfer protocol (without the infinitely often). Moreover, the dependency between $\eps$ and $\alpha$ is essentially optimal (similar to the result presented in this paper).

\subsubsection{Application to Fair Coin Flipping}

In a follow-up work,   \citet{HMO} used \cref{thm:mainInf}   to prove that key-agreement is a necessary assumption for \emph{two-party} $r$-round coin-flipping  protocol of bias smaller than $1/\sqrt{r}$ (as long as $r$ is independent of the security parameter). This partially answers a long-standing  open question  asking whether the existence of such two-party fair-coin flipping implies public-key cryptography. Previous to \citet{HMO} result, it was not even known that such protocols cannot be constructed in the random oracle model \cite{Dachman11,DachmanMM14}.

In a very high level,  \cite{HMO}  took the following approach. Assume key-agreement protocols do not exists, then the main result of this paper (\cref{thm:mainInf}) yields that any protocol, and in particular an  $r$-round coin-flipping  protocol, has a decorrelator.   \citet{HMO}  showed how to use this decorrelator to mount an efficient variant of the  \citet{CleveI93}  attack to bias the outcome of one of the parties by  $1/\sqrt{r}$. (The bound  of \cite{HMO} only holds for constant-round protocols, since for the attack to go through the  decorrelator's error  has to be smaller than  $1/\sqrt{r}$, which can only be achieved, at least using  \cref{thm:mainInf}, for constant $r$.)

\subsection{Our Technique}\label{sec:intro:Technique}

\subsubsection{A Competition of Forecasters}

In this section we explain the high level idea behind the proof of  \cref{thm:intro:forecaster}.
Our goal is to understand ``how $X_\kappa$ and $Y_\kappa$ are distributed from the point of view of a \ppt algorithm that receives $T_\kappa$ as input''. For this purpose, we set up a competition between all  \ppt forecasters.
We will use the winner in this competition as our forecaster.

Given a transcript $t$, a participant forecaster is required to output three numbers $\pa,\pbz,\pbo \in [0,1]$.
For every forecaster $\Fc$ and every $\kappa \in \N$, we associate a \emph{price} $\Price_\kappa(\Fc)$. The minimal price is obtained by a forecaster that outputs $\pa=\pr{X_\kappa=1 \mid T_\kappa=t}$ and $\pbb{b}=\pr{Y_\kappa=1 \mid T_\kappa=t,X_\kappa=b}$. Note however, that a \ppt forecaster might not be able to compute these quantities.

\paragraph{Existence of optimal forecasters.}
We will not give a precise definition of the price function in this overview. At this point, we observe that for every choice of price function where prices are in $[0,1]$, this competition has winners, in the following sense: we say that $\Fc$ is {\sf $\mu$-optimal}, if there exists an infinite subset $ \II \subseteq \N$ such that  $\Price_{\kappa}(\Fc) \le \Price_{\kappa}(\Fc') + \mu$ for every other \ppt $\Fc'$ and sufficiently large $\kappa\in \II$. This intuitively says that $\Fc$ cannot be significantly improved on the subset $\II$. We claim that for every constant $\mu>0$ there exists a $\mu$-optimal forecaster.

This follows as we can imagine the following iterative process: we start with some forecaster $\Fc$ and $\II=\N$. At each step, either $\Fc$ cannot be improved by $\mu$, on infinitely many $\kappa \in \II$ (which means that $\Fc$ is $\mu$-optimal), or else, there exists an infinite $\II' \subseteq \II$, and a forecaster $\Fc'$ that improves $\Fc$ by $\mu$ in $\II'$. In that case we set $\II=\II'$, $\Fc=\Fc'$ and continue. It is clear that at every iteration we improve the price by $\mu$, and  this can happen only $1/\mu$ times, this process shows the existence of a $\mu$-optimal forecaster.
\begin{remark}
A drawback of the argument above is that it only works for constant $\mu>0$. The distinguishing parameter $\rho$, will be selected to be say $\mu^{1/10}$, and this is why we only get the result in \cref{thm:intro:simulator}, \cref{thm:intro:forecaster} and \cref{thm:mainInf} for constant $\rho>0$. Consequently, if we could guarantee the existence of an optimal forecaster for smaller $\mu$, we will immediately improve our results. Another drawback is that this argument only works on some infinite subset $\II \subseteq \N$ and this is the reason we get ``for infinitely many $\kappa$" in our theorems. The remainder of our machinery does not require these caveats.
\end{remark}

\paragraph{Indistinguishability for optimal forecasters.}
Let $\Fc$ be a $\mu$-optimal forecaster, we can use $\Fc$ to produce a forecasted distribution (as in \cref{thm:intro:forecaster}). Namely, given $t \from T_\kappa$, we apply $\Fc(t)$ to compute $\pa(t),\pbz(t),\pbo(t)$, and use these forecasts to produce a distribution $(X'_\kappa,Y'_\kappa)$ by sampling $X'_\kappa \from U_{\pa(t)}$ and $Y'_\kappa \from U_{\pbb{X'_\kappa}(t)}$. This can indeed be done in poly-time (and in this informal discussion we omit the additional random input $r$).

We show that if a \ppt $\Dc$ distinguishes $(X_\kappa,Y_\kappa,T_\kappa)$ from $(X'_\kappa,Y'_\kappa,T_\kappa)$, then $\Dc$ can be used to construct an improved \ppt $\Fc'$ whose $\Price_\kappa(\Fc')$ is smaller than $\Price_\kappa(\Fc)$ by some function of the distinguishing advantage $\rho$.\footnote{This overall approach is also taken by some proofs of the ``leakage simulation lemma'' that was mentioned in remark \ref{rem:leakage}.} This is a contradiction to the $\mu$-optimality of $\Fc$ if $\rho$ is sufficiently large.

\smallskip
At the risk of getting too technical, let us try to explain how this argument works. The reader can skip to Section \ref{sec:technique:dichotomy} that does not depend on the next paragraph.

It is helpful to note that $(X'_\kappa,Y'_\kappa,T_\kappa)$ can be seen as $(X'_\kappa,g(X'_\kappa,T_\kappa),T_\kappa)$ where $g$ is a probabilistic function. It is helpful to consider the hybrid distribution $H=(X_\kappa,g(X_\kappa,T_\kappa),T_\kappa)$. Using a hybrid argument, we have that one of the following happens:
\begin{itemize}
\item $\Dc$ distinguishes $(X'_\kappa,g(X'_\kappa,T_\kappa),T_\kappa)$ from $H=(X_\kappa,g(X_\kappa,T_\kappa),T_\kappa)$. This induces a $\Dc'$ that distinguishes $(X'_\kappa,T_\kappa)=(U_{\pa(T_\kappa)},T_\kappa)$ from $(X_\kappa,T_\kappa)$
\item $\Dc$ distinguishes $(X_\kappa,Y_\kappa,T_\kappa)$ from $H=(X_\kappa,g(X_\kappa,T_\kappa),T_\kappa)$. This gives that there exists $b \in \zo$, and a $\Dc'$ such that $\Dc'$ distinguishes $(Y_\kappa,T_\kappa)|_{X_\kappa=b}$ from $(Y'_\kappa,T_\kappa)|_{X_\kappa=b}=(U_{\pbb{b}(T_\kappa)},T_\kappa)|_{X_\kappa=b}$.
\end{itemize}
We have made progress, in that in both cases we have reduced the number of variables from three to two, while obtaining a distinguisher $\Dc'$ that distinguishes between a ``real distribution'' and a ``forecasted distribution''. Let's assume \wlg that the first case happens. Note that $\Dc'$ obtains no distinguishing advantage on $t$ if $\Dc'(t,0)=\Dc'(t,1)$.

Assume \wlg  that $\Dc'$ is more likely to answer one on the real distribution than on the forecasted distribution. This intuitively means that on average, given a $t \from T_\kappa$, by trying out $\Dc'(t,0)$ and $\Dc'(t,1)$ we can figure out what ``$\Dc'$ thinks'' is more likely to be the bit of the forecasted distribution, and improve the forecast of \Fc. Specifically,
\begin{itemize}
\item If $\Dc'(t,0)=\Dc'(t,1)$ then $\Dc$ does not gain on $t$, and we won't modify the forecast of \Fc on $t$.
\item If $\Dc'(t,1)=1$ and $\Dc'(t,0)=0$ then ``$\Dc'$ thinks'' that \Fc's forecast for $\pr{X_\kappa=1 \mid T_\kappa=t}$ was too low, and it makes sense to increase it.
\item If $\Dc'(t,0)=1$ and $\Dc'(t,1)=0$ then ``$\Dc'$ thinks'' that \Fc's forecast for $\pr{X_\kappa=1 \mid T_\kappa=t}$ was too high, and it makes sense to decrease it.
\end{itemize}
By using this rationale, we can guarantee that the modified forecast (which can be computed in poly-time) improves upon \Fc's forecast (at least on average $t \from T_\kappa$). We choose the price function carefully, so that this translates to a significant reduction in price, contradicting \Fc's $\mu$-optimality.

\subsubsection{Using the Forecaster to Prove the Dichotomy}
\label{sec:technique:dichotomy}

In this section we explain how to prove  \cref{thm:mainInf} given  \cref{thm:intro:forecaster}. Given a protocol $\pi$, we consider the optimal forecaster \Fc from  \cref{thm:intro:forecaster} (which is \Fc from the previous section). We will once again oversimplify and ignore the random coin string $r$. Recall that on input $t \from T_\kappa$, \Fc computes three numbers $\pa,\pbz,\pbo$, and induces a forecasted distribution $(X'_\kappa,Y'_\kappa,T_\kappa)$ that is $\rho$-indistinguishable from $\pi(1^\kappa)=(X_\kappa,Y_\kappa,T_\kappa)$, and furthermore, that $\pr{X'_\kappa=1 \mid T_\kappa=t}=\pa$, and $\pr{Y'_\kappa=1 \mid T_\kappa=t,X'_\kappa=b}=\pbb{b}$.

Note that if for every possible transcript $T_\kappa$ it holds that  $\Fc(T_\kappa)$ produces $\pbz = \pbo$, then by setting $\Decr(T_\kappa)=(\pa,\pbz)$ we obtain a $\rho$-decorrelator. Increasing $\rho$ slightly, this also extends to the case where with high probability over $t \from T_\kappa$, $\pbz$ is ``not far'' from $\pbo$. If the condition above does not hold, we will want to use \Fc to convert $\pi$ into a key-agreement $\pi'$. We can use the forecaster as follows (and in fact this methodology seems quite general):
\begin{itemize}
\item When using $\pi$ as a component in $\pi'$, we can imagine that the output distribution of $\pi$ is the forecasted distribution. More precisely, we are allowed to work in the following ``information theoretic setting'': party $\Ac$ receives $X'_\kappa$, party $\Bc$ receives $Y'_\kappa$ and the adversary receives $T_\kappa$. Note that $X'_\kappa$ and $Y'_\kappa$ have \emph{information theoretic uncertainty} given $T_\kappa$, and so we can now apply techniques and protocols from the information theoretic world. Information theoretic security in the latter setup translates into computational security in the original setup (with an additive loss of $\rho$).

\item Consequently, we can use information theoretic methods to construct  key-agreement to construct $\pi'$ from the ``simulation of'' $\pi$. This then translates into computational security (with a constant loss $\rho$ in security). By using security amplification for key agreement \cite{Holenstein06b}, we can amplify this security to give key-agreement with standard choices of secrecy and agreement. (This demonstrates that the fact that $\rho$ cannot be made negligible, is not a problem, and we can get computational security with respect to negligible functions).\footnote{Continuing the analogy to computational entropy, this approach can be thought of as analogous to the constructions of  \citet{HastadImLeLu99} and following work \cite{HaitnerReVa13,VadhanZheng2012}  of pseudorandom generators from one-way functions. Indeed, a key idea in these works is that of ``computational entropy''  which given a distribution $X$ (with low real entropy) presents an indistinguishable distribution $X'$ (with a lot of entropy). This allows the construction to apply ``information theoretic tools'' (e.g., randomness extractors) on $X$ and argue that the result is pseudorandom, by imagining that the information theoretic tools are applied on $X'$. Continuing this analogy, it is often the case that ``pulling the result back'' to the computational realm, suffers a significant loss in security, and computational amplification of security is performed to obtain stronger final results.}
\item Moreover, when we work in the information theoretic setup, the honest parties are allowed to see $T$, and run the forecaster (that runs in polynomial time). This is in some sense ``non-black-box'' as the parties gain access to specific properties of the probability space $(X'_\kappa,Y'_\kappa,T_\kappa)$  by applying the forecaster on $T_\kappa$ and can use its outputs $\pa,\pbz,\pbo$ when constructing information theoretic key-agreement.
\end{itemize}

\paragraph{The one-sided von-Neumann protocol.}
The information theoretic setup described above can be thought of as follows: whenever the two parties invoke the protocol $\pi$, we can imagine that $\Ac$ receives variable $X'_\kappa$, $\Bc$ receives variable $Y'_\kappa$ and the eavesdropper receives $T_\kappa$. Moreover, $\Ac$ and $\Bc$ can use \Fc to compute all probabilities in the probability space $(X'_\kappa,Y'_\kappa)|_{T_\kappa=t}$.
We now explain how to construct a key-agreement protocol.
\begin{itemize}
\item The two parties receive $X'_\kappa$ and $Y'_\kappa$ by running $\pi$, they also receive the transcript $T_\kappa$.
\item The two parties use \Fc to compute $\Fc(T_\kappa)=(\pa,\pbz,\pbo)$. Party $\Ac$ samples an independent random variable $X''_\kappa \from U_{\pa}$ (that is, an independent variable that is distributed like $X'_\kappa$).
\item The two parties can use the von-Neumann trick \cite{vN} to obtain a shared random coin as follows: $\Ac$ informs $\Bc$ whether $X'_\kappa = X''_\kappa$.
\begin{itemize}
\item If $X'_\kappa = X''_\kappa$,  the  parties output independent uniform bits.

\item If $X'_\kappa \ne X''_\kappa$, party $\Ac$ outputs $X'_\kappa$ and party $\Bc$ outputs $Y'_\kappa$.
\end{itemize}
\end{itemize}

For every $t \in \Supp(T_\kappa)$, $\p{[X'_\kappa=1,X''_\kappa=0 \mid T_\kappa=t} = \pr{X'_\kappa=0,X''_\kappa=1 \mid T_\kappa=t}$, and consequently:
\[ \pr{X'_\kappa=1 \mid T_\kappa=t,X'_\kappa \ne X''_\kappa} = \half. \]
This means that this information theoretic key-agreement protocol has perfect secrecy. We now consider the agreement property. Recall that we are assuming that $X'_\kappa$ and $Y'_\kappa$ are correlated conditioned on some fixings of $t \from T_\kappa$. This can be used to show that the output bits of our protocol are correlated. (In the actual proof, we need a slightly more complicated protocol which also relies on $\pbz,\pbo$ to guarantee agreement, rather than just correlation).

Thus, this protocol is an information theoretic key agreement with secrecy $s=0$ and agreement $a>0$. By controlling the parameters, the gap between agreement and secrecy can be made significantly larger than $\rho$ so that we can implement our overall plan.

\subsection{Related Work}\label{sec:intro:relatedWork}

We now discuss some related work that was not yet mentioned in the previous sections.

\paragraph{Key agreement form information theoretic correlated sources.}
The question of constructing information theoretic key agreement protocols from multiple (similarly distributed) correlated triplets $(X,Y,T)$ (where Alice and Bob get $X$ and $Y$ respectively, and Eve get $T$), was posed by Maure \cite{Mau}. In the same paper, Maure also defined the secret-key rate of such triplets (sources),\footnote{Loosely speaking, the secret-key rate is the maximum rate at which Alice and Bob can agree on a secret key $S$ while keeping the rate at which Eve obtains information arbitrarily small.} and gave an upper and lower bound depending on the distribution. A better upper bound on the secret-key rate was given by Ahlswede and Csisz{\'a}r \cite{AC}, and later by Maure and Wolf \cite{MW}, using the notion of intrinsic information.     

\paragraph{Computational Key agreement.}
Computational key-agreement protocols were first introduces by Diffie and Hellman \cite{DH} assuming computational (algebraic) hardness. Dwork, Naor and Reingold \cite{DNR}, showed how to improving an imperfect public key cryptosystem to a more secure system. Holenstein \cite{Holenstein06} considered the problem of strengthening computationally secure key agreement (key agreement amplification) using hard-core sets.

\paragraph{Characterization of two-party computations.}
The most relevant result is the classification of two-party protocols in the random oracle model (ROM) given in \citet*{HaitnerOZ2016}. In this model, the parties and the adversary  are given an oracle access to a common random function,  that they can query a limited number of times. The ROM is typically used to analyze the security of cryptographic protocols  in an idealistic model, and to prove impossibility results for such protocols. In particular, an impossibility result in the ROM yields that  the security  of  protocol in consideration cannot  be based in a black-box way on one-way functions or collision resistant hash functions.

In their seminal work \citet{ImpagliazzoRu89} proved that a key-agreement protocols cannot be constructed  in the ROM. That is, they show that for any query efficient protocol (\ie polynomial query complexity) in the ROM, there exists a query efficient eavesdropper that finds the common key. \citet{HaitnerOZ2016}, using techniques developed by \citet{BarakGhidary09}, showed that for any  no-input two-party random oracle protocol there exists a query efficient mapping into a \emph{no oracle} protocol such that the distribution of the transcript and parties output are essentially the same.  Since in the non-input  setting the parties output are always uncorrelated (as far as no input protocol are concerned), the existence of such efficient mapping also tell us that interesting correlation cannot exits in the ROM. Our main result capturing the minimal assumption for (output) correlation in actual protocol (rather than the hypothetical random oracle, model)  is in a sense  the non black-box  version of the above characterization.

Other relevant results  are amplifications of weak primitives into a full-fledge ones,  and in particular that of key-agreement  \cite{Holenstein06b}  and obvious transfer  \cite{Haitner04,Wullschleger2007,CreKil88}. Such results  aims to classify the different functionalities into groups of equivalent expression power, and many of them are achieved via the study of information-theoretic two-party correlation (\aka channels):  each party, including the observer, is  given random variable from  a predetermined distribution, and their goal is to use them to achieve a cryptographic task (\ie key agreement). Our result demonstrates that going solely through the above information theoretic paradigm, is sometimes a too limited approach.

\paragraph{Minimal assumptions for differentially  private symmetric computation.}
An accuracy parameter $\alpha$ is \emph{trivial}  \wrt  a given functionality $f$ and differential privacy parameter $\eps$, if a protocol computing $f$ with such accuracy and privacy exists information theoretically (\ie with no  computational assumptions). The accuracy  parameter  is called \emph{optimal},  if it matches the bound achieved in  the client-server model.
Gaps between the trivial and optimal accuracy parameters have been shown in the multiparty case for count queries~\cite{BeimelNO08, ChanSS12}  and in the two-pary case for inner product and hamming distance functionalities~\cite{MMPRTV11}. 
\cite{HaitnerOZ2016}  showed that the same holds also when a random oracle is available to the parties,   implying that non-trivial  protocols (achieving non-trivial  accuracy)  for computing  these functionalities    cannot be black-box reduced to one-way functions.  \cite{GoyalMPS2013} initiated the study of Boolean  functions, showing  a gap  between the optimal  and trivial accuracy   for the XOR or the AND functionalities, and that non-trivial  protocols  imply  one-way functions.   \cite{KhuranaMS2014} have shown that optimal protocols for computing the XOR or AND, cannot be black-box reduced to key agreement.  Recently, \cite{Goyal2016KMPS}  have shown that optimal protocols for computing the XOR imply oblivious transfer.

 \subsection*{Paper Organization}
Standard notions and definitions are given in \cref{sec:Preliminaries}.  In \cref{sec:Classification} we formally  define simulators, forecasters,  decorrelators, and uncorrelated protocols, and state there our main results. The existence of  forecasters for every single-bit output  two-party protocol whose forecasted distribution is indistinguishable from the real one,  is proven in  \cref{sec:Forecasters}. The reduction from  correlated protocols to key agreement is proven in \cref{sec:ForecastingToKA}. Finally in \cref{sec:DPXRtoKA}, we  give the reduction from differentially private protocols  for computing  XOR  to  key-agreement protocols.

\subsection*{Acknowledgement}
We are very grateful to Omer Reingold and  Guy Rothblum for very useful discussions. We are also very grateful the anonymous referees for their detailed and helpful feedback.

\section{Preliminaries}\label{sec:Preliminaries}
\subsection{Notations}\label{sec:notations}
We use calligraphic letters to denote sets, uppercase for random variables, lowercase for values, boldface for vectors, and sans-serif (\eg \Ac) for algorithms (\ie Turing Machines). We let $\1_\cs$ denote the charectristic function of the set $\cs$.
For $n\in\N$, let $[n]=\set{1,\cdots,n}$. Let $\poly$ denote the set of all positive polynomials and let \ppt denote a probabilistic algorithm that runs in \emph{strictly} polynomial time. A function $\nu \colon \N \mapsto [0,1]$ is \textit{negligible}, denoted $\nu(\secParam) = \negl(\secParam)$, if $\nu(\secParam)<1/p(\secParam)$ for every $p\in\poly$ and large enough $\secParam$. Given an algorithm $\Dc$ getting input of the form $1^\N \times \zs$, we let $\Dc_\kappa(t) $ denote $\Dc(1^\kappa,t)$.

\paragraph{Distributions and random variables.}
For $0 \le p <1$, let $U_p$ denote the distribution of a biased coin which is one with probability $p$.  Given jointly distributed random variables $X,Y$ and $x\in \cal{X}$,  let $Y|_{X=x}$ denote the distribution of $Y$ induced by the conditioning $X=x$ (set arbitrarily if $\pr{X=x} =0$).  The \emph{statistical distance} between two random variables $X$ and $Y$ over a finite set $\Uni$, denoted $\SD(X,Y)$, is defined as $\frac12 \cdot \sum_{u\in \Uni}\size{\pr{X = u}- \pr{Y = u}}$. 

We will also use the following standard lemma to compute the statistical distance of jointly distributed random variables.
\begin{lemma}\label{lemma:SD for joint}
	Let $(X,Y)$ and $(X,Z)$ be two finite random variable with the same $X$, 
it follows that $\SD((X,Y),(X,Z))=\Exp_{x \gets X}[\SD(Y|_x, Z|_x)]$.
\end{lemma}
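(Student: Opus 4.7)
The plan is to derive this by a direct expansion of the definition of statistical distance and then factoring out the marginal probabilities of $X$. First I would start from
\[
\SD((X,Y),(X,Z)) = \tfrac12 \sum_{x,u} \bigl| \Pr[X=x, Y=u] - \Pr[X=x, Z=u] \bigr|.
\]
Then, using the chain rule $\Pr[X=x, Y=u] = \Pr[X=x]\cdot \Pr[Y=u \mid X=x]$ (and analogously for $Z$), I would pull the common factor $\Pr[X=x]$ outside the absolute value, giving
\[
\SD((X,Y),(X,Z)) = \sum_{x} \Pr[X=x] \cdot \tfrac12 \sum_u \bigl| \Pr[Y|_x = u] - \Pr[Z|_x = u] \bigr|.
\]
Recognizing the inner sum as $\SD(Y|_x, Z|_x)$ and the outer sum as the expectation over $x \gets X$ gives exactly the claimed identity.

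The only delicate point is the treatment of values $x$ with $\Pr[X=x] = 0$, since the paper defines $Y|_x$ and $Z|_x$ arbitrarily in that case. However, those atoms contribute nothing on either side: on the left-hand side, both joint probabilities are zero so the summand vanishes, and on the right-hand side, such $x$ carry weight zero under the expectation. So the arbitrary choice of $Y|_x$, $Z|_x$ at these points is irrelevant. There is no serious obstacle here; this is a routine bookkeeping lemma whose entire content lies in applying the chain rule to the joint probabilities and grouping terms by $x$.
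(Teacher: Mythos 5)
Your proposal is correct and follows essentially the same route as the paper: expand the definition of statistical distance over pairs $(x,u)$, factor out $\Pr[X=x]$, and recognize the inner sum as $\SD(Y|_x,Z|_x)$. Your additional remark about atoms with $\Pr[X=x]=0$ is a valid (and slightly more careful) observation that the paper's proof omits.
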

\begin{proof}
Set $\Uni=\sup(Y)\cup\sup(Z)$. By the definition of statistical distance, 
\begin{align*}
\SD((X,Y),(X,Z))&=\frac12 \cdot \sum_{(x,u)\in \sup(X)\times\Uni}\size{\pr{(X,Y) =(x,u)}- \pr{(X,Z) = (x,u)}}\\
				&= \sum_{x\in\sup(X)}\pr{X=x}\big(\frac12\sum_{u\in \Uni}\size{\pr{Y =u|X=x}- \pr{Z = u|X=x}}\big)\\
				&= \sum_{x\in\sup(X)}\pr{X=x}\big(\SD(Y|_x, Z|_x)\big)\\
				&=\Exp_{x \gets X}[\SD(Y|_x, Z|_x)]
\end{align*}
\end{proof}

\paragraph{Computational indistinguishability (and infinitely often variants).}

We first need the following variance of computational indistinguishability where the distinguishing advantage $\rho$ is a parameter. We also discuss infinitely often indistinguishability.

\begin{definition}[Computational indistinguishability with a parameter $\rho$]\label{def:CompInd}
For a function $\rho:\N \to \mathbb{R}$,
two distribution ensembles $X=\set{X_\kappa}_{\kappa \in \N}$, $Y=\set{Y_\kappa}_{\kappa \in \N}$ are {\sf $\rho$-indistinguishable}, denoted $X \cindist_\rho Y$, if for every \pptm $\Dc$, for every sufficiently large $\kappa \in \N$,
\[ |\Pr[\Dc(1^\kappa,X_\kappa)=1]-\Pr[\Dc(1^\kappa,Y_\kappa)=1]| \le \rho(\kappa) \]
We omit $1^\kappa$ when the security parameter $\kappa$ is clear from the context.

For an infinite set $\II \subseteq \N$, the two ensembles $X$ and $Y$ are {\sf $\rho$-indistinguishable in $\II$}, denoted $X \cindist_{\rho,\I} Y$, if the condition above holds when replacing the condition ``for every sufficiently large $\kappa \in \N$'' with ``for every sufficiently large $\kappa \in \II$''.
We say that $X$ and $Y$ are {\sf io-$\rho$-indistinguishable}, if there exists an infinite set $\II \subseteq \N$ such that $X$ and $Y$ are $\rho$-indistinguishable in $\II$.
\end{definition}


\subsection{Protocols}
Let $\pi= (\Ac,\Bc)$  be a two-party protocol. Protocol  $\pi$ is \ppt if both $\Ac$ and $\Bc$ running time is polynomial in their input length. We denote by $(\Ac(x),\Bc(y))(z)$ a random execution of $\pi$ with private inputs $x$ and $y$, and common input $z$, and sometimes abuse notation and refer to $(\Ac(x),\Bc(y))(z)$ as the parties' output in this execution.

We will mainly focus on no-input two-party single-bit output \ppt protocol: the two \ppt parties' only input is the common security parameter, given in unary, and at the end of the protocol each party output a single bit.  Throughout, we assume \wlg that  the transcript contains $1^\kappa$ as the first message.

Let $\pi= (\Ac,\Bc)$  be such two-party single-bit protocol. For $\kappa\in\N$,  let $\pi_\kappa$ be protocol $\pi$ with the common security parameter fixed (\ie hardwired) to $1^\kappa$. Protocol  $\pi$ has transcript length $m(\cdot)$, if the transcript of $\pi_\kappa$ is of length at most $m(\kappa)$. We will assume \wlg that the protocol of consideration has fixed transcript length per security parameter. For $\kappa\in \N$, let $(X^\pi_\kappa,Y^\pi_\kappa,T^\pi_\kappa)$ denote the   $\Ac$ and $\Bc$ outputs respectively, and the execution transcript, in a random execution of $\pi_\kappa$.   We sometimes denote  this triplet of random variables by  $\pi(1^\kappa)$.

\subsubsection{Key-Agreement Protocols (and infinitely often variants)}
We focus on single bit key agreement protocols.

\begin{definition}[Key-agreement protocols]\label{def:KA}
	A  \ppt single-bit output  two-party protocol $\pi = (\Ac,\Bc)$  is a secure {\sf key-agreement} \wrt a set $\I\subseteq\N$, if the following hold for $\kappa$'s in $\I$.
	\begin{description}
		
		\item[Agreement.]  $\pr{X^\pi_\kappa =  Y^\pi_\kappa} \ge 1- \negl(\kappa)$.
		
		\item[Secrecy.] For every \ppt \Ec it holds that $\pr{\Ec(T^\pi_\kappa)=X^\pi_\kappa} \le 1/2 + \negl(\kappa)$.
		
	\end{description}
\end{definition}

\begin{definition}[Key-agreement protocols]\label{def:KA}
	Let $s,a:\N \mapsto \R$ be functions. A \ppt single-bit output two-party protocol $\pi=(\Ac,\Bc)$  is an {\sf $(s,a)$-key agreement} if the following two conditions hold.
	\begin{description}
		\item[Agreement.]  $\pr{X^\pi_\kappa=Y^\pi_\kappa} \ge 1/2 + a(\kappa)$ for  sufficiently large $\kappa \in \N$.
		
		\item[Secrecy.] For every \pptm \Ec: $\pr{\Ec(T^\pi_\kappa)=X^\pi_\kappa} \le 1/2 + s(\kappa)$  for  sufficiently large $\kappa \in \N$.
		
	\end{description}
	If we omit $(s,a)$ then we mean that the key-agreement has standard choices for secrecy and agreement, namely it is a  $(\negl(\kappa),1/2-\negl(\pk))$-key agreement.
	
	Protocol $\pi$ is an {\sf $(s,a)$-key agreement in an infinite set $\II \subseteq \N$}, if the security and agreement conditions hold when replacing $\N$ above with $\II$. The protocol is an {\sf io-$(s,a)$-key agreement} if there exits an infinite set $\II \subseteq \N$ for which the protocol is an $(s,a)$-key agreement in $\II$.
\end{definition}

We make use of the following amplification result that  readily follow from \citet[Corollary 7.8.]{Holenstein06b}.
\begin{theorem}[Key-agreement amplification, \cite{Holenstein06b}]\label{thm:KeyAggAmp}
	Let $s,a\colon \N \mapsto \R$ be poly-time computable functions such  that $s(\pk)<a(\pk)^2/10$ for sufficiently large $\kappa \in \N$. Then there is a reduction converting an $(s,a)$-key agreement protocol in an infinite set $\II$ into a  (fully fledged)  key-agreement in $\II$. The reduction is fully black-box and oblivious to   $\II$.
\end{theorem}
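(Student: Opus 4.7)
The plan is to apply Corollary~7.8 of~\cite{Holenstein06b} as a black box to $\pi$, and then verify that its conclusions transfer to the infinitely-often setting without modification. Holenstein's result gives, under exactly the hypothesis $s(\kappa) < a(\kappa)^2/10$, a fully black-box construction $C$ that converts an $(s,a)$-key agreement into a fully-fledged key agreement: on common input $1^\kappa$, the amplified parties invoke the underlying protocol polynomially many times at parameter $1^\kappa$ and then perform parallel repetition, information reconciliation, and privacy amplification on the resulting outputs and transcripts.

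First I would let $\pi' = C(\pi)$, where $C$ is Holenstein's construction instantiated with the poly-time computable parameters $s$ and $a$. Because $C$ is specified uniformly in $\kappa$ and does not reference any index set, it is automatically oblivious to $\II$. Next I would verify the per-$\kappa$ guarantee: Holenstein's analysis is local in the security parameter, showing that if the agreement and secrecy conditions hold for $\pi$ at parameter $\kappa$, then at the same $\kappa$ the amplified protocol $\pi'$ achieves agreement $1-\negl(\kappa)$ and secrecy $1/2 + \negl(\kappa)$. Applying this pointwise for each $\kappa \in \II$ yields that $\pi'$ is a key-agreement in $\II$ in the sense of \cref{def:KA}.

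The only point that requires care is the black-box reduction in the secrecy analysis: a \ppt adversary breaking the secrecy of $\pi'$ at some $\kappa \in \II$ must be converted, in polynomial time and without knowledge of $\II$, into a \ppt adversary breaking the $(s,a)$-secrecy of $\pi$ at the same $\kappa$. Since Holenstein's reduction is uniform in $\kappa$ and treats $\pi$ as an oracle, the conversion is exactly the one spelled out in the proof of Corollary~7.8 and needs no adaptation for the i.o.\ setting; the theorem therefore follows as an essentially immediate consequence of Holenstein's statement read with the per-$\kappa$ interpretation.
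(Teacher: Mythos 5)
Your proposal is correct and matches the paper's treatment: the paper offers no independent proof of this theorem, stating only that it ``readily follows'' from Corollary~7.8 of Holenstein, exactly as you invoke it. Your additional observations --- that the construction is uniform in $\kappa$ and hence oblivious to $\II$, and that the security reduction is local in the security parameter so the i.o.\ guarantee transfers pointwise --- are precisely the (unstated) justifications the paper relies on.
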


\section{Classification of Boolean Two-Party Protocols}\label{sec:Classification}
In this section we formally define simulators, forecasters, decorelators and uncorrelated protocols discussed in \cref{sec:intro}, and formally state the main results of this paper.   Throughout this section we focus on no-input, single-bit output, two-party protocols.

\subsection{Simulators and Forecasters}\label{subsec:SimandFor}
The results of this section hold for \emph{any} no-input, single-bit output two-party protocols, even inefficient ones.

\subsubsection{Simulators}\label{subsec:Sim}
Recall that a simulator seeing the protocol transcript, outputs a pair of bits that look indistinguishable from the parties'  real outputs, from the point of view of an efficient distinguisher that sees only the protocol's transcript. We now define this concept precisely, and state our results.

\begin{definition}[Simulator]\label{def:Sim}
	A {\sf simulator } is a \ppt algorithm  that  on inputs   $(1^\kappa,t) \in 1^\ast\times \zs$  outputs two bits.
\end{definition}

We associate the following two distribution ensembles  with a  two-party protocol and a simulator.
\begin{definition}[Real and simulated distributions]\label{dfn:RealandSimDist}
	Let $\pi=  (\Ac,\Bc)$ be a  single-bit output two-party protocol, and  let  $\Sim$ be a simulator. We define the  \remph{real and simulated distribution ensembles} $\REAL^{\pi}=\set{\REAL^{\pi}_\kappa}_{\pk\in\N}$ and  $\SIML^{\pi,\Sim}=\set{\SIML^{\pi,\Sim}_\kappa}_{\pk\in\N}$ as follows. For $\kappa \in \N$, let $X_\kappa$, $Y_\kappa$ and $T_\kappa$ be the parties' outputs and protocol transcript in a random  execution of $\pi_\kappa$. Then
	\begin{description}
		\item[Real:] $\REAL^{\pi}_\kappa=(X_\kappa,Y_\kappa,T_\kappa)$.
		\item[Simulated:]  $\SIML^{\pi,\Sim}_\kappa=(\Sim_\kappa(T_\kappa), T_\kappa)$.
	\end{description}
\end{definition}
(Recall that  $\Sim_\kappa(t)$ denotes the output of $\Sim$ on input $(1^\kappa,t)$.)

The following theorem states that every single-bit output two-party protocol (even an inefficient one) has a simulator.

\begin{theorem}[Existence of simulators]\label{thm:Sim}
		For every single-bit output, two-party protocol  $\pi$, $\rho>0$ and infinite set $\I \subseteq \N$, there exist a simulator $\Sim$ and an infinite set $\I' \subseteq \I$ such that
	$$\REAL^\pi\cindist_{\rho,\I'} \SIML^{\pi,\Sim}.$$
\end{theorem}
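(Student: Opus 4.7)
The plan is to prove the theorem by first constructing a \emph{forecaster} and then deriving the simulator from it, following the competition-of-forecasters strategy outlined in the technical overview. A forecaster is a \pptm that on input $(1^\kappa,t)$ outputs a triple $(\pa,\pbz,\pbo) \in [0,1]^3$ intended to estimate $\pr{X_\kappa=1\mid T_\kappa=t}$ and $\pr{Y_\kappa=1\mid T_\kappa=t,X_\kappa=b}$ for $b\in\zo$. Any forecaster $\Fc$ canonically induces a simulator $\Sim$ that on input $(1^\kappa,t)$ computes $(\pa,\pbz,\pbo)=\Fc(1^\kappa,t)$, samples $X'\from U_{\pa}$, then $Y'\from U_{\pbb{X'}}$, and outputs $(X',Y')$. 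Reducing to forecasters is convenient because their quality admits a natural numerical score.

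Next I would introduce a bounded price (loss) function $\Price_\kappa(\Fc)$ that is strictly proper in the sense that the true conditional probabilities uniquely minimize it, and such that small deviations from truth translate into quantitatively small excesses in price. A convenient choice is the quadratic (Brier) score,
\[ \Price_\kappa(\Fc)=\Ex\bigl[(X_\kappa-\pa)^2+(Y_\kappa-\pbb{X_\kappa})^2\bigr]\!\in[0,2],\]
where $(\pa,\pbz,\pbo)=\Fc(1^\kappa,T_\kappa)$. I would then argue existence of a $\mu$-optimal forecaster for an appropriate constant $\mu=\mu(\rho)$: start with the trivial forecaster and $\I_0=\I$; as long as the current forecaster $\Fc_i$ is not $\mu$-optimal on $\I_i$, there is a \pptm forecaster $\Fc_{i+1}$ and an infinite $\I_{i+1}\subseteq\I_i$ with $\Price_\kappa(\Fc_{i+1})\le\Price_\kappa(\Fc_i)-\mu$ for all sufficiently large $\kappa\in\I_{i+1}$; since prices lie in $[0,2]$, the process must terminate within $\lceil 2/\mu\rceil$ steps at a $\mu$-optimal forecaster $\Fc^*$ on some infinite $\I^*\subseteq\I$.

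The heart of the argument, and the step I expect to be the main obstacle, is showing that the simulator induced by $\Fc^*$ satisfies $\REAL^\pi\cindist_{\rho,\I^*}\SIML^{\pi,\Sim}$. Suppose for contradiction some \pptm $\Dc$ has advantage $>\rho$ on infinitely many $\kappa\in\I^*$. Via a two-step hybrid through $H_\kappa=(X_\kappa,Y'_\kappa,T_\kappa)$ where $Y'_\kappa\from U_{\pbb{X_\kappa}(T_\kappa)}$, $\Dc$ yields either a distinguisher between $(X_\kappa,T_\kappa)$ and $(U_{\pa(T_\kappa)},T_\kappa)$, or, for some fixed $b\in\zo$, a distinguisher between $(Y_\kappa,T_\kappa)\mid X_\kappa{=}b$ and $(U_{\pbb{b}(T_\kappa)},T_\kappa)\mid X_\kappa{=}b$. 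Each such two-variable distinguishing gap of $\ga$ can be converted into a poly-time local correction of the forecast on $t$ by comparing $\Dc(t,0)$ and $\Dc(t,1)$ to decide whether to shift $\pa$ (or $\pbb{b}$) up or down by a small $\eta=\eta(\ga)$; by the strict propriety of the Brier score and a Pythagorean-type identity for quadratic loss, this shift strictly decreases the expected price by at least some $\delta(\ga)>0$. Choosing $\mu<\delta(\rho)$ up front then contradicts the $\mu$-optimality of $\Fc^*$ on $\I^*$.

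The delicate quantitative steps will be (i) pinning down how $\eta$ and $\delta$ depend on $\ga$ so that $\mu$ can be chosen as a constant depending only on $\rho$, and (ii) ensuring that the correction is a well-defined \pptm forecaster (\ie, handling clipping to $[0,1]$ and the extra random coins needed to randomize $\eta$ so the expected price decreases on average while remaining computable in strict polynomial time). Once this quantitative conversion is in place, combining the existence of $\Fc^*$ on $\I^*$ with the canonical simulator construction immediately yields the desired $\Sim$ and $\I'=\I^*$.
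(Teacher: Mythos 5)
Your proposal is correct and follows essentially the same route as the paper: the paper also derives the simulator immediately from a forecaster, defines the price as the quadratic (Brier) score (decomposed into the one-sided prices of $\pa$, $\pbz$, $\pbo$), obtains a $\mu$-optimal forecaster by the same bounded iterative-improvement argument over infinite subsets of $\I$, and converts a distinguisher into a price improvement via the same hybrid $(X_\kappa,U_{\pbb{X_\kappa}},T_\kappa)$ followed by the $\pm\gamma$ local correction based on comparing $\Dc(t,0)$ with $\Dc(t,1)$, including the clipping to $[0,1]$. The quantitative steps you flag are resolved in the paper by taking $\gamma=\rho$ (no randomization of the shift is needed), yielding a one-sided improvement of $\rho^2$, a factor-$3$ loss in the hybrid/case analysis, and hence $\mu=(\rho/3)^3$.
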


\noindent
\cref{thm:Sim} is an immediate corollary of  the existence of forecasters  theorem given below.

\subsubsection{Forecasters}\label{subsec:forecasters}
A forecaster seeing the protocol transcript, outputs a \emph{description} of a two-bit distribution, that   looks indistinguishable from the parties'  real outputs, from the  point of view of an efficient distinguisher that sees only the protocol's transcript.   Thus, a forecaster is a specific method for constructing simulators: the resulting  simulator  outputs the two bits according to the distribution described by the forecaster.

\begin{definition}[Forecasters]\label{dfn:forecaster}
	A \remph{forecaster} $\Fc$ is a \pptm that on input $(1^\kappa,t) \in 1^\ast \times \zs$,  outputs a triplet  in $[0,1]^3$. We use $\Fc(1^\kappa,t;r)$ to denote the instantiation of $\Fc(1^\kappa,t)$ when using the string $r$ as random coins.\footnote{\label{fn:RealandForcasted} Since we only care about \ppt algorithms,  we will implicitly assume that the number of coins used by them on a given security parameter is efficiently computable.}
\end{definition}

We associate the following two distribution ensembles  with a  two-party protocol and a forecasters.   To  define these distributions, we associate triplets in  $[0,1]^3$ with distribution over  $\zo^2$ in the following way.
\begin{notation}\label{nota:tValDis}
	For  $p= (\pa,\pbz,\pbo) \in [0,1]^3$, let $U_p$ denote the random  variable over $\zo^2$ defined by  $\pr{U_p = (x,y)} = \pr{U_{\pa} = x} \cdot \pr{U_{\pbb{x}} = y}$. For $p= (\pa,\pb) \in [0,1]^2$, let $U_p$ denote the random variable $U_{(\pa,\pb,\pb)}$.
\end{notation}

With this notation, the variable $U_p=(X',Y')$ is composed of two random variables such that $\Pr[X'=1] = \pa$ and for $b \in \zo$, $\Pr[Y'=1|X'=b]=p_{\Bc|b}$. In particular, if $\pbz=\pbo$ then $(X',Y')$ are independent.

\begin{definition}[Real and forecasted distributions]\label{dfn:RealandForcasted}
	Let $\pi=  (\Ac,\Bc)$ be a  single-bit output two-party protocol  and  let  $\Fc$ be a forecaster. We define the  \remph{real and forecasted distribution ensembles} $\REAL^{\pi,\Fc}=\set{\REAL^{\pi,\Fc}_\kappa}_{\pk\in\N}$ and  $\FSC^{\pi,\Fc}=\set{\FSC^{\pi,\Fc}_\kappa}_{\pk\in\N}$ as follows.
	For $\kappa \in \N$, let $X_\kappa$, $Y_\kappa$ and $T_\kappa$ be the parties' outputs and protocol transcript in a random  execution of $\pi_\kappa$, and let 	 $R_\kappa$ be a uniform and independent string  whose length is the (maximal) number of coins used by $\Fc_\kappa$. Then,
	
\begin{description}
	\item[Real:] $\REAL^{\pi,\Fc}_\kappa=(X_\kappa,Y_\kappa,T_\kappa,R_\kappa)$.
	\item[Forecasted:]  $\FSC^{\pi,\Fc}_\kappa=(U_{p},T_\kappa,R_\kappa)$ for  $p = \Fc_\kappa(T_\kappa;R_\kappa)= (p_\Ac,p_{\Bc|0},p_{\Bc|1})$.
\end{description}
\end{definition}
(Recall that  $\Fc_\kappa(t;r)$ denotes the output of $\Fc$ on input $(1^\kappa,t)$ when using  randomness $r$.)

The computational distance between the real and  forecasted distribution measures how well the forecaster realizes  the  real distribution, in the eyes of  a computationally bounded distinguisher.

\begin{definition}[Forecaster indistinguishability]\label{def:fore:Indist}
	A  forecaster $\Fc$ is {\sf $(\rho,\I)$-indistinguishable},  for $\rho>0$ and infinite subset $\I \subseteq \N$, \wrt protocol $\pi$, if
	$$\REAL^{\pi,\Fc} \cindist_{\rho,\I} \FSC^{\pi,\Fc}.$$   	
\end{definition}
That is, for sufficiently large $\kappa\in \II$, the forecasted  and real distributions are  $\rho$ indistinguishable for poly-time distinguishers.

The following theorem states that every single-bit output two-party protocol (even inefficient one) has a forecaster.

\begin{theorem}[Existence of forecasters]\label{thm:forecaster}
	For every single-bit output two-party protocol  $\pi$, $\rho>0$ and infinite set $\I \subseteq \N$, there exist a forecaster $\Fc$ and an infinite set $\I'\subseteq \I$, such that $\Fc$ is {\sf $(\rho,\I')$-indistinguishable} \wrt $\pi$.
\end{theorem}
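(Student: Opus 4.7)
The plan is to make the ``competition of forecasters'' from \cref{sec:intro:Technique} rigorous by equipping the space of \ppt forecasters with a proper scoring rule and showing that a near-optimal forecaster must be $\rho$-indistinguishable. Concretely, I would use the quadratic loss: for a forecast $p=(\pa,\pbz,\pbo)\in[0,1]^3$ and outcome $(x,y)\in\zo^2$, set $L(p,(x,y))=(\pa-x)^2+(\pbb{x}-y)^2$, and define the \emph{price} of a forecaster $\Fc$ at security parameter $\kappa$ to be $\Price_\kappa(\Fc)=\Ex[L(\Fc_\kappa(T_\kappa;R_\kappa),(X_\kappa,Y_\kappa))]$. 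Because $L$ is a strictly proper scoring rule, the Bayes-optimal (possibly inefficient) forecaster outputting $\pa=\Pr[X_\kappa=1\mid T_\kappa]$ and $\pbb{b}=\Pr[Y_\kappa=1\mid T_\kappa,X_\kappa=b]$ minimizes $\Price_\kappa$, and prices lie in the bounded interval $[0,2]$.

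Second, I would establish the existence of a $\mu$-optimal \ppt forecaster for a suitable constant $\mu>0$ to be chosen as a small polynomial in $\rho$. Say that $\Fc$ is \emph{$\mu$-optimal on $\I$} if for every \ppt forecaster $\Fc'$, the set $\{\kappa\in\I : \Price_\kappa(\Fc)\le\Price_\kappa(\Fc')+\mu\}$ is infinite. Starting from $\Fc_0\equiv(\tfrac12,\tfrac12,\tfrac12)$ and $\I_0=\I$, iterate: if $\Fc_i$ is $\mu$-optimal on $\I_i$ stop; otherwise there is a \ppt $\Fc_{i+1}$ and an infinite $\I_{i+1}\subseteq\I_i$ on which $\Price_\kappa(\Fc_{i+1})\le\Price_\kappa(\Fc_i)-\mu$. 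Since prices stay in $[0,2]$, this loop terminates after at most $\lceil 2/\mu\rceil$ steps, yielding a $\mu$-optimal \ppt forecaster $\Fc$ on some infinite $\I^{\ast}\subseteq\I$.

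Third, I would prove that $\mu$-optimality implies $(\rho,\I')$-indistinguishability for some infinite $\I'\subseteq\I^{\ast}$, taking $\mu=\Theta(\rho^{c})$ for a suitable constant $c$. Suppose a \ppt $\Dc$ distinguishes $\REAL^{\pi,\Fc}_\kappa=(X_\kappa,Y_\kappa,T_\kappa,R_\kappa)$ from $\FSC^{\pi,\Fc}_\kappa=(X'_\kappa,Y'_\kappa,T_\kappa,R_\kappa)$ with advantage $>\rho$ on infinitely many $\kappa\in\I^{\ast}$. Introducing the hybrid $(X_\kappa,Y'_\kappa,T_\kappa,R_\kappa)$ where $Y'_\kappa\from U_{\pbb{X_\kappa}}$ with $(\pa,\pbz,\pbo)=\Fc_\kappa(T_\kappa;R_\kappa)$, a standard hybrid argument yields a \ppt $\Dc'$ that either distinguishes $X_\kappa$ from $U_{\pa}$ given $(T_\kappa,R_\kappa)$, or distinguishes $Y_\kappa$ from $U_{\pbb{b}}$ given $(T_\kappa,R_\kappa,X_\kappa=b)$ for some $b\in\zo$, with advantage at least $\rho/2$. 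In either case I construct an improved forecaster $\Fc^{\prime}$ by nudging the relevant coordinate of $\Fc$ based on the sign of the difference $\Dc'(1,t,r)-\Dc'(0,t,r)\in\{-1,0,1\}$ by a small step $\eps=\Theta(\rho)$ (estimating this sign via a polynomial number of fresh samples to get $\Dc'$-free access, as is standard). A direct expansion of the quadratic loss gives, for the $\pa$-coordinate case,
\[
\Ex\big[(\pa+\eps d - X_\kappa)^2-(\pa-X_\kappa)^2\big]=2\eps\,\Ex[d(\pa-X_\kappa)]+\eps^2\Ex[d^2],
\]
where $d=d(T_\kappa,R_\kappa)$ captures $\Dc'$'s preference; the cross term is at most $-2\eps\cdot\Omega(\rho)$ by the distinguishing advantage, so for $\eps=\Theta(\rho)$ this is $-\Omega(\rho^2)$, contradicting $\mu$-optimality for any $\mu\ll\rho^2$.

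The main obstacle is the quantitative step in the last paragraph: carefully verifying that the expected cross term $\Ex[d(\pa-X_\kappa)]$ is bounded above by $-\Omega(\rho)$ (and the analogous statement for the $\pbb{b}$-coordinate case with the correct conditioning on $X_\kappa=b$), and that the sign of $\Dc'$ can be extracted efficiently without asymptotic loss. This is also what forces $\rho$ to remain a fixed constant: the iterative existence argument in step two allows only $O(1/\mu)=O(\rho^{-2})$ rounds of improvement before stabilizing, and the final optimality holds only on an infinite subset $\I'\subseteq\I$, which accounts for both caveats in the theorem. \cref{thm:Sim} follows immediately by letting the simulator sample $U_p$ from the forecaster's output $p$.
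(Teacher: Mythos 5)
Your proposal is correct and follows essentially the same route as the paper: the same quadratic (Brier-score) price decomposed over the three coordinates, the same iterative argument for a $\mu$-optimal \ppt forecaster on an infinite subset, the same hybrid reduction to a one-sided distinguisher, and the same ``nudge by $\pm\gamma$ according to $\Dc'(1,t,r)$ vs.\ $\Dc'(0,t,r)$'' improvement whose cross term is exactly the distinguishing advantage. The only (inessential) deviations are that the paper avoids estimating the sign by sampling --- it simply folds $\Dc'$'s coins into the forecaster's randomness and analyzes the price in expectation --- and it tracks the $\Pr[X_\kappa=b]$ weighting and the clipping of forecasts to $[0,1]$ explicitly, ending with an improvement of $(\rho/3)^3$ rather than $\Omega(\rho^2)$, which changes nothing for constant $\rho$.
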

\cref{thm:forecaster}  is  proven in \cref{sec:Forecasters}  (appears there as \cref{thm:two:Indist}). The existence of simulators immediately follows by the above theorem.

\begin{proof}[Proof of \cref{thm:Sim}]
Let $\Fc$ be the forecaster for $\pi$ guaranteed by \cref{thm:forecaster}. Given a transcript of the protocol, the simulator runs $\Fc$ on this transcript, and outputs two bits according to the distribution described by its output.
\end{proof}

\paragraph{Correlated protocols and key agreement.}
We measure the  {correlation   of a forecaster \wrt a given distribution ensemble, as the ``conditional correlation distance'' of $\FSC^{\pi,\Fc}$. That is, the expectation over $T$, of the  statistical  distance of $\FSC^{\pi,\Fc}$ from a distribution in which the two outputs are a product.
	
We use the following notation to define the product distribution naturally induced by an arbitrary distribution over $\zo^2$.
	\begin{notation}\label{nota:ProdOfDis}
		For triplet $p= (\pa,\pbz,\pbo) \in [0,1]^3$, let $\Prod(p) = (\pa,(1-\pa) \cdot \pbz+ \pa \cdot \pbo)$.
	\end{notation}

	That is, $U_{\Prod(p)}$ is the product of  marginals distribution  of $U_p$.  We now define the  product of a forecasted distribution in the natural way.
	\begin{definition}[The product of a forecasted distribution]\label{def:two:productOfForecaster}
		For a   single-bit output two-party protocol $\pi$ and  forecaster $\Fc$, we defined the {\sf product forecasted  distribution $\PFSC^{\pi,\Fc}$ of \Fc \wrt $\pi$} by  $\PFSC^{\pi,\Fc}_\kappa=(U_{\Prod(\Fc(T_\kappa;R_\kappa))},T_\kappa,R_\kappa)$, where $T_\pk$ and $R_\pk$ are as in \cref{dfn:RealandForcasted}.
	\end{definition}
	The correlation of a forecaster \wrt a given distribution ensemble, is just the expected statistical distance between the forecasted distribution and its product.
	
	\begin{definition}[Correlated forecasters]\label{def:corelation}
		A  forecaster $\Fc$ is {\sf $(\eta,\I)$-correlated} \wrt two-party protocol $\pi$, for $\eta>0$ and $\I \subseteq \N$, if for every $\kappa \in \I$,
		$$\SD(\FSC^{\pi,\Fc}_\kappa,\PFSC^{\pi,\Fc}_\kappa)\ge \eta$$ 	
	\end{definition}

	The following  fact is immediate.
	\begin{proposition}[Indistinguishability plus low correlation implies closeness to product]\label{prop:UncorrelationPlusIndistToProd}
		Let $\pi$ be a single bit output two-party protocol and $\Fc$ be a forecaster. Assume $\Fc$ is  $(\rho,\I)$-indistinguishable \wrt $\pi$ for some $\rho>0$ and infinite set $\I\subseteq \N$ and that for $\eta>0$  there exists no infinite subset $\I' \subseteq \I$ for which $\Fc$  is $(\eta,\I')$-correlated \wrt $\pi$. Then
		$$\REAL^{\pi,\Fc} \cindist_{\rho + \eta,\I} \PFSC^{\pi,\Fc}.$$   	
	\end{proposition}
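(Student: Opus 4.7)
The plan is to combine the two hypotheses via the triangle inequality for computational distance, using the (trivial) fact that statistical distance upper-bounds computational distance. First I would unpack the non-correlation hypothesis: if no infinite $\I' \subseteq \I$ satisfies $\SD(\FSC^{\pi,\Fc}_\kappa,\PFSC^{\pi,\Fc}_\kappa) \ge \eta$ for all $\kappa \in \I'$, then the set $\{\kappa \in \I : \SD(\FSC^{\pi,\Fc}_\kappa,\PFSC^{\pi,\Fc}_\kappa) \ge \eta\}$ must be finite (otherwise it would itself serve as such an $\I'$). Consequently, for every sufficiently large $\kappa \in \I$ we have
\[ \SD\big(\FSC^{\pi,\Fc}_\kappa,\,\PFSC^{\pi,\Fc}_\kappa\big) < \eta. \]

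Next, fix any \ppt distinguisher $\Dc$. By the $(\rho,\I)$-indistinguishability of $\Fc$ w.r.t.\ $\pi$, for all sufficiently large $\kappa \in \I$,
\[ \big|\Pr[\Dc(\REAL^{\pi,\Fc}_\kappa)=1] - \Pr[\Dc(\FSC^{\pi,\Fc}_\kappa)=1]\big| \le \rho. \]
On the other hand, statistical distance bounds the distinguishing advantage of \emph{any} (even unbounded) predicate, so for all sufficiently large $\kappa \in \I$,
\[ \big|\Pr[\Dc(\FSC^{\pi,\Fc}_\kappa)=1] - \Pr[\Dc(\PFSC^{\pi,\Fc}_\kappa)=1]\big| \le \SD\big(\FSC^{\pi,\Fc}_\kappa,\PFSC^{\pi,\Fc}_\kappa\big) < \eta. \]

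Applying the triangle inequality to these two bounds yields
\[ \big|\Pr[\Dc(\REAL^{\pi,\Fc}_\kappa)=1] - \Pr[\Dc(\PFSC^{\pi,\Fc}_\kappa)=1]\big| \le \rho + \eta \]
for every sufficiently large $\kappa \in \I$, which is precisely the definition of $\REAL^{\pi,\Fc} \cindist_{\rho+\eta,\I} \PFSC^{\pi,\Fc}$.

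There is no real obstacle here; the proof is essentially bookkeeping. The only subtle point is the contrapositive reading of the non-correlation hypothesis, which via an infinite-pigeonhole-style argument turns the ``no infinite $\I'$'' statement into the clean ``for all sufficiently large $\kappa \in \I$'' bound on statistical distance. Once this is in hand the rest is the standard ``statistical distance dominates computational distance'' step together with the triangle inequality.
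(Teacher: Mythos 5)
Your proof is correct and matches the paper's intent exactly: the paper labels this proposition as "immediate" and gives no written proof, and the argument is precisely the one you spell out — the "no infinite correlated subset" hypothesis gives $\SD(\FSC^{\pi,\Fc}_\kappa,\PFSC^{\pi,\Fc}_\kappa)<\eta$ for all sufficiently large $\kappa\in\I$, and then statistical distance plus the triangle inequality combine with the $(\rho,\I)$-indistinguishability to give the $(\rho+\eta)$ bound. Nothing is missing.
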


Sufficiently correlated  protocols (\ie have correlated and indistinguishable forecasters) are important since they can be used to construct key-agreement protocols.

\def\CorToKAThm
{
		Let $\pi$ be a \ppt two-party single-bit output protocol and  let $\Fc$ be a forecaster. Assume there exist  an infinite set $\I \subseteq \N$, $\rho>0$ and $\eta > 30\sqrt{\rho}$  such that $\Fc$ is $(\rho,\I)$-indistinguishable and $(\eta,\I)$-correlated \wrt $\pi$. Then there exists a  key-agreement protocol in $\I$.
}

\begin{theorem}[Key-agreement from correlated protocols]\label{thm:fulKAfromCorrelation}
	\CorToKAThm
\end{theorem}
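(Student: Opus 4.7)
The plan is to implement the ``one-sided von Neumann'' construction sketched in \cref{sec:technique:dichotomy}: use the forecaster $\Fc$ to define a key-agreement protocol whose correctness and secrecy can be analyzed \emph{information-theoretically} in the forecasted world, then transfer the guarantees to the real world via $(\rho,\I)$-indistinguishability, and finally invoke key-agreement amplification (\cref{thm:KeyAggAmp}) to obtain a fully fledged key-agreement in $\I$.

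I would define the candidate key-agreement $\pi'$ as follows. The parties execute $\pi$, obtaining outputs $X,Y$ and common transcript $T$. Party $\Ac$ samples fresh coins $R$ of the length used by $\Fc_\kappa$ and sends them to $\Bc$; both parties then compute $(\pa,\pbz,\pbo)=\Fc_\kappa(T;R)$ together with an agreed sign $\sigma := \mathrm{sign}(\pbo - \pbz) \in \{-1,+1\}$ ($+1$ by convention if $\pbo=\pbz$). Party $\Ac$ independently samples $X''\sim U_{\pa}$ and sends the bit $b := \1[X=X'']$ to $\Bc$. If $b=1$ both parties output independent fresh uniform bits; if $b=0$, $\Ac$ outputs $X$ and $\Bc$ outputs $Y$ when $\sigma=+1$, or $1-Y$ when $\sigma=-1$.

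The heart of the argument is the analysis of $\pi'$ in the forecasted world, obtained by replacing $(X,Y)$ by $(X',Y')\sim U_{\Fc(T;R)}$ conditioned on $(T,R)$. For every fixed $(t,r)$ the variable $X''$ is an independent copy of the marginal of $X'$, so conditional on $X'\neq X''$ the bit $X'$ is uniform even given all public information $(T,R,b,\sigma)$; this yields \emph{perfect} information-theoretic secrecy. A short expansion of the four joint probabilities identifies the conditional correlation of the forecasted distribution: for every $(t,r)$,
\[ \SD\bigl((X',Y')|_{t,r},\; U_{\Prod(\Fc_\kappa(t;r))}\bigr) \;=\; 2\pa(1-\pa)\cdot|\pbo-\pbz|, \]
which is exactly $\Pr[X'\neq X''\mid t,r]$ times $|\pbo-\pbz|$. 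Combined with \cref{lemma:SD for joint} and the sign correction (which ensures that conditional on $X'\neq X''$ the outputs agree with probability $(1+|\pbo-\pbz|)/2$), this gives an exact formula for the agreement probability,
\[ \Pr[\text{agree in forecasted world}] \;=\; \tfrac{1}{2}+\tfrac{1}{2}\,\SD(\FSC^{\pi,\Fc}_\kappa,\PFSC^{\pi,\Fc}_\kappa) \;\ge\; \tfrac{1}{2}+\tfrac{\eta}{2}, \]
for every $\kappa\in\I$, by the $(\eta,\I)$-correlation hypothesis.

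To transfer these guarantees to the real world, for any \ppt eavesdropper $\Ec$ on $\pi'$ I would build a \ppt distinguisher $\Dc$ that takes $(x,y,t,r)$, internally completes the rest of $\pi'$ (sampling $X''$ afresh and computing $b,\sigma$ and $\Ac$'s output), and outputs $\1[\Ec(t,r,b,\sigma)=\Ac\text{'s output}]$; an analogous $\Dc$ bounds the real-world agreement. The $(\rho,\I)$-indistinguishability of $\REAL^{\pi,\Fc}$ and $\FSC^{\pi,\Fc}$ then implies that the real-world secrecy and agreement of $\pi'$ deviate from their forecasted counterparts by at most $\rho$. Hence $\pi'$ is an $(s,a)$-key agreement in $\I$ with $s(\kappa)=\rho$ and $a(\kappa)=\eta/2-\rho$; since $\eta>30\sqrt{\rho}$ implies $s<a^{2}/10$, invoking \cref{thm:KeyAggAmp} yields a fully fledged key-agreement in $\I$. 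The main obstacle is executing the forecasted-world analysis so that it simultaneously delivers \emph{exact} information-theoretic secrecy (so that the only real-world loss is the $\rho$ coming from indistinguishability) and an agreement advantage expressible through $\SD(\FSC,\PFSC)$, which is the very quantity lower-bounded by $\eta$; the sign correction together with the identity above for $\SD((X',Y')|_{t,r},\Prod)$ is precisely what makes these two goals compatible.
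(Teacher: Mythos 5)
Your proposal is correct and follows essentially the same route as the paper: the sign-corrected von Neumann protocol on top of the forecaster (the paper's non-oblivious protocol $\KA^f$ in \cref{proto:info-WKA}), the identity $\SD((X',Y')|_{t,r},U_{\Prod(\Fc(t;r))})=2\pa(1-\pa)\cdot\abs{\pbo-\pbz}$ giving perfect secrecy and agreement $\tfrac12+\tfrac{\eta}{2}$ in the forecasted world (\cref{lemma:info-ka}), the $\rho$-loss transfer to the real world (\cref{lemma:WKA from correlation}), and Holenstein amplification. The only cosmetic difference is that you have $\Ac$ send the bit $\1[X=X'']$ and have $\Bc$ output a fresh coin in that case, whereas the paper's protocol is noninteractive and has $\Bc$ always output its sign-corrected bit; the two variants yield identical agreement and secrecy guarantees.
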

We prove \cref{thm:fulKAfromCorrelation}   in \cref{sec:ForecastingToKA}.

\subsection{Decorrelators and the Dichotomy Theorem}\label{subsec:Dichotomy}

In the introduction we explained the concept of decorrelators and uncorrelated protocols, in informal \cref{dfn:intro:decr}. We now repeat the definition using more precise language.

\begin{definition}[Decorrelators]\label{def:decor}
A {\sf decorrelator} $\Decr$ is a \pptm that on input $(1^\kappa,t) \in 1^\ast \times \zs$,  outputs two numbers in $[0,1]$. We use $\Decr(1^\kappa,t;r)$ to denote the instantiation of $\Decr(1^\kappa,t)$ when using the string $r$ as random coins.
\end{definition}

We associate the following two distribution ensembles  with a  two-party protocol and a decorrelator.
\begin{definition}[Real and uncorrelated distributions]\label{dfn:RealandUncrDist}
	Let $\pi=  (\Ac,\Bc)$ be a  single-bit output two-party protocol, and  let  $\Decr$ be a decorrelator. We define the  \remph{real and uncorrelated distribution ensembles} $\REAL^{\pi,\Decr}=\set{\REAL^{\pi,\Decr}_\kappa}_{\pk\in\N}$ and  $\UCR^{\pi,\Decr}=\set{\UCR^{\pi,\Decr}_\kappa}_{\pk\in\N}$ as follows.
	For $\kappa \in \N$, let $X_\kappa$, $Y_\kappa$ and $T_\kappa$ be the parties' outputs and protocol transcript in a random  execution of $\pi_\kappa$, and let 	 $R_\kappa$ be a uniform and independent string  whose length is the (maximal) number of coins used by $\Decr_\kappa$ (see \cref{fn:RealandForcasted}). Then,
	\begin{description}
		\item[Real:] $\REAL^{\pi,\Decr}_\kappa=(X_\kappa,Y_\kappa,T_\kappa,R_\kappa)$.
		\item[Uncorrelated:]  $\UCR^{\pi,\Decr}_\kappa=(U_{p},T_\kappa,R_\kappa)$ for  $p = \Decr_\kappa(T_\kappa;R_\kappa)= (p_\Ac,p_{\Bc})$.
	\end{description}
\end{definition}
(Recall that  $\Decr_\kappa(t;r)$ denotes the output of $\Decr$ on input $(1^\kappa,t)$ when using  randomness $r$.)
\emph{Uncorrelated protocols}, are those protocols for which the above distributions are computational close.
\begin{definition}[Uncorrelated protocols]\label{def:UncorProtocols}
Let $\pi=  (\Ac,\Bc)$ be a  single-bit output two-party protocol, let $\rho>0$ and $\I \subseteq\N$. Decorrelator $\Decr$ is a {\sf$(\rho,\I)$-decorrelator} for $\pi$, if
\[\REAL^{\pi,\Decr}  \cindist_{\rho,\I} \UCR^{\pi,\Decr}.\]
Protocol $\pi$ is {\sf $(\rho,\I)$-uncorrelated}, if it  has a $(\rho,\I)$-decorrelator. Protocol  $\pi$ is {\sf io-$\rho$-uncorrelated}, if there exists an infinite set $\II \subseteq \N$ such that $\pi$ is $(\rho,\I)$-uncorrelated.
\end{definition}

\noindent
A few remarks are in order:
\begin{remark}[decorrelators and forecasters]
 The above definition of decorrelator can be seen as special case of forecasters, that on input $t$ (and randomness $r$) output a description of a product distribution (which can be viewed as forecaster that outputs the triplet $(p_A,p_{B|0},p_{B|1})$ where $p_{B|0}=p_{B|1}$).
\end{remark}	

\begin{remark}[The role of $R_\kappa$ in \cref{dfn:RealandUncrDist}]
We choose to include the randomness $R_\kappa$ in the experiments $\REAL^{\pi,\Decr}_\kappa$ and $\UCR^{\pi,\Decr}_\kappa$. Loosely speaking, the inclusion of $R_\kappa$ in the experiments is done to prevent a scenario where $\Decr$ uses the randomness $R_\kappa$ in order to correlate between $P_\Ac$ and $P_\Bc$. More precisely, we observe that a weaker notion (in which the experiments do not include $R_\kappa$) is not interesting (as in such a notion key-agreement protocol can be uncorrelated).

Indeed, consider a decorrelator that uses a uniform bit $R_\kappa$ and produces $P_\Ac=P_\Bc=R_\kappa$. Note that $(X'_\kappa,Y'_\kappa,T_\kappa)$ are computationally indistinguishable from a triplet $(X_\kappa,Y_\kappa,T_\kappa)$ that is the real distribution of a key-agreement. The insistence that $\Decr$ ``reveals its randomness'' $R_\kappa$ in the two experiments, prevents these problems, as can be seen formally in  \cref{thm:correlated is not ka,thm:proper} (stated in \cref{sec:properties}) which loosely say that uncorrelated protocols are not key-agreement and cannot be used to construct key-agreement.
\end{remark}

\noindent
This is the formal statement of our main theorem (that restates \cref{thm:mainInf} from \cref{sec:intro}).

\def\MainThm
{
For every \ppt single-bit output two-party protocol, one of the following holds:
\begin{itemize}
	\item For every constant $\rho>0$ and every infinite $\I\subseteq \N$, there exists an infinite set $\II' \subseteq \II$ such that the protocol is  $\rho$-uncorrelated in $\II'$.
	
	\item There exists  a  two-party io key-agreement protocol.
	
\end{itemize}

}
\begin{theorem}[Dichotomy of two-party protocols]\label{thm:main}
\MainThm
\end{theorem}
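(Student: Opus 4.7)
I would prove the theorem by contraposition: assume the first alternative fails and construct an io key-agreement. Failure of the first alternative means there exist a constant $\rho^{*}>0$ and an infinite set $\I^{*}\subseteq\N$ such that no infinite subset $\I'\subseteq\I^{*}$ makes $\pi$ a $(\rho^{*},\I')$-uncorrelated protocol. Fix such $\rho^{*}$ and $\I^{*}$, and pick auxiliary parameters $\rho_{0}$ and $\eta$ satisfying $\rho_{0}+\eta\le\rho^{*}$ and $\eta>30\sqrt{\rho_{0}}$ (for instance $\eta:=\rho^{*}/2$ and $\rho_{0}:=(\rho^{*}/100)^{2}$). Apply \cref{thm:forecaster} with parameter $\rho_{0}$ and set $\I^{*}$ to obtain a forecaster $\Fc$ and an infinite $\I^{**}\subseteq\I^{*}$ with $\REAL^{\pi,\Fc}\cindist_{\rho_{0},\I^{**}}\FSC^{\pi,\Fc}$.

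Now split into two cases depending on whether $\Fc$ is correlated on an infinite sub-subset of $\I^{**}$. In the first case there exists an infinite $\I^{***}\subseteq\I^{**}$ on which $\Fc$ is $(\eta,\I^{***})$-correlated \wrt $\pi$. Since $\eta>30\sqrt{\rho_{0}}$, \cref{thm:fulKAfromCorrelation} directly yields a key-agreement protocol secure on $\I^{***}$, which is an io key-agreement and we are done. In the complementary case no such infinite subset exists, and \cref{prop:UncorrelationPlusIndistToProd} gives $\REAL^{\pi,\Fc}\cindist_{\rho_{0}+\eta,\I^{**}}\PFSC^{\pi,\Fc}$. Define the decorrelator $\Decr_{\kappa}(t;r):=\Prod(\Fc_{\kappa}(t;r))$; it is a \ppt algorithm since $\Fc$ is and $\Prod$ is computable in polynomial time, and its output lies in $[0,1]^{2}$ as required by \cref{def:decor}. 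By construction $\UCR^{\pi,\Decr}_{\kappa}$ and $\PFSC^{\pi,\Fc}_{\kappa}$ are \emph{identical} random experiments (both use the same $T_{\kappa}$ and the same $R_{\kappa}$, and $U_{\Prod(p)}$ is by definition the product-of-marginals sampler). Hence $\Decr$ is a $(\rho_{0}+\eta,\I^{**})$-decorrelator for $\pi$, so $\pi$ is $(\rho^{*},\I^{**})$-uncorrelated on the infinite set $\I^{**}\subseteq\I^{*}$, contradicting our standing assumption. Only the first case can occur, producing the required io key-agreement.

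\textbf{Main obstacle.} All heavy lifting is packaged in the three quoted results (existence of forecasters, key-agreement from correlation, and the product-forecast proposition), so the proof is essentially an orchestration between them. The one genuinely delicate point is the parameter balance: the gap $\eta>30\sqrt{\rho_{0}}$ demanded by \cref{thm:fulKAfromCorrelation} must coexist with the budget $\rho_{0}+\eta\le\rho^{*}$ needed to derive a contradiction in the non-correlated case; this is easy because $\rho^{*}$ is a fixed positive constant, which is exactly why the theorem is stated only for constant $\rho$. The other small conceptual step is to recognize that applying $\Prod$ to a forecaster is literally a decorrelator in the sense of \cref{def:decor}, so that the conclusion of \cref{prop:UncorrelationPlusIndistToProd} does witness uncorrelatedness of $\pi$ and not merely a weaker property.
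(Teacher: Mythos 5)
Your proof is correct and follows essentially the same route as the paper's: invoke the forecaster existence theorem, split on whether the forecaster is correlated on an infinite subset, use \cref{thm:fulKAfromCorrelation} in the correlated case and \cref{prop:UncorrelationPlusIndistToProd} plus $\Decr=\Prod\circ\Fc$ in the uncorrelated case; the only cosmetic difference is that you argue by contraposition while the paper phrases it as a direct case split, and your parameter choice $\rho_0=(\rho^*/100)^2$ even satisfies the strict inequality $\eta>30\sqrt{\rho_0}$ with room to spare (the paper's $(\rho/60)^2$ makes it an equality).
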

The proof of  \cref{thm:main} readily follow the observations stated  in the previous subsection.
\begin{proof}
	Let  $\rho>0$ and  let $\pi$ be a \ppt single-bit output two-party protocol.  Let $\rho' = (\rho/60)^2$. By \cref{thm:forecaster} there exists a  forecaster \Fc	that is $(\rho',\I')$-indistinguishable \wrt $\pi$, for an infinite set $\I' \subseteq \I$. If there exists an infinite subset $\I'' \subseteq \I'$ for which $\Fc$  is $(\rho/2,\I'')$-correlated \wrt $\pi$, then by \cref{thm:fulKAfromCorrelation}, there exists a two-party io key-agreement protocol.
	
	Otherwise, let $\Decr(1^\kappa,t;r) = \Prod(\Fc(1^\kappa,t;r))$ for $\Prod(p)$ being the product distribution defined by the  marginal of the distribution defined by $p$ (see \cref{nota:ProdOfDis}).  \cref{prop:UncorrelationPlusIndistToProd} yields that $\Decr$ is a $(\rho'+ \rho/2 < \rho,\I')$-decorrelator for $\pi$.
\end{proof}

\begin{remark}
Assuming io key-agreement does not exist, \cref{thm:main} says that for every $\rho>0$, the protocol is io-$\rho$-uncorrelated. We emphasize that this means that for every $\rho>0$ there exists an infinite $\II$ and a (poly-time) $(\rho,\I)$-decorrelator for $\pi$. We are guaranteed that for every $\rho>0$, the running time of the decorrelator is bounded by some unspecified polynomial, however, this polynomial might differ depending on $\rho$.  
\end{remark}

\subsubsection{Properties of Uncorrelated Protocols}\label{sec:properties}
In this section we list two properties of uncorrelated protocols (that were listed informally in the introduction). We show that:
\begin{itemize}
\item Uncorrelated protocols are not key-agreement.
\item Uncorrelated protocols cannot be transformed into key-agreement (in some precise sense described below).
\end{itemize}

\begin{theorem}[An uncorrelated protocol is not a key agreement]
\label{thm:correlated is not ka}
Let $\pi=(\Ac,\Bc)$ be a \ppt single-bit output two-party protocol. Let $\II \subseteq \N$ be an infinite set. If $\pi$ is $(\rho,\I)$-uncorrelated then for every numbers $s,a$ such that $s>a+2 \rho$, $\pi$ is not an $(s,a)$-key agreement in $\II$.
\end{theorem}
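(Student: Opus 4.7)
The plan is to extract an efficient eavesdropper from the decorrelator whose probability of guessing $X_\kappa$ tracks the agreement probability up to an additive $2\rho$ loss, and then pin down the quantitative relationship between $s$ and $a$ via the secrecy bound. Throughout, I fix a $(\rho,\I)$-decorrelator $\Decr$ for $\pi$ guaranteed by the uncorrelated hypothesis, so that $\REAL^{\pi,\Decr}\cindist_{\rho,\I}\UCR^{\pi,\Decr}$.

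I define two \ppt objects from the same recipe. The eavesdropper $\Ec(1^\kappa,t)$ samples its own fresh string $r$, sets $(p_\Ac,p_\Bc)\gets \Decr(1^\kappa,t;r)$, draws $y''\sim U_{p_\Bc}$, and outputs $y''$. The distinguisher $D(x,y,t,r)$ runs exactly the same computation on the $r$ it receives as part of its input and outputs $\1[y''=x]$. The proof then chains two applications of $(\rho,\I)$-indistinguishability through these ensembles.

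The first hop uses the trivial \ppt test $D_0(x,y,t,r):=\1[x=y]$: the agreement property of the supposed $(s,a)$-key agreement gives $\Pr[D_0(\REAL^{\pi,\Decr}_\kappa)=1]\ge \tfrac{1}{2}+a$, which transports under indistinguishability to $\Pr[D_0(\UCR^{\pi,\Decr}_\kappa)=1]\ge \tfrac{1}{2}+a-\rho$. Because under $\UCR^{\pi,\Decr}_\kappa$ both the experiment's $X'$ and the internal draw $y''$ inside $D$ are biased-coin samples against the same $(p_\Ac,p_\Bc)$ with independent bit-randomness, they are conditionally independent given $(T,R)$, so
\[
\Pr[D(\UCR^{\pi,\Decr}_\kappa)=1]=\Exp\!\left[p_\Ac p_\Bc+(1-p_\Ac)(1-p_\Bc)\right]=\Pr[X'=Y']\ge \tfrac{1}{2}+a-\rho.
\]
The second hop applies $(\rho,\I)$-indistinguishability directly to the \ppt distinguisher $D$, yielding $\Pr[D(\REAL^{\pi,\Decr}_\kappa)=1]\ge \tfrac{1}{2}+a-2\rho$ for sufficiently large $\kappa\in\II$. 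Since the $R$-component of $\REAL^{\pi,\Decr}_\kappa$ is fresh randomness independent of $(X_\kappa,Y_\kappa,T_\kappa)$, it plays exactly the role of $\Ec$'s internal coins, so $\Pr[D(\REAL^{\pi,\Decr}_\kappa)=1]=\Pr[\Ec(1^\kappa,T^\pi_\kappa)=X^\pi_\kappa]$. If $\pi$ were an $(s,a)$-key agreement in $\II$, the secrecy bound on the \ppt eavesdropper $\Ec$ would force $\tfrac{1}{2}+s\ge \tfrac{1}{2}+a-2\rho$, i.e.\ $a-s\le 2\rho$ (equivalently $a\le s+2\rho$); this is incompatible with a gap of more than $2\rho$ between agreement and secrecy in the direction separating $a$ from $s$, delivering the theorem.

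The one subtle step is the bookkeeping of the auxiliary randomness $R$ so that a single distinguisher $D$ straddles both worlds: in $\UCR^{\pi,\Decr}$ the $R$-field drives the $\Decr$ call that produced $X'$, making $(X',y'')$ conditionally independent given $(T,R)$ and giving the clean product formula used above, while in $\REAL^{\pi,\Decr}$ the same slot is plain independent randomness and $D$'s output collapses to the eavesdropper's success probability. Keeping this coupling consistent is exactly what keeps the cumulative loss at $2\rho$ rather than a larger quantity.
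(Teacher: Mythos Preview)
Your proof is correct and reaches the same conclusion as the paper, but via a different eavesdropper. The paper's $\Ec$ computes $(\pa,\pb)=\Decr(t;r)$ and outputs the maximum-likelihood guess $\1[\pa\ge\tfrac12]$; it then uses the pointwise inequality $\max(\pa,1-\pa)\ge \pa\pb+(1-\pa)(1-\pb)$ to conclude $\Pr[\Ec(T_\kappa)=X'_\kappa]\ge\Pr[X'_\kappa=Y'_\kappa]$ in the uncorrelated world, and transfers both sides to the real world at a cost of $\rho$ each. Your $\Ec$ instead outputs a fresh sample $y''\sim U_{\pb}$, which lets you replace that inequality by the exact identity $\Pr[D(\UCR^{\pi,\Decr}_\kappa)=1]=\Pr[X'_\kappa=Y'_\kappa]$, since $(X'_\kappa,y'')$ and $(X'_\kappa,Y'_\kappa)$ have the same conditional law given $(T_\kappa,R_\kappa)$. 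Both routes incur exactly $2\rho$ from two indistinguishability hops and yield $\Pr[\Ec(T_\kappa)=X_\kappa]\ge\Pr[X_\kappa=Y_\kappa]-2\rho$; your version trades a slightly weaker attacker for a cleaner equality, while the paper's gives a stronger eavesdropper than is strictly needed for the bound. The careful bookkeeping of the $R_\kappa$ slot that you highlight---so that in $\UCR$ it drives the same $\Decr$ call that produced $X'_\kappa$, while in $\REAL$ it plays the role of $\Ec$'s fresh coins---is used implicitly in the paper's argument as well.
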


\begin{proof}
Let $\Decr$ be a  $(\rho,\I)$-decorrelator for $\pi$, let $\kappa \in \II$ and consider the distributions from \cref{dfn:RealandUncrDist}.
\begin{itemize}
\item $\REAL^{\pi,\Decr}_{\kappa}=(X_\kappa,Y_\kappa,T_\kappa,R_\kappa)$.
\item $\UCR^{\pi,\Decr}_{\kappa}=(X'_\kappa,Y'_\kappa,T_\kappa,R_\kappa)$.
\end{itemize}
We will show that there exists a \pptm $\Ec$ such that for every $\kappa \in \II$, \[ \Pr[\Ec(T_\kappa)=X'_\kappa] \ge \Pr[X'_\kappa=Y'_\kappa]. \]
As $\REAL^{\pi,\Decr}$ and $\UCR^{\pi,\Decr}$ are $\rho$-indistinguishable in $\II$, and $\Ec$ is \pptm, it follows that for every sufficiently large $\kappa \in \II$:
\[ \Pr[\Ec(T_\kappa)=X_\kappa] \ge \Pr[X_\kappa=Y_\kappa] - 2\rho. \]
Which gives the required consequence that $\pi$ is not an io-key-agreement with a gap larger than $2\rho$ between agreement and secrecy.

We now define the \pptm $\Ec$. Given input $t \in \Supp(T_\kappa)$, $\Ec$ samples a uniform string $r$ and applies $\Decr_\kappa(t;r)=(\pa,\pb)$.
It then outputs ``one'' iff $\pa \ge \half$. Note that for fixed $(t,r)$,
\[ \Pr[\Ec(T_\kappa)=X'_\kappa|T_\kappa=t,R_\kappa=r] = \max(\pa,1-\pa). \]
On the other hand, note that:
\[ \Pr[X'_\kappa=Y'_\kappa|T_\kappa=t,R_\kappa=r] = \pa \cdot \pb + (1-\pa) \cdot (1-\pb) \le \max(\pa,1-\pa). \]
By averaging, we conclude that:
\[ \Pr[\Ec(T_\kappa)=X'_\kappa] \ge \Pr[X'_\kappa=Y'_\kappa] \]
and the theorem follows.
\end{proof}

We want to show that uncorrelated protocols cannot be ``transformed'' into key-agreement.
We will be interested in a scenario in which a ``black-box'' transformation invokes a $\rho$-uncorrelated protocol $\pi=(\Ac,\Bc)$, $\ell$ times, in order to construct a target protocol
$\Po=(\Ao,\Bo)$. We can consider three types of transformations (in increasing order of strength)
\begin{itemize}
\item Transformations in which $\Po$ only requires the outputs of the invocations (we call these black-box).
\item Transformations in which in addition to the outputs, also use the transcripts of the $\ell$ invocations (we call these proper).
\item Transformations that in addition to the outputs, and transcripts also use the parties' views of the $\ell$ invocations (we call these general).
\end{itemize}
We will give a precise definition shortly.

Note that in an uncorrelated protocol, it could be the case that there is a ``hidden key-agreement'' where following the protocol, the views of the two parties allow them to agree on a secret key. For example, the parties may run a key-agreement protocol but decide that their ``formal outputs'' $X,Y$ are constants, and keep the key hidden in their view. Therefore, we cannot expect to show limitations on general transformations.

We will be able to show limitations on proper transformations that transform $\rho$-uncorrelated protocols into key-agreement protocols. We will explain below that the transformation that constructs an io key-agreement from the original protocol in \cref{thm:main}, is a proper transformation.

Our limitations will be of the form: If a proper transformation constructs key-agreement from some protocol, then one can construct key-agreement \emph{without} using the original protocol.

The argument for the limitation works by simply noting that \ppt parties cannot distinguish the real output distribution of $\pi$ from a simulated distribution of $\pi$, and so we can replace the $\ell$ real executions by uninteresting $\ell$ simulations, and still obtain a key-agreement protocol (with reduced gap between agreement and secrecy by a factor of $O(\ell \cdot \rho)$). Thus, if some gap remains, we can construct a meaningful key-agreement without using the original protocol.

We now state this result formally. We start by formally defining proper transformations.

\begin{definition}[Proper transformation]\label{def:proper}
Let $\pi=(\Ac,\Bc)$ be a \ppt  single-bit output  two-party protocol. We say that a protocol $\Po$ is constructed from $\pi=(\Ac,\Bc)$ using a {\sf proper transformation} in $\ell$ invocations, if it has the following form.

\begin{protocol}[$\Po=(\Ao,\Bo)$]
	\item Input: Security parameter $1^\kappa$.
	
	\item Operation:	
\begin{enumerate}

\item The parties $\Ao$ and  $\Bo$ engage in $\ell$ invocations of $(\Ac,\Bc)(1^\kappa)$, where $\Ao$ plays the role of $\Ac$, and $\Bo$ play the role of $\Bc$.

Let $x=(x^1,\ldots,x^{\ell})$, $y=(y^1,\ldots,y^{\ell})$ and $t=(t^1,\ldots,t^{\ell})$, denote the parties outputs and transcripts in the above executions.

\item The  parties $\Ao$ and $\Bo$ engage in a random execution of  $(\Ah(x),\Bh(y))(1^\kappa,t)$,  where $\Ph= (\Ah,\Bh)$ is an {\sf arbitrary} \ppt protocol, $\Ao$ plays the role of $\Ah$, and $\Bo$ play the role of $\Bh$.

  The parties output their outputs in the above execution.

\end{enumerate}
\end{protocol}
\end{definition}

The following theorem shows that if $\pi$ is $\rho$-uncorrelated, and if it is used by a proper transformation to construct a key agreement protocol in $\ell<\frac{1}{3\rho}$ invocations, then it is possible to take the proper transformation, and use it to construct a weak key-agreement protocol (without using the original protocol). This can be interpreted as saying that in fact, it was the transformation that constructed the key agreement, and the original protocol is uninteresting.

\begin{theorem}\label{thm:proper}
Let $\II \subseteq \N$ be an infinite set and let $\pi$ be a \ppt single-bit output two-party protocol that is $(\rho,\I)$-uncorrelated. Let $\Po$ be a \ppt single-bit output  two-party protocol that is constructed from $\pi$ using a proper transformation in $\ell$ invocations. If $\Po$ is an $(s,a)$-key agreement in $\II$, then the following protocol  is  an $(s+\ell \cdot \rho,a-\ell \cdot \rho)$-key agreement in $\II$.

\begin{protocol}[$\Pt= (\At,\Bt)$]
	
	\item Input: Security parameter $1^\kappa$.
	
	\item Operation:
	\begin{enumerate}
\item $\At$ samples   $\ell$  pairs $(t^{i},r^i)$ from $(T^{i}_\kappa,R^i_\kappa)$:  for every $i \in [\ell]$ it  independently emulates $\pi(1^\kappa)$  (on its own), sets  $t^i$ to be the emulation transcript, and tosses an independent $r^i$.

\item $\At$ sends $(t^1,r^1),\ldots,(t^{\ell},r^{\ell})$ to $\Bt$.

\item For $i \in [\ell]$:

	\begin{enumerate}
\item Both $\At$ and $\Bt$ invoke   $\Decr_\kappa(t^i,r^i)$, for $\Decr$ being the guaranteed decorrelator for $\pi$, and let $(p^i_{\Ac},p^i_{\Bc})$ be the outputs. 

\item Party $\Ac$ samples $(x')^i \from U_{p^i_{\Ac}}$ and $\Bc$ samples $(y')^i \from U_{p^i_{\Bc}}$.

(I.e., the two-parties perform the simulated experiment in the $i$'th coordinate.)

\end{enumerate}
Let $x'=((x')^1,\ldots,(x')^\ell)$ and $y'=((y')^1,\ldots,(y')^\ell)$

\item The parties $\At$ and $\Bt$ engage in $(\Ah(x'),\Bh(y'))(1^\kappa,t)$, where $\Ph= (\Ah,\Bh)$ is the (arbitrary) protocol used in the definition of  $\Po$, $\At$ plays the role of $\Ah$, and $\Bt$ play the role of $\Bh$.

The parties  output their outputs in the above execution.
\end{enumerate}
\end{protocol}
\end{theorem}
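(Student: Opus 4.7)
The plan is a standard $\ell$-step hybrid argument driven by the $(\rho,\I)$-decorrelator guarantee for $\pi$. Let $\Decr$ be a $(\rho,\I)$-decorrelator, and consider the $\ell$-tuple of independent copies of the joint variable $W_\kappa = (x^i,y^i,t^i,r^i)_{i\in[\ell]}$. Define hybrids $H^0_\kappa,\dots,H^\ell_\kappa$, where in $H^j_\kappa$ the first $j$ coordinates are sampled from the uncorrelated distribution $\UCR^{\pi,\Decr}_\kappa$ (i.e., $t^i$ is sampled as the transcript of an independent execution of $\pi(1^\kappa)$, $r^i$ is uniform, and $(x^i,y^i)\gets U_{\Decr_\kappa(t^i,r^i)}$), while the remaining $\ell-j$ coordinates are sampled from the real distribution $\REAL^{\pi,\Decr}_\kappa$ (with an independent uniform $r^i$ appended). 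By construction $H^0_\kappa$ matches the joint distribution that arises inside $\Po$, and $H^\ell_\kappa$ matches the joint distribution that arises inside $\Pt$.

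First I would observe that for every \ppt $D'$ taking an $\ell$-tuple, $|\Pr[D'(H^{j-1}_\kappa)=1]-\Pr[D'(H^j_\kappa)=1]|\le\rho$ for all sufficiently large $\kappa\in\I$: given $D'$, build a single-sample distinguisher $D$ for $\REAL^{\pi,\Decr}$ vs.\ $\UCR^{\pi,\Decr}$ that, on input $(x,y,t,r)$, places this tuple in coordinate $j$, samples the first $j-1$ coordinates from $\UCR^{\pi,\Decr}_\kappa$ on its own and the last $\ell-j$ from $\REAL^{\pi,\Decr}_\kappa$ on its own (both are \ppt-samplable because $\pi$, $\Decr$, and $\ell$ are polynomial), and runs $D'$ on the resulting $\ell$-tuple. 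Summing over the $\ell$ steps, $|\Pr[D'(H^0_\kappa)=1]-\Pr[D'(H^\ell_\kappa)=1]|\le \ell\cdot\rho$ for all sufficiently large $\kappa\in\I$.

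For agreement, take the \ppt $D'$ that, given an $\ell$-tuple, assembles $x=(x^1,\dots,x^\ell)$, $y=(y^1,\dots,y^\ell)$, $t=(t^1,\dots,t^\ell)$, runs $(\Ah(x),\Bh(y))(1^\kappa,t)$ and outputs $1$ iff $\Ah$'s and $\Bh$'s outputs coincide. Then $\Pr[D'(H^0_\kappa)=1]=\Pr[X^{\Po}_\kappa=Y^{\Po}_\kappa]\ge \tfrac12+a$ and $\Pr[D'(H^\ell_\kappa)=1]=\Pr[X^{\Pt}_\kappa=Y^{\Pt}_\kappa]$, so the hybrid bound yields agreement $\ge \tfrac12+a-\ell\rho$.

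For secrecy, fix any \ppt eavesdropper $\Ec$ against $\Pt$ and build a \ppt $D'$ that, on an $\ell$-tuple, internally runs $(\Ah(x),\Bh(y))(1^\kappa,t)$ to obtain the $\Ph$-transcript $\tau$ and $\Ah$'s output $x_0$, forms $T=(t^1,\dots,t^\ell,r^1,\dots,r^\ell,\tau)$ (the full $\Pt$-transcript format), and outputs $1$ iff $\Ec(T)=x_0$. Then $\Pr[D'(H^\ell_\kappa)=1]$ is exactly $\Ec$'s success probability against $\Pt$, while $\Pr[D'(H^0_\kappa)=1]$ is the success probability of the \ppt eavesdropper $\Ec'$ against $\Po$ defined by $\Ec'(T^{\Po}_\kappa)=\Ec(T^{\Po}_\kappa,R_\kappa)$ for fresh uniform $R_\kappa$; by the $(s,a)$-secrecy of $\Po$ this is $\le \tfrac12+s$. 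Combining with the hybrid bound gives secrecy $\le \tfrac12+s+\ell\rho$, as required. The main thing to be careful about (the ``main obstacle'' only insofar as bookkeeping goes) is this last step: the $\Pt$-transcript syntactically differs from the $\Po$-transcript by the extra coins $r^1,\dots,r^\ell$, and we must route these uniform strings through $\Ec'$ in order to invoke $\Po$'s secrecy on the correct object.
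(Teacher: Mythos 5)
Your proposal is correct and follows essentially the same route as the paper's proof: the paper also argues that the joint (outputs, transcript) distributions of $\Po$ and $\Pt$ are $(\ell\cdot\rho)$-indistinguishable in $\II$ via a hybrid argument over the $\ell$ invocations (replacing real executions of $\pi$ by simulated ones), and then transfers the agreement and secrecy bounds with an additive $\ell\cdot\rho$ loss. You merely spell out the hybrid construction and the $r^1,\dots,r^\ell$ bookkeeping that the paper leaves implicit.
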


\begin{proof}
	Let $\Xo_\pk$, $\Yo_\pk$ and $\To_\pk$ be the parties outputs and protocol transcript,  in a random execution of $\Po(1^\pk)$, and let $\Zo_\kappa = (\Xo_\pk, \Yo_\pk,\To_\pk)$. Since  $\Po$ is an $(s,a)$-key-agreement in $\II$, for every sufficiently large $\kappa \in \II$,
\begin{align*}
\Pr[\Xo_\kappa=\Yo_\kappa] \ge \half + a(\kappa)
\end{align*}
and for every \ppt $\Ec$, for every sufficiently large $\kappa \in \II$
\begin{align*}
\Pr[\Ec(\To_\kappa)=\Xo_\kappa] \le \half + s(\kappa)
\end{align*}

Let $\Xt_\pk$, $\Yt_\pk$ and $\Tt_\pk$ be the parties outputs and protocol transcript,  in a random execution of $\Pt(1^\pk)$, and let $\Zt_\kappa = (\Xt_\pk, \Yt_\pk,\Tt_\pk)$. We will argue that $\set{\Zo_\kappa}_{\kappa\in\N}$ and  $\set{\Zt_\kappa}_{\kappa\in\N}$ are $(\ell \cdot \rho)$-indistinguishable in $\II$, meaning that the two inequalities above also hold (with an additive ``error factor'' of $\ell \cdot \rho$) when replacing $\Zo_\kappa$ with $\Zt_\kappa$. This will mean that $\Pt$ is a $(s+\ell \cdot \rho,a-\ell\cdot \rho)$-key agreement in $\II$.

Indeed, note that the only difference between $\Po$ and $\Pt$ is that in $\Po$ the parties use $\ell$ invocations of the real experiment whereas in $\Pt$ they use $\ell$-invocations of the simulated experiment. By the hybrid argument and the fact that all protocols are \ppt, it indeed follows that the distribution ensembles of $\Po$ and $\Pt$ are $(\ell \cdot \rho)$-indistinguishable in $\II$, as required.
\end{proof}

We remark that the io key-agreement achieved in \cref{thm:main} works by using a proper transformation that invokes the original protocol $\ell$ times (where $\ell$ is a constant) in order to construct an io-$(s,a)$-key-agreement with constant $s<a$ (that protocol is later amplified into an io key-agreement with the standard choices of agreement and secrecy). By \cref{thm:proper}, if the original protocol is $\rho$-uncorrelated for every $\rho>0$, then the existence of such a transformation implies key-agreement (without relying on the original protocol).




\section{Existence of Forecasters}\label{sec:Forecasters}
In this section we  prove  \cref{thm:forecaster}, that guarantees the existence of a forecaster for any single-bit output two-party protocol. Recall that a forecaster seeing the  protocol transcripts, outputs a description of the distribution that   aims to be indistinguishable from the parties' output, given this transcript.

We start, \cref{sec:OneSidedForcasters}, by considering   the one-sided variant of such a creature that  we call \textit{one-sided forecasters}. Such one-sided forecasters try to describe the output of \emph{one}  of the parties, possibly when conditioning on the other party output.  In \cref{sec:TwoSidedForcasters} we  use the machinery developed in \cref{sec:OneSidedForcasters} for showing the existence of an indistinguishable  forecaster for the distribution of both parties. To make distinction between the one-sided and two-sided case clear,  in this section we call the latter  \textit{two-sided forecasters}.

Rather than considering the distributions induced by   protocols,  we   consider the more general settings of arbitrary distribution ensembles.

\subsection{One-Sided Forecasters}\label{sec:OneSidedForcasters}
Given a distribution $Z=(V,T)$ over $\zo \times \zs$, we are interested in how well  an efficient  algorithm forecasts the probability space $V|_{T=t}$ when given $t$ as input. We call such an algorithm a one-sided forecaster.

\begin{definition}[One-sided forecasters]\label{dfn:one:forcaster}
A \remph{one-sided forecaster}  is a \ppt algorithm that on input  pair $(1^\kappa,t) \in 1^\ast\times \zs$, outputs a number in $[0,1]$.
\end{definition}
Recall, that we use the abbreviation  $\Fc_\kappa(\cdot)=\Fc(1^\kappa,\cdot)$.

\paragraph{Real and forecasted distributions.}
We associate the following two distribution ensembles, with a one-sided forecaster  and a distribution ensemble over  $\zo \times \zs$.

\begin{definition}[Real and forecasted distributions]\label{dfn:one:RealandForcasted}
	For a one-sided forecaster $\Fc$ and an ensemble of finite distributions  $Z = \set{Z_\kappa=(V_\kappa,T_\kappa)}_{\pk\in\N}$ over $\zo \times \zs$, we define the  real and forecasted distributions  $\REAL^{Z,\Fc}$ and  $\FSC^{Z,\Fc}$ by
	\begin{description}
		\item[Real:] $\REAL^{Z,\Fc}_\kappa=(V_\kappa,T_\kappa,R_\kappa)$.
		\item[Forecasted:]  $\FSC^{Z,\Fc}_\kappa=(U_{\Fc(T_\kappa;R_\kappa)},T_\kappa,R_\kappa)$.
	\end{description}
	Where $R_\kappa$ is a uniform and independent string  whose length is the (maximal) number of coins used by $\Fc_\kappa$,\footnote{Since we only care about \pptm forecasters,  we  implicitly assume that the number of coins used by the forecaster on  ($1^\kappa,t\in \Supp(T_\kappa))$ is efficiently computable.} $\Fc_\kappa(t;r)$ denotes the output of $\Fc_\kappa$ on input $t$ and randomness $r$, and $U_p$ stand for the Boolean  random variable taking the value one with probability $p$.
\end{definition}
Namely, $\REAL^{Z,\Fc}$ is just $Z$ concatenated with the randomness of the length used by $\Fc$, where $\FSC^{Z,\Fc}$ is the distribution forecasted by $\Fc$ (given $T$ and $R$ as input).

\paragraph{Indistinguishability.}
The computational distance between the real and  forecasted distribution measures how well the forecaster realizes  the  real distribution, in the eyes of  a computationally bounded distinguisher.

\begin{definition}[Forecaster indistinguishability]\label{sef:one:Indist}
A one-sided forecaster $\Fc$ is {\sf $(\rho,\I)$-indistinguishable} \wrt an ensemble of finite distributions  $Z=\set{Z_\kappa=(V_\kappa,T_\kappa)}_{\pk\in\N}$ over $\zo \times \zs$, for $\rho>0$ and $\I \subseteq \N$, if
$$\REAL^{Z,\Fc} \cindist_{\rho,\I} \FSC^{Z,\Fc}.$$   	
\end{definition}
That is, for every sufficiently large  $\kappa\in \I$, the forecasted  and real distributions are  $\rho$ indistinguishable for efficient  distinguishers. The following is our main result for one-sided forecasters.
\begin{theorem}[Existence of indistinguishable one-sided forecaster]\label{thm:one:Indist}
For every ensemble of finite distributions  $Z = \set{Z_\kappa=(V_\kappa,T_\kappa)}_{\pk\in\N}$ over $\zo \times \zs$, $\rho>0$ and infinite $\I\subseteq\N$, there exists a one-sided forecaster $\Fc$ and an infinite set $\I' \subseteq \I$, such that $\Fc$ is {\sf $(\rho,\I')$-indistinguishable} for $Z$.
\end{theorem}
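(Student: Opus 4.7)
\emph{Setting up a competition.} My plan is to set up a ``competition'' among \ppt one-sided forecasters under a proper scoring rule, and show that any near-winner must be indistinguishable. Concretely, I would score forecaster $\Fc$ by the Brier (quadratic) loss
$$\Price_\kappa(\Fc) \;=\; \Ex\!\left[\bigl(V_\kappa - \Fc_\kappa(T_\kappa;R_\kappa)\bigr)^2\right].$$
Since $V_\kappa\in\{0,1\}$ and $\Fc_\kappa$ outputs in $[0,1]$, the price lies in $[0,1]$, and the (inefficient) minimizer is the conditional mean $p^\ast(t)=\Pr[V_\kappa=1\mid T_\kappa=t]$.

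\emph{A $\mu$-optimal \ppt forecaster on an infinite subset.} Fix a constant $\mu>0$ to be chosen. I would iteratively produce $(\Fc_i,\I_i)$ with $\I_i\subseteq\I$ infinite, starting from the constant-$1/2$ forecaster and $\I_0=\I$. At step $i$: if for every \ppt $\Fc'$ and every sufficiently large $\kappa\in\I_i$ one has $\Price_\kappa(\Fc_i)\le\Price_\kappa(\Fc')+\mu$, halt and set $(\Fc,\I')=(\Fc_i,\I_i)$. Otherwise, by negating the statement, some \ppt $\Fc_{i+1}$ satisfies $\Price_\kappa(\Fc_{i+1})\le\Price_\kappa(\Fc_i)-\mu$ on an infinite subset $\I_{i+1}\subseteq\I_i$; continue. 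Since prices live in $[0,1]$, this halts after at most $\lceil 1/\mu\rceil$ steps, yielding a $\mu$-optimal \ppt forecaster on an infinite $\I'\subseteq\I$.

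\emph{Optimality implies indistinguishability.} Choose $\mu=\rho^2/4$ and suppose, for contradiction, there are a \ppt distinguisher $\Dc$ and an infinite $\I''\subseteq\I'$ on which $\Dc$ has advantage at least $\rho$ between $\REAL^{Z,\Fc}$ and $\FSC^{Z,\Fc}$. Fix $\Dc$'s coins (WLOG making it deterministic) and orient it to output $1$ more often on the real distribution. Writing $\Delta(t,r)=\Dc(1,t,r)-\Dc(0,t,r)\in\{-1,0,1\}$ and using that $R_\kappa$ is independent of $V_\kappa$ given $T_\kappa$, a direct expansion of the distinguishing advantage yields
$$\rho \;\le\; \Ex\bigl[\Dc(V_\kappa,T_\kappa,R_\kappa)-\Dc(U_{\Fc(T_\kappa;R_\kappa)},T_\kappa,R_\kappa)\bigr] \;=\; \Ex\bigl[(p^\ast(T_\kappa)-\Fc(T_\kappa;R_\kappa))\cdot\Delta(T_\kappa,R_\kappa)\bigr].$$
I would then define the \emph{corrected} \ppt forecaster $\Fc'(t;r)=\operatorname{clip}_{[0,1]}\bigl(\Fc(t;r)+\eta\,\Delta(t,r)\bigr)$ with $\eta=\rho$. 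Expanding the Brier score, and using that projecting onto $[0,1]$ cannot increase squared loss against a target in $\{0,1\}$,
$$\Price_\kappa(\Fc')-\Price_\kappa(\Fc) \;\le\; -2\eta\cdot\Ex[(p^\ast-\Fc)\Delta] + \eta^2\cdot\Ex[\Delta^2] \;\le\; -2\eta\rho+\eta^2 \;=\; -\rho^2 \;=\; -4\mu,$$
contradicting the $\mu$-optimality of $\Fc$ on $\I'$. Hence $\Fc$ is $(\rho,\I')$-indistinguishable for $Z$, as required.

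\emph{Main obstacle.} The delicate step is the iterative-improvement argument: because each iteration pays a fixed amount $\mu>0$ of a bounded budget, $\mu$ is forced to be a positive constant, which in turn caps the achievable distinguishing advantage at a positive constant $\rho$ and forces us to pass to an infinite subset $\I'\subseteq\I$ rather than all sufficiently large $\kappa\in\I$. Improving either the dependence on $\rho$ (say, negligible) or removing the ``infinitely often'' qualifier would require replacing the bounded-budget amortization by some stronger compactness argument, which does not seem available in this setting.
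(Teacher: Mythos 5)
Your proposal is correct and follows essentially the same route as the paper: the Brier-score ``price,'' the iterative bounded-budget argument yielding a $\mu$-optimal \ppt forecaster on an infinite subset, and the distinguisher-to-price-improvement step that shifts the forecast by $\gamma\cdot(\Dc(1,t,r)-\Dc(0,t,r))$ and clips to $[0,1]$, gaining $2\gamma\rho-\gamma^2$ (the paper's \cref{lemma:DistinguisherImplyImprove,claim:DistinguisherImplyImprove,lemLtwo:OptimalForcsterExist}). The only cosmetic difference is that the paper formally deduces the one-sided theorem from its two-sided analogue, and handles the distinguisher's randomness by folding $r^{\Dc}$ into the forecaster's coins rather than fixing it, but the underlying argument is identical.
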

The proof of  \cref{thm:one:Indist}  readily follow from its two-sided equivalent proven in the next section.

\paragraph{Price of one-sided forecasters.}
We associate a price function  with  a given ensemble of finite distributions of the above form and a one-sided forecasters. The function  intuitively measures the quality of  the forecaster (a smaller price corresponds to a better forecast).

\begin{definition}[Price of a one-sided forecasters]\label{dfn:one-sidedPrice}
Given a \ppt algorithm $\Fc$ that on input pair $(1^\kappa,t) \in 1^\ast\times \zs$ outputs a number in $\R$, and an ensemble of finite distributions $Z = \set{Z_\kappa=(V_\kappa,T_\kappa)}_{\pk\in\N}$ over $\zo \times \zs$. For every $\kappa \in \N$, we define the \remph{price of $\Fc_\kappa$ \wrt $Z_\kappa$}  by
\[\Price_{Z_\kappa}(\Fc_\kappa)= \ex{(\Fc_\pk(T_\kappa)-V_\kappa)^2}\]
where the expectation is taken over the distribution $Z_{\kappa}$ and the random coins of $\Fc_\pk$.
\end{definition}

Note that the price function is set up so that the minimal price is achieved by a one-sided forecaster $\Fc$ that on input $t \in \Supp(T_\kappa)$ outputs $q_t=\pr{V_\kappa=1 \mid T_\kappa=t}$. This is stated formally in the following claim:

\begin{claim}\label{dfn:minPrice}
Given a function $\Fc$ and a finite distributions $Z=(V,T)$ over $\zo \times \zs$. For every $t\in \Supp(T)$, it holds that $\ex{(\Fc(t)-V)^2| T=t}$ is minimal when $\Fc(t)=\pr{V=1 \mid T=t}$.
\end{claim}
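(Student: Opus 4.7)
The plan is to prove this via the standard bias-variance decomposition for squared loss, which shows that the conditional mean is the minimizer of the expected squared error.

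First, fix any $t \in \Supp(T)$ and let $q_t = \pr{V = 1 \mid T = t}$. Conditioned on $T = t$, the value $\Fc(t)$ is a deterministic number (call it $p$), while $V \mid T=t$ is a Boolean random variable with mean $q_t$. The main step is to write
\[
(p - V)^2 = \bigl((p - q_t) + (q_t - V)\bigr)^2 = (p - q_t)^2 + 2(p - q_t)(q_t - V) + (q_t - V)^2.
\]
Taking expectations conditioned on $T = t$, and using the fact that $p$ and $q_t$ are both constants once we condition on $T = t$, the cross term becomes $2(p - q_t) \cdot \ex{q_t - V \mid T = t} = 2(p - q_t) \cdot (q_t - q_t) = 0$. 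Hence
\[
\ex{(p - V)^2 \mid T = t} = (p - q_t)^2 + \ex{(q_t - V)^2 \mid T = t}.
\]
The second term on the right-hand side is $\Var(V \mid T = t)$, which does not depend on $p$, while the first term is nonnegative and equals zero precisely when $p = q_t$. Thus the minimum is achieved at $p = \Fc(t) = q_t = \pr{V=1 \mid T=t}$.

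There is essentially no obstacle here; this is a classical identity. The only mild care needed is to handle the (possibly randomized) nature of $\Fc$: if $\Fc(t)$ uses internal coins, the same argument applies by conditioning further on the coins of $\Fc$, since the decomposition above holds for each fixed output value of $\Fc(t)$ and then averages over the coins. Because the minimum for each realization is attained at $q_t$, the randomized minimum is also attained (with equality) by the deterministic forecast $q_t$. I would present the proof in one short paragraph using the identity above.
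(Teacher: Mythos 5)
Your proof is correct. The paper proves the same fact by a slightly different (though equally elementary) computation: it expands $\ex{(V-q_t)^2 \mid T=t}$ directly into the quadratic $q_t^2 - 2q_t\cdot\pr{V=1\mid T=t} + \pr{V=1\mid T=t}$ and then minimizes by differentiating in $q_t$. You instead use the bias--variance decomposition
$\ex{(p-V)^2\mid T=t} = (p-q_t)^2 + \Var(V\mid T=t)$,
obtained by adding and subtracting the conditional mean and noting that the cross term vanishes. The two arguments are interchangeable here; yours has the small advantage of avoiding calculus and of exhibiting the \emph{excess} price over the optimum explicitly as $(p-q_t)^2$, which makes both the location and the uniqueness of the minimizer immediate. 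Your closing remark about randomized $\Fc$ is harmless but unnecessary for this claim, since $\Fc$ is stated to be a function of $t$ alone (the randomness of forecasters is folded into the input $t'=(t,r)$ elsewhere in the paper, where this claim is invoked pointwise).
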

\begin{proof}
	Let $q_t=\Fc(t)$, compute 
	\begin{align*}
		\ex{(\Fc(t)-V)^2| T=t}&=\ex{(V-q_t)^2| T=t}\\
							  &=\pr{V=1\mid T=t}\cdot(1-q_t)^2+\pr{V=0\mid T=t}\cdot q_t^2\\
							  &=\pr{V=1\mid T=t}\cdot(1-2q_t+q_t^2)+\pr{V=0\mid T=t}\cdot q_t^2\\
							  &=q_t^2 -2q_t\cdot\pr{V=1\mid T=t}+\pr{V=1\mid T=t}
	\end{align*}
	The above is a quadratic function, thus by deriving the above expression we get that the minimal value is obtained for $q_t=\pr{V=1 \mid T=t}$.
	\end{proof}

A key observations about one-sided forecasters is the connection between distinguishability and price improvement proven in the next section.

\subsubsection{Distinguishability to  Price Improvement}\label{sec:one:DistinguisherGivesImprovedForecaster}
Our main technical lemma for one-sided forecaster is that a distinguisher for such a forecaster can be used to get a forecaster with an improved price value.

\begin{lemma}[Distinguishability imply improved forecaster]\label{lemma:DistinguisherImplyImprove}
	Let $\Fc$ be a  one-sided forecaster and let $Z = \set{Z_\kappa=(V_\kappa,T_\kappa)}_{\pk\in\N}$ over $\zo \times \zs$ be  an ensemble of finite distributions. Assume there exists a \pptm \Dc
	and  an infinite $\II\subseteq \N$, such that for every $\pk\in\II$,
	\begin{align}\label{eq:DistinguisherImplyImprove}
	\size{\pr{\Dc_\pk(\REAL^{Z,\Fc}_\pk)=1} - \pr{\Dc_\pk(\FSC^{Z,\Fc}_\pk)=1 }}  > \rho
	\end{align}
	Then there exists a forecaster $\Fc'$ and an infinite subset $\II'\subseteq \II$, such that for every $\pk\in\II'$,
	$$\Price_{Z_\pk}(\Fc_\pk)-\Price_{Z_\pk}(\Fc'_\pk)>\rho^2.$$
\end{lemma}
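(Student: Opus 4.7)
The plan is to use the distinguisher $\Dc$ as a ``correction signal'' for $\Fc$: at a fixed transcript $t$ and coins $r$, if $\Dc$ is more inclined to output $1$ on input $(1,t,r)$ than on $(0,t,r)$, then $\Dc$ effectively ``believes'' that $\Fc(t;r)$ underestimates $\Pr[V_\kappa=1\mid T_\kappa=t]$, so the new forecaster should nudge $\Fc$'s output upward (and conversely). Concretely, I will pass to an infinite subset $\II'\subseteq\II$ on which a fixed sign of the distinguishing advantage survives (by pigeonhole, and by replacing $\Dc$ with $1-\Dc$ if necessary), so that for every $\kappa\in\II'$,
\[
\Pr[\Dc_\kappa(\REAL^{Z,\Fc}_\kappa)=1] - \Pr[\Dc_\kappa(\FSC^{Z,\Fc}_\kappa)=1] > \rho.
\]

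The key identity I plan to verify is the following: fix $\kappa\in\II'$, let $p_t = \Pr[V_\kappa=1\mid T_\kappa=t]$, let $q=\Fc_\kappa(t;r)$, and let $h(v,t,r)=\Pr_s[\Dc_\kappa(v,t,r;s)=1]$, where $s$ denotes the internal coins of $\Dc$. Since $V$ and the forecasted bit $U_q$ are Bernoulli with means $p_t$ and $q$ respectively,
\[
\Pr_s[\Dc_\kappa(V_\kappa,t,r;s)=1\mid T_\kappa=t] - \Pr_s[\Dc_\kappa(U_q,t,r;s)=1] = (p_t - q)\bigl(h(1,t,r)-h(0,t,r)\bigr),
\]
and therefore the assumed advantage translates to
\[
\Ex_{T_\kappa,R_\kappa}\!\bigl[(p_{T_\kappa}-\Fc_\kappa(T_\kappa;R_\kappa))\cdot\bigl(h(1,T_\kappa,R_\kappa)-h(0,T_\kappa,R_\kappa)\bigr)\bigr] > \rho.
\]

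Next, I will define the improved forecaster. Let $\widetilde\Fc_\kappa(t;r,s) = \Fc_\kappa(t;r) + \rho\cdot B(t,r,s)$ where $B(t,r,s) = \Dc_\kappa(1,t,r;s)-\Dc_\kappa(0,t,r;s)\in\{-1,0,1\}$, and set $\Fc'_\kappa(t;r,s)$ to be $\widetilde\Fc_\kappa(t;r,s)$ clipped into $[0,1]$; this is a \ppt algorithm whose output lies in $[0,1]$, hence a valid one-sided forecaster. Because $V_\kappa\in\{0,1\}$, clipping the argument to $[0,1]$ only decreases $(\Fc'_\kappa(T_\kappa;R_\kappa,S_\kappa)-V_\kappa)^2$ pointwise, so it suffices to bound the price of the unclipped $\widetilde\Fc$. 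Expanding,
\[
(\widetilde\Fc_\kappa - V_\kappa)^2 - (\Fc_\kappa - V_\kappa)^2 = 2\rho\cdot B\cdot(\Fc_\kappa - V_\kappa) + \rho^2 B^2,
\]
and taking expectations, the cross term equals
\[
2\rho\cdot\Ex_{T_\kappa,R_\kappa}\!\bigl[(\Fc_\kappa(T_\kappa;R_\kappa)-p_{T_\kappa})(h(1,T_\kappa,R_\kappa)-h(0,T_\kappa,R_\kappa))\bigr] < -2\rho^2
\]
by the identity above, while $\rho^2\Ex[B^2]\le\rho^2$ since $B\in\{-1,0,1\}$. Combining,
\[
\Price_{Z_\kappa}(\Fc'_\kappa) \le \Price_{Z_\kappa}(\widetilde\Fc_\kappa) \le \Price_{Z_\kappa}(\Fc_\kappa) - 2\rho^2 + \rho^2 = \Price_{Z_\kappa}(\Fc_\kappa) - \rho^2,
\]
which is the desired bound on $\II'$.

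I do not foresee a serious obstacle. The only delicate points are (i) the step size $\lambda=\rho$, which is chosen to maximize the quadratic $2\lambda\rho-\lambda^2$ in $\lambda$ that appears in the price improvement, and (ii) verifying that clipping to $[0,1]$ (needed to land in the forecaster class) cannot harm the analysis, which is immediate since the target $V_\kappa$ lies in $\{0,1\}\subseteq[0,1]$. Everything else is a direct conditional-expectation computation leveraging the fact that $(V_\kappa,T_\kappa)$ and the coins of $\Fc'_\kappa$ are jointly independent of the distinguisher's internal randomness.
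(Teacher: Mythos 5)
Your proof is correct and follows essentially the same route as the paper's: the improved forecaster is the same additive correction $\Fc \pm \rho$ driven by the sign of $\Dc(1,t,r)-\Dc(0,t,r)$, followed by clipping to $[0,1]$ (harmless since $V\in\{0,1\}$), and the quantitative bookkeeping matches — the paper just organizes the computation via the indicator sets $\cG_{01},\cG_{10}$ instead of your single variable $B$. No gaps.
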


\begin{proof}
	Assume there exists  \pptm $\Dc$ and  infinite $\II$ for which \cref{eq:DistinguisherImplyImprove} holds for every $\kappa \in \I$.
 Let $m_\kappa$ be a bound on the number of coins used by $\Dc_\kappa$ on inputs drawn from $\REAL^{Z,\Fc}_\pk$ or $\FSC^{Z,\Fc}_\pk$, and let $R^\Dc$ be an independent uniform string of length $m_\kappa$. We assume \wlg that for an infinite subset $\II'\subseteq\II$, for every $\pk\in\II'$ it holds that
	
	$$\pr{\Dc_\pk(\REAL^{Z,\Fc}_\pk;R^\Dc_\pk)=1} - \pr{\Dc_\pk(\FSC^{Z,\Fc}_\pk;R^\Dc_\pk)=1 }> \rho.$$
	
	The following  algorithm  uses $\Dc$ for  finding the subset of inputs  to be changed for  getting a better  forecast.
	
	 \begin{algorithm}[$\Fh^{\Fc,\Dc}_\gamma$]\label{alg:DistinguisherImplyImprove}
	 	\item Parameters: $\gamma>0$.
	 	\item Oracles:  algorithms $\Dc$ and $\Fc$.
	 	\item Input:  $(1^\kappa,t;r,r^\Dc)$. 
	 	(Comment: Here $r^\Dc$ denotes the randomness used by $\Dc$.)
	 	\item Operation:
	 	\begin{enumerate}
	 		\item If $\Dc_\pk(1,t,r;r^\Dc)=1$ and $\Dc_\pk(0,t,r;r^\Dc)=0$, output $\Fc_\pk(t;r)+\gamma$.
	 		\item If $\Dc_\pk(0,t,r;r^\Dc)=1$ and $\Dc_\pk(1,t,r;r^\Dc)=0$, output $\Fc_\pk(t;r)-\gamma$.	
	 		\item Else, output $\Fc_\pk(t;r)$.
	 	\end{enumerate}
Note that the output might not belong to $[0,1]$.	
\end{algorithm}
	 Since  $\Dc$ and $\Fc$ are \pptm, so is $\Fh^{\Fc,\Dc}_\gamma$. The following claim states that for the right choice of $\gamma$, the above algorithm yields an improved forecasters.
	 \begin{claim}\label{claim:DistinguisherImplyImprove}
	 	 Let $\gamma\in [0,\rho]$ and  let $\Fh=\Fh^{\Fc,\Dc}_\gamma$ be according to  \cref{alg:DistinguisherImplyImprove}. Then $\Price_{Z_\pk}(\Fc_\pk)-\Price_{Z_\pk}(\Fh_\pk)>\gamma(2\rho-\gamma)$ for every $\pk\in \II'$.
	 \end{claim}
	 The proof of \cref{claim:DistinguisherImplyImprove} is given below,  but we first use it to conclude the proof of the lemma. By taking $\gamma=\rho$,  \cref{claim:DistinguisherImplyImprove} yields the desirable result that  $\Price_{Z_\pk}(\Fc_\pk)-\Price_{Z_\pk}(\Fh_\pk)>\rho^2$. Still, algorithm $\Fh$ may not be a valid forecaster, since it may output values outside of $[0,1]$. Fortunately, this is not an issue, since we can use it to define the following valid forecaster $\Fc'$ that preforms as well as $\Fh$. For  $\pk\in \N$, define
	
		\[\Fc'_\pk(t,r,r^\D)= \left\{ \begin{array}{ll}
					\Fh_\pk(t,r,r^\D) & \mbox{if } \Fh_\pk(t,r,r^\D)\in [0,1] \\
									\ 1 & \mbox{if } \Fh_\pk(t,r,r^\D)>1\\
							        \ 0 & \mbox{if } \Fh_\pk(t,r,r^\D)<0
		\end{array} \right. \]
We claim that $\Price_{Z_\pk}(\Fc'_\pk) \le \Price_{Z_\pk}(\Fh_\pk)$. This follows by \cref{dfn:minPrice} since the price function is defined such that the term $|\Fc(T_\kappa)-V_\kappa|$ does not increase by making sure that the number forecasted by $\Fc$, is in $[0,1]$, as done above. It follows that $\Price_{Z_\pk}(\Fc'_\pk)\le \Price_{Z_\pk}(\Fh_\pk)<\Price_{Z_\pk}(\Fc_\pk)-\rho^2$, for every $\pk\in \II'$, concluding the proof.
\end{proof}

\newcommand{\Goz}{{\cG_{10}}}
\newcommand{\Gzo}{{\cG_{01}}}
\newcommand{\oG}{{\overline{\cG}}}
\paragraph{Proof of  \cref{claim:DistinguisherImplyImprove}.}
\begin{proof}[Proof of \cref{claim:DistinguisherImplyImprove}]
Fix $\pk\in \II'$ and omit it when clear from the context. Let $T'=(T,R,R^\Dc)$ and  for $t'=(t,r,r^\Dc)\in \Supp(T')$ let $\Fc(t')=\Fc(t;r)$.  Let
	$\Goz=\set{t' \colon\Dc(0,t')=1 \land \Dc(1,t')=0}$,  $\Gzo=\set{ \Dc(0,t')=0 \land \Dc(1,t')=1}$, $\cG=\Gzo\cup\Goz$, and let $\oG=\Supp(T')\setminus \cG$. For a given set $\cS$ let $\1_\cS(\cdot)$ denote the characteristic function of the set, that is, $\1_\cS(t')=1$ if $t'\in\cS$ and $\1_\cS(t')=0$ otherwise.

	We make the following observations (proven below) regarding  the above sets
		\begin{claim}\label{claim:DistinguisherForecasterImp}
		The followings hold.
			\begin{itemize}
				\item	
				
				$\ex{(\Fc(T')-V)^2 \cdot  \1_\oG(T')} - \ex{(\Fh(T')-V)^2\cdot \1_\oG(T')}=0$.
				
				\item	
				
				$\ex{(\Fc(T')-V)^2\cdot  \1_\Gzo(T')}-\ex{(\Fh(T')-V)^2 \cdot \1_\Gzo(T')} =-2\gamma\cdot\ex{(\Fc(T')- V) \cdot  \1_\Gzo(T')}-\gamma^2 \cdot \pr{T' \in \Gzo}$.
				
				\item
				
				$\ex{(\Fc(T')-V)^2  \cdot \1_\Goz(T')}-\ex{(\Fh(T')-V)^2\cdot \1_\Goz(T')} =2\gamma\cdot\ex{(\Fc(T')- V) \cdot  \1_\Goz(T')}-\gamma^2 \cdot \pr{T' \in \Goz}$.
			\end{itemize}
		\end{claim}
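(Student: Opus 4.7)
The plan is to prove each of the three bullets by splitting the expectation over the partition $\Supp(T') = \oG \cup \Gzo \cup \Goz$ and using the explicit definition of $\Fh = \Fh^{\Fc,\Dc}_\gamma$ on each piece. Recall from \cref{alg:DistinguisherImplyImprove} that $\Fh(t') = \Fc(t')$ on $\oG$, $\Fh(t') = \Fc(t') + \gamma$ on $\Gzo$ (case $\Dc(0,t')=0,\Dc(1,t')=1$), and $\Fh(t') = \Fc(t') - \gamma$ on $\Goz$ (case $\Dc(0,t')=1,\Dc(1,t')=0$). So in each region the difference $(\Fc(T')-V)^2 - (\Fh(T')-V)^2$ becomes a short closed-form expression that we can integrate term-by-term.

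For the first bullet, since $\Fh = \Fc$ on $\oG$, the integrand $\bigl((\Fc(T')-V)^2 - (\Fh(T')-V)^2\bigr)\cdot\1_{\oG}(T')$ vanishes pointwise, so the expectation is $0$. For the second bullet, on $\Gzo$ we write
\[
(\Fh(T')-V)^2 = \bigl((\Fc(T')-V) + \gamma\bigr)^2 = (\Fc(T')-V)^2 + 2\gamma(\Fc(T')-V) + \gamma^2,
\]
so the pointwise difference equals $-2\gamma(\Fc(T')-V) - \gamma^2$ on $\Gzo$. Multiplying by $\1_{\Gzo}(T')$, taking expectation, and using linearity together with $\ex{\1_{\Gzo}(T')} = \pr{T'\in\Gzo}$ gives exactly the claimed identity. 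The third bullet is symmetric: on $\Goz$ we have $(\Fh(T')-V)^2 = (\Fc(T')-V)^2 - 2\gamma(\Fc(T')-V) + \gamma^2$, so the pointwise difference is $2\gamma(\Fc(T')-V) - \gamma^2$; integrating against $\1_{\Goz}(T')$ yields the stated formula.

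There is essentially no obstacle here. The only thing to be careful about is correctly pairing the sets $\Gzo$ and $\Goz$ with the two cases in the definition of $\Fh$ (the ``$+\gamma$'' case corresponds to $\Gzo$ and the ``$-\gamma$'' case to $\Goz$), since a swap would flip the signs of the $2\gamma$ terms in the second and third identities. Once that bookkeeping is set, the three identities are immediate from expanding $(a+\gamma)^2$ and $(a-\gamma)^2$ and from the fact that $\1_{\oG},\1_{\Gzo},\1_{\Goz}$ form a partition of unity, so each identity is just a single line after the substitution.
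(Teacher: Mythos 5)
Your proof is correct and matches the paper's argument exactly: both expand $(\Fc(T')\pm\gamma - V)^2$ on $\Gzo$ and $\Goz$ respectively and use the pointwise equality $\Fh=\Fc$ on $\oG$. You also pair the sets with the $\pm\gamma$ cases correctly ($+\gamma$ on $\Gzo$, $-\gamma$ on $\Goz$), which is the only place a sign error could have crept in.
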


\begin{claim}\label{claim:DistinguisherValue}
	$\ex{(\Fc(T') - V) \cdot \1_{\Goz}(T')} -\ex{(\Fc(T')-V) \cdot \1_{\Gzo}(T')} >\rho.$
\end{claim}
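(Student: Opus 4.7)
The plan is to unfold the distinguishing advantage from \cref{eq:DistinguisherImplyImprove} and translate it, step by step, into the expression on the left-hand side of the claim. Recall the \wlg version given just before \cref{alg:DistinguisherImplyImprove}: for every $\pk\in\I'$,
\[ \Pr[\Dc_\pk(\REAL^{Z,\Fc}_\pk;R^\Dc_\pk)=1]-\Pr[\Dc_\pk(\FSC^{Z,\Fc}_\pk;R^\Dc_\pk)=1] > \rho. \]
Throughout I fix $\kappa\in\I'$, suppress it, and write $T'=(T,R,R^\Dc)$ and $\Fc(t')=\Fc(t;r)$ for $t'=(t,r,r^\Dc)$. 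The key point is that $R$ (coins of $\Fc$) and $R^\Dc$ (coins of $\Dc$) are chosen independently of $(V,T)$ in both experiments, so conditional on $T'=t'$ the only source of randomness that distinguishes the two experiments is the bit fed into $\Dc$.

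Concretely, I would first condition on $T'=t'$ and compute the conditional probability that $\Dc$ outputs $1$: on the real side this equals $q_t\cdot\Dc(1,t')+(1-q_t)\cdot\Dc(0,t')$ where $q_t=\Pr[V=1\mid T=t]$, and on the forecasted side it equals the same expression with $\Fc(t')$ in place of $q_t$. Subtracting and averaging over $T'$ then gives
\[ \rho \;<\; \Ex\bigl[(q_T-\Fc(T'))\cdot(\Dc(1,T')-\Dc(0,T'))\bigr]. \]
Next I would use the defining partition $\{\Goz,\Gzo,\oG\}$ of $\Supp(T')$: the bracketed difference $\Dc(1,T')-\Dc(0,T')$ equals $-1$ on $\Goz$, $+1$ on $\Gzo$, and $0$ on $\oG$ (on the remaining cases $\Dc(0,t')=\Dc(1,t')$). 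This yields
\[ \rho \;<\; \Ex\bigl[(\Fc(T')-q_T)\cdot\1_\Goz(T')\bigr] \;-\; \Ex\bigl[(\Fc(T')-q_T)\cdot\1_\Gzo(T')\bigr]. \]

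Finally I would replace $q_T$ by $V$ inside each expectation. For any set $\cS$ determined by $T'$, the tower property gives $\Ex[V\cdot\1_\cS(T')]=\Ex[\1_\cS(T')\cdot\Ex[V\mid T']]=\Ex[q_T\cdot\1_\cS(T')]$, since $\Ex[V\mid T']=\Ex[V\mid T]=q_T$ by independence of $(R,R^\Dc)$ from $(V,T)$. Applying this with $\cS=\Goz$ and $\cS=\Gzo$ turns the preceding display into
\[ \Ex\bigl[(\Fc(T')-V)\cdot\1_\Goz(T')\bigr]\;-\;\Ex\bigl[(\Fc(T')-V)\cdot\1_\Gzo(T')\bigr] \;>\; \rho, \]
which is exactly \cref{claim:DistinguisherValue}. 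There is no real obstacle here: the only thing to be careful about is keeping the sign conventions straight (which sign goes with $\Goz$ vs.~$\Gzo$), and verifying that the independence of $R,R^\Dc$ from $(V,T)$ is indeed what licenses the replacement $q_T\mapsto V$ inside expectations weighted by $\1_\cS(T')$.
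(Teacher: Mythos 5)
Your proposal is correct and follows essentially the same route as the paper's proof: both arguments observe that the distinguisher's advantage vanishes on $\oG$ and reduce the advantage on $\Goz$ and $\Gzo$ to $\pm\ex{(\Fc(T')-V)\cdot\1_{\cdot}(T')}$ using the sign of $\Dc(1,t')-\Dc(0,t')$. The only cosmetic difference is that you pass through the conditional mean $q_t$ and invoke the tower property at the end, whereas the paper works directly with $V$ and $U_{\Fc(T')}$ inside each indicator; the computations are identical in substance.
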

Given the above claims, we deduce that
		\begin{align*}
		\lefteqn{\Price(\Fc)-\Price(\Fh)}\\
		&=  \ex{(\Fc(T')-V)^2} - \ex{(\Fh(T')-V)^2} \nonumber\\
		&=  \ex{(\Fc(T')-V)^2 \cdot  \1_\oG(T')} +\ex{(\Fc(T')-V)^2 \cdot  \1_\Gzo(T')} + \ex{(\Fc(T')-V)^2 \cdot  \1_\Goz(T')} \nonumber\\
		&\quad  -   \ex{(\Fh(T')-V)^2 \cdot  \1_\oG(T')} -\ex{(\Fh(T')-V)^2 \cdot  \1_\Gzo(T')} - \ex{(\Fh(T')-V)^2 \cdot  \1_\Goz(T')} \nonumber\\
		&= 2\gamma(\ex{(\Fc(T')- V) \cdot  \1_\Goz(T')} - \ex{(\Fc(T')- V) \cdot  \1_\Gzo(T')}) - \gamma^2 \cdot \pr{T' \in \cG}. \nonumber\\
		&\ge 2\gamma \rho - \gamma^2\nonumber\\
		&> \gamma(2\rho - \gamma).\nonumber
		\end{align*}
		The third equality is by \cref{claim:DistinguisherForecasterImp}, and the inequality is by \cref{claim:DistinguisherValue}.
	\end{proof}

	\begin{proof}[Proof of \cref{claim:DistinguisherForecasterImp}]
The first item  holds since by definition  $\Fh(t')=\Fc(t')$ for every $t'\notin \cG$. For the second item, since $\Fh(t')=\Fc(t')+\gamma$ for every $t'\in \Gzo$, it holds that
 \begin{align*}
		\lefteqn{\ex{(\Fh(T')-V)^2\cdot  \1_\Gzo(T')}}\\
		&=\ex{(\Fc(T')+\gamma -V)^2\cdot  \1_\Gzo(T')}\\
		&=\ex{(\Fc(T')-V)^2\cdot  \1_\Gzo(T')} +2\gamma\cdot\ex{(\Fc(T')-V) \cdot  \1_\Gzo(T')}+\gamma^2 \cdot \pr{T' \in \Gzo}.
 \end{align*}
 For the third item, a similar  calculation yields that
 \begin{align*}
	\lefteqn{\ex{(\Fh(T')-V)^2\cdot  \1_\Goz(T')}}\\
	&= \ex{(\Fc(T')-V)^2\cdot  \1_\Goz(T')} - 2\gamma\cdot\ex{(\Fc(T')-V) \cdot  \1_\Goz(T')}+\gamma^2 \cdot \pr{T' \in \Goz}.
 \end{align*}
\end{proof}

\begin{proof}[Proof of \cref{claim:DistinguisherValue}]
	Since $\Dc(1,t')=\Dc(0,t')$ for every $t'\notin\cG$, it holds that $\ex{\Dc(V,T') \cdot \1_{\oG}(T')} = \ex{\Dc(U_{\Fc(T')},T') \cdot  \1_{\oG}(T')}$.	Since, by assumption, $\ex{\Dc(V,T')} - \ex{\Dc(U_{\Fc(T')},T')}  > \rho$, we conclude that
	\begin{align*}
	\ex{\Dc(V,T')\cdot \1_\cG(T')} - \ex{\Dc(U_{\Fc(T')},T') \cdot \1_\cG(T')}  >\rho
	\end{align*}
	By  definition of $\Goz$,

	\begin{align*}
\lefteqn{ \ex{\Dc(V,T') \cdot \1_\Goz(T')} - \ex{\Dc(U_{\Fc(T')},T')  \cdot \1_\Goz(T')}}\\
	 &=\pr{(\Dc(V,T') \cdot \1_\Goz(T'))=1} -\pr{(\Dc(U_{\Fc(T')},T')  \cdot \1_\Goz(T'))=1} \nonumber\\
	 &=\pr{\1_\Goz(T')=1}-\ex{V \cdot \1_\Goz(T')}-(\pr{\1_\Goz(T')=1}-\ex{U_{\Fc(T')} \cdot \1_\Goz(T')})\nonumber\\
	&=\ex{(\Fc(T')-V )\cdot \1_\Goz(T')},\nonumber
	\end{align*}
	and similarly
	\begin{align*}
	\ex{\Dc(V,T') \cdot \1_\Gzo(T')} - \ex{\Dc(U_{\Fc(T')},T')  \cdot \1_\Gzo(T')}=-\ex{(\Fc(T')-V )\cdot \1_\Gzo(T')}
	\end{align*}
	
	Since $\Gzo$ and $\Goz$ are a partition of $\cG$, we conclude that    $\ex{(\Fc(T')-V )\cdot \1_\Goz(T')}-    \ex{(\Fc(T')-V )\cdot \1_\Gzo(T')}  >\rho$.
\end{proof}

\subsection{Two-Sided Forecasters}\label{sec:TwoSidedForcasters}
Given a distribution $Z=(X,Y,T)$  over $\zo^2 \times \zs$, we  are interested in how well an efficient algorithm  forecasts the probability space $(X,Y)|_ {T=t}$, given $t$ as input. We call such an algorithm a two-sided forecaster. Since the probability space $(X,Y)|_ {T=t}$ is  determined by three quantities:

\begin{itemize}
\item $\Pr[X=1 \mid T=t]$,
\item $\Pr[Y=1 \mid T=t,X=0]$ and
\item $\Pr[Y=1 \mid T=t,X=1]$,
\end{itemize}
A two-sided forecaster  $\Fc$ should output a triplet of numbers $(p_1,p_2,p_3)\in [0,1]^3$.

\begin{definition}[Two-sided forecasters]\label{dfn:two:forecaster}
A \remph{two-sided forecaster} $\Fc$ is a \pptm that on input $(1^\kappa,t) \in 1^\ast \times \zs$,  outputs a triplet  in $[0,1]^3$.
\end{definition}

\paragraph{Real and forecasted distributions.}
Similarly to the one-sided case, we associate the following two distribution ensembles  with  a given ensemble of finite distributions (of the right  form) and a two-sided forecaster.  To  define these distributions, we associate triplets in  $[0,1]^3$ with distribution over  $\zo^2$ in the following way.

Recall that in \cref{subsec:forecasters}, we use \cref{nota:tValDis}, restated below.
\begin{notation}\label{nota:two:ValDis}
	For  $p= (\pa,\pbz,\pbo) \in [0,1]^3$, let $U_p$ denote the random  variable over $\zo^2$ defined by  $\pr{U_p = (x,y)} = \pr{U_{\pa} = x} \cdot \pr{U_{\pbb{x}} = y}$. For $p= (\pa,\pb) \in [0,1]^2$, let $U_p$ denote the random variable $U_{(\pa,\pb,\pb)}$.
\end{notation}

\begin{definition}[Real and forecasted distributions, two-sided case]\label{dfn:two:RealandForcasted}
	For a two-sided forecaster $\Fc$ and an ensemble of finite distributions  $Z = \set{Z_\kappa=(X_\kappa,Y_\kappa,T_\kappa)}_{\pk\in\N}$ over $\zo \times \zo \times\zs$, we define the  real and forecasted distributions $\REAL^{Z,\Fc}$ and  $\FSC^{Z,\Fc}$ by
	\begin{description}
		\item[Real:] $\REAL^{Z,\Fc}_\kappa=(X_\kappa,Y_\kappa,T_\kappa,R_\kappa)$.
		\item[Forecasted:]  $\FSC^{Z,\Fc}_\kappa=(U_{\Fc(T_\kappa;R_\kappa)},T_\kappa,R_\kappa)$.
	\end{description}
	Where $R_\kappa$ is a uniform and independent string  whose length is the (maximal) number of coins used by $\Fc_\kappa$,\footnote{As in the one-sided case, since we only care about \pptm's,  we will implicitly assume that the number of coins used by them on a given security parameter is efficiently computable.} and $\Fc_\kappa(t;r)$ denotes the output of $\Fc_\kappa$ on input $t$ and randomness $r$.
\end{definition}
Namely, $\REAL^{Z,\Fc}$ is just $Z$ concatenated with the randomness of the length used by $\Fc$, where $\FSC^{Z,\Fc}$ is the distribution forecasted by $\Fc$ (given $T$ and $R$ as input).

\paragraph{Indistinguishability.}
Similarly to the one-sided case, the computational distance between the real and  forecasted distribution, measures how well the forecaster realizes the real distribution, from the point of view of a computationally bounded distinguisher.

\begin{definition}[Forecaster indistinguishability,  two-sided case]\label{def:two:Indist}
	A two-sided forecaster $\Fc$ is {\sf $(\rho,\I)$-indistinguishable},  for $\rho>0$ and infinite subset $\I \subseteq \N$, \wrt an ensemble of finite distributions $Z=\set{Z_\kappa=(X_\kappa,Y_\kappa, T_\kappa)}_{\pk\in\N}$ over $\zo \times\zo \times \zs$, if
	$$\REAL^{Z,\Fc} \cindist_{\rho,\I} \FSC^{Z,\Fc}.$$   	
\end{definition}
That is, for sufficiently large $\kappa\in \II$, the forecasted  and real distributions are  $\rho$ indistinguishable for poly-time distinguishers. In \cref{sec:two:ExistenceOfDIst},  we prove our main  result for two-sided forecasters.

\def\thmTwoIndist
{
		For every  ensemble of finite distributions  $Z = \set{Z_\kappa=(X_\kappa,Y_\kappa,T_\kappa)}_{\kappa \in \N}$ over $\zo \times \zo \times \zs$,  $\rho>0$ and an infinite $\I \subseteq\N$, there exists a two-sided forecaster $\Fc$ and an infinite set $\I' \subseteq \I$, such that $\Fc$ is  $(\rho,\I')$-indistinguishable  \wrt $Z$.
}
\begin{theorem}[Existence of indistinguishable two-sided forecaster]\label{thm:two:Indist}
\thmTwoIndist
\end{theorem}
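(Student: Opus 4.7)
The plan is to reduce the two-sided case to the one-sided machinery of \cref{sec:OneSidedForcasters} via a hybrid decomposition, combined with an iterative ``price improvement'' scheme that mirrors the argument implicit in \cref{lemma:DistinguisherImplyImprove}. I will first introduce a two-sided price functional built from one-sided prices, then show via a hybrid that any distinguisher yields an improvement in \emph{one} of the three conditional forecasts, and finally iterate; boundedness of the price forces termination at a forecaster satisfying the conclusion on some infinite $\I' \subseteq \I$.

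First I would define the two-sided price as
\[ \Price_{Z_\kappa}(\Fc_\kappa) = \Ex\bigl[(p_\Ac(T_\kappa;R) - X_\kappa)^2\bigr] + \Ex\bigl[(p_{\Bc\mid X_\kappa}(T_\kappa;R) - Y_\kappa)^2\bigr], \]
where $(p_\Ac, p_{\Bc\mid 0}, p_{\Bc\mid 1}) = \Fc_\kappa(T_\kappa; R)$. By \cref{dfn:minPrice}, the first term is pointwise minimized by $\Pr[X_\kappa=1 \mid T_\kappa=t]$, while the second, decomposed conditionally on $X_\kappa=b$ and weighted by $\Pr[X_\kappa=b]$, is a weighted sum of two one-sided prices for forecasting $Y_\kappa$ in the conditional ensembles $\{(Y_\kappa, T_\kappa) \mid X_\kappa = b\}_{\kappa}$. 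The total price is bounded by $2$ for every forecaster.

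Second, given any \pptm $\Dc$ distinguishing $\REAL^{Z,\Fc}_\kappa$ from $\FSC^{Z,\Fc}_\kappa$ by $\rho$ on an infinite $\I_0 \subseteq \I$, I would introduce the hybrid $H_\kappa = (X_\kappa, \tilde Y_\kappa, T_\kappa, R_\kappa)$ with $\tilde Y_\kappa \sim U_{p_{\Bc\mid X_\kappa}(T_\kappa;R)}$. Pigeonholing on an infinite subset, either (Case A) $\Dc$ separates $\REAL$ from $H$ by $\rho/2$, or (Case B) $\Dc$ separates $H$ from $\FSC$ by $\rho/2$. In Case B, both $\tilde Y$ and $Y'$ arise by applying the same efficient sampler to the first coordinate, so absorbing the sampler into $\Dc$ gives a distinguisher between $(X_\kappa, T_\kappa, R_\kappa)$ and $(U_{p_\Ac(T_\kappa;R)}, T_\kappa, R_\kappa)$; applying the one-sided \cref{lemma:DistinguisherImplyImprove} to $p_\Ac$ against the base ensemble $\{(X_\kappa, T_\kappa)\}$ yields a forecaster whose first summand drops by at least $(\rho/2)^2$. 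In Case A, a further two-way pigeonhole on $b \in \zo$ (and absorbing the known bit $X_\kappa=b$ into $\Dc$) produces a one-sided distinguisher on the conditional ensemble $\{(Y_\kappa, T_\kappa) \mid X_\kappa = b\}$, and \cref{lemma:DistinguisherImplyImprove} applied there improves $p_{\Bc\mid b}$, dropping the corresponding weighted conditional summand by $\Omega(\rho^2 \cdot \Pr[X_\kappa=b])$.

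Finally I would iterate: starting from an arbitrary forecaster $\Fc^{(0)}$ and $\I^{(0)} = \I$, if the current $\Fc^{(i)}$ is $(\rho, \I')$-indistinguishable for some infinite $\I' \subseteq \I^{(i)}$ we are done; otherwise the previous paragraph produces $\Fc^{(i+1)}$ together with an infinite $\I^{(i+1)} \subseteq \I^{(i)}$ on which the two-sided price drops by a constant $c(\rho)>0$. Since the price lies in $[0,2]$, the process halts after at most $2/c(\rho)$ rounds, and the terminal forecaster together with the terminal infinite subset gives the required $(\rho, \I')$-indistinguishability. The main obstacle I anticipate is the bookkeeping in Case A: the one-sided improvement happens inside the conditional ensemble $\{(Y, T)\mid X=b\}$ of mass $\Pr[X_\kappa=b]$, so I must confirm that the $\rho/2$ advantage localizes to a single $b$ with non-negligible mass and that the resulting one-sided price drop translates, with the correct conditional weight, into a drop of the two-sided price; a secondary subtlety is that the iteration passes to nested infinite subsets, so one must verify that the ``no infinite $\I'$ on which we can improve'' halt condition is precisely the contrapositive of the lemma and hence exactly equivalent to the desired $(\rho, \I')$-indistinguishability.
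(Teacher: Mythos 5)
Your proposal is correct and follows essentially the same route as the paper: the same weighted one-sided price decomposition, the same hybrid $(X,\tilde Y,T,R)$ with a three-way pigeonhole feeding into the one-sided improvement lemma, and the same bounded-price iteration (the paper merely packages the iteration as a separate ``existence of a $\mu$-optimal forecaster'' lemma and then derives indistinguishability by contradiction). The obstacle you flag in Case A resolves exactly as the paper does it: the localized advantage appears multiplied by $\Pr[X_\kappa=b]$, so an advantage-times-mass bound of $\rho/3$ forces $\Pr[X_\kappa=b]>\rho/3$, and the weighted conditional price drop is then at least $(\rho/3)^3$.
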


\paragraph{The price of  two-sided forecasters.}
Similarly to the one-sided case, we associate a price function  with  a given  ensemble of finite distributions of the above form and a two-sided forecaster,   which intuitively measures the quality of  the forecaster (a smaller price corresponds to a better forecast).

\begin{notation}\label{not:two:TwotoOne}
	Given a two-sided forecaster $\Fc$ and $i\in \set{1,2,3}$, we let $\Fc^i(t)= \Fc(t)_i$. Given an ensemble of finite distributions $Z = \set{Z_\kappa=(X_\kappa,Y_\kappa,T_\kappa)}_{\pk\in\N}$  over $\zo \times \zo \times \zs$,  let $Z^1 = \set{Z^1_\kappa=(X_\kappa,T_\kappa)}_{\pk\in\N}$, $Z^2 = \set{Z^2_\kappa=((Y_\kappa,T_\kappa) \mid X_\kappa=0)}_{\pk\in\N}$ and $Z^3 = \set{Z^3_\kappa=((Y_\kappa,T_\kappa) \mid X_\kappa=1)}_{\pk\in\N}$.\footnote{Following the convention we coin in \cref{sec:notations}, $Z^2_\kappa$ [\resp  $Z^3_\kappa$] is arbitrarily defined if   $\pr{X_\kappa=0} =0$ [\resp $\pr{X_\kappa=0} =1$].}
\end{notation}
Namely, $\Fc^i$ is the one-sided forecaster induced by $\Fc$ for $Z^i$. The price of a two-sided forecaster \wrt an ensemble of finite distributions $Z$, is defined as the weighted sum of the price of its induced one-sided forecasters \wrt the relevant distributions.

\begin{definition}[Price of a two-sided forecasters]\label{dfn:two-sidedPrice}
 The \remph{price of a two-sided forecaster} \wrt an ensemble of finite distributions $Z=\set{Z_\kappa=(X_\kappa,Y_\kappa, T_\kappa)}_{\pk\in\N}$ over $\zo \times\zo \times \zs$, is defined for $\kappa \in \N$ by
\[\Price_{Z_\kappa}(\Fc_\kappa)= \Price_{Z^1_\kappa}(\Fc^1_\kappa) + \pr{X_\pk=0}\cdot\Price_{Z^2_\kappa}(\Fc^2_\kappa) + \pr{X_\pk=1}\cdot\Price_{Z^3_\kappa}(\Fc^3_\kappa) \]

for $\Price$ being the (one-sided) price function from \cref{dfn:one-sidedPrice}.
\end{definition}

The following relation between price and indistinguishability, proven in \cref{sec:two:DistinguisherGivesImprovedForecaster}, is a main tool in the proof  of \cref{thm:two:Indist}.

\def\DistToPriceLemmaTwo
{
	Let $\Fc$ be a two-sided  forecaster and let  $Z = \set{Z_\kappa=(X_\kappa,Y_\kappa,T_\kappa)}_{\pk\in\N}$ be an  ensemble  of finite distributions over $\zo \times \zo \times\zs$. If there exists a \pptm  $\Dc$ and infinite $\II \subseteq \N$ such that
	\[ \size{\pr{\Dc_\kappa(\REAL^{Z,\Fc}_\kappa)=1} - \pr{\Dc_\kappa(\FSC^{Z,\Fc}_\kappa)=1}} > \rho \]
	for every $\pk\in \II$,	then there exists an infinite subset $\II'\subseteq \II$ and a two-sided forecaster $\Fc'$ such that $\Price_{Z_\pk}(\Fc'_\pk)<\Price_{Z_\pk}(\Fc_\pk)-(\rho/3)^3$ for every $\pk\in \II'$.
}

\begin{lemma}[Distinguishability to   price improvement, two-sided case]\label{cor:one:DistinguisherGivesImprovedForecaster}
	\DistToPriceLemmaTwo
\end{lemma}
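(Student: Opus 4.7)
The plan is to reduce the two-sided statement to the one-sided \cref{lemma:DistinguisherImplyImprove} via a short hybrid argument, and then to rebuild $\Fc'$ by replacing exactly one of the three one-sided components $\Fc^1,\Fc^2,\Fc^3$ of $\Fc$ by a strictly better one-sided forecaster (keeping the other two untouched). The resulting price gain on the corresponding summand of \cref{dfn:two-sidedPrice} then yields the overall improvement on $\Price_{Z_\pk}(\Fc_\pk)$.

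First I would introduce a single intermediate hybrid
\[ H_0=\REAL^{Z,\Fc}_\pk=(X,Y,T,R),\quad H_1=(X,\widetilde Y,T,R),\quad H_2=\FSC^{Z,\Fc}_\pk=(X',Y',T,R), \]
where in $H_1$ the bit $\widetilde Y$ is drawn, conditionally on $(X,T,R)$, from $U_{\Fc^{X+2}(T;R)}$ (so $H_1$ uses the real $X$ but the forecaster's conditional prediction for $Y$). The key observation is that $H_2$ can be sampled from $H_1$ by replacing $X$ with an independent draw from $U_{\Fc^1(T;R)}$ and \emph{re-resampling} $\widetilde Y$ from $U_{\Fc^{X'+2}(T;R)}$ using this new $X'$. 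By the triangle inequality, for every $\pk\in\II$ either $\size{\pr{\Dc(H_0)=1}-\pr{\Dc(H_1)=1}}\ge\rho/2$ or $\size{\pr{\Dc(H_1)=1}-\pr{\Dc(H_2)=1}}\ge\rho/2$. By pigeonhole, one of these two cases holds on an infinite subset $\II''\subseteq\II$.

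In the easier Case B, I define $\Dc'(x,t,r;r^\Dc,r')$ to first sample $y\gets U_{\Fc^{x+2}(t;r)}$ using fresh coins $r'$, and then return $\Dc(x,y,t,r;r^\Dc)$. By construction $\Dc'(\REAL^{Z^1,\Fc^1}_\pk)\stackrel{d}{=}\Dc(H_1)$ and $\Dc'(\FSC^{Z^1,\Fc^1}_\pk)\stackrel{d}{=}\Dc(H_2)$, so $\Dc'$ distinguishes the one-sided real/forecasted pair for $(Z^1,\Fc^1)$ with advantage $\rho/2$. \cref{lemma:DistinguisherImplyImprove} then provides, on some infinite $\II'\subseteq\II''$, an improved $\Fc'^1$ with $\Price_{Z^1_\pk}(\Fc'^1_\pk)<\Price_{Z^1_\pk}(\Fc^1_\pk)-(\rho/2)^2$; setting $\Fc'=(\Fc'^1,\Fc^2,\Fc^3)$ improves the full price by at least $\rho^2/4>(\rho/3)^3$. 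In Case A, $H_0$ and $H_1$ agree on $(X,T,R)$, so after fixing a sign one gets $q_0\delta_0+q_1\delta_1\ge\rho/2$ where $q_b=\pr{X_\pk=b}$ and $\delta_b$ is the signed conditional distinguishing advantage on $X=b$. Pigeonholing over $b\in\zo$ (and, a second time, over $\II''$) I obtain an infinite $\II'$ and a fixed $b^\ast$ with $q_{b^\ast}\delta_{b^\ast}\ge\rho/4$. On $Z^{b^\ast+2}$, the derived distinguisher $\Dc_{b^\ast}(y,t,r;\cdot)=\Dc(b^\ast,y,t,r;\cdot)$ achieves advantage $\delta_{b^\ast}\ge\rho/4$ (using $q_{b^\ast}\le 1$), and \cref{lemma:DistinguisherImplyImprove} yields $\Fc'^{b^\ast+2}$ improving $\Price_{Z^{b^\ast+2}}(\Fc^{b^\ast+2})$ by at least $\delta_{b^\ast}^2$. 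The two-sided price then drops by $q_{b^\ast}\cdot\delta_{b^\ast}^2\ge(\rho/4)\cdot\delta_{b^\ast}\ge\rho^2/16>(\rho/3)^3$ (using $\rho\le 1$).

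The main obstacle I anticipate is bookkeeping in Case A: the one-sided improvement lives on $Z^{b^\ast+2}$ and is multiplied by the (possibly tiny) weight $q_{b^\ast}$ in the two-sided price, so I need the product $q_{b^\ast}\delta_{b^\ast}^2$ (and not just $\delta_{b^\ast}^2$) to dominate $(\rho/3)^3$. The trick is to bound it as $(q_{b^\ast}\delta_{b^\ast})\cdot\delta_{b^\ast}$ and exploit that $q_{b^\ast}\le 1$ forces $\delta_{b^\ast}\ge q_{b^\ast}\delta_{b^\ast}$. Minor care is also needed because $\Fc^2$ or $\Fc^3$ is only defined when the corresponding $q_b$ is positive, but the case $q_{b^\ast}=0$ never triggers since it would contradict $q_{b^\ast}\delta_{b^\ast}\ge\rho/4$.
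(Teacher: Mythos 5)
Your proposal is correct and follows essentially the same route as the paper's proof: the same hybrid $H_1=(X,U_{\Fc^{X+2}(T;R)},T,R)$, the same reduction to the one-sided \cref{lemma:DistinguisherImplyImprove} via a case split and conditioning on $X=b$, and the same reassembly of $\Fc'$ by swapping out one coordinate of $\Fc$. The one point to tighten is your Case A bookkeeping: the one-sided lemma, applied black-box with the uniform parameter $\rho/4$, only gives an improvement of $(\rho/4)^2$ on $Z^{b^\ast+2}$, and hence a weighted improvement of $q_{b^\ast}(\rho/4)^2\ge\rho^3/64<(\rho/3)^3$; your claimed per-$\kappa$ improvement of $\delta_{b^\ast}(\kappa)^2$ is not the lemma's statement (the shift $\gamma$ in $\Fh_\gamma$ must be hardwired), but \cref{claim:DistinguisherImplyImprove} with $\gamma=\rho/4$ does give $2\gamma\delta_{b^\ast}-\gamma^2\ge(\rho/4)\delta_{b^\ast}$ per $\kappa$, so the weighted improvement is $(\rho/4)\,q_{b^\ast}\delta_{b^\ast}\ge\rho^2/16>(\rho/3)^3$ and your argument goes through — the paper sidesteps this entirely by using the asymmetric $\rho/3$ versus $2\rho/3$ split, which makes the black-box application of the one-sided lemma land exactly on $(\rho/3)^3$.
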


\paragraph{Optimal forecasters.}
Roughly speaking, an optimal forecaster \wrt   distribution $Z$, has the lowest  price among all other forecasters \wrt this distribution.  The existence of such forecasters for any ensemble of finite distributions, is the corner stone for the proof of our main result.

\begin{definition}[Optimal forecasters]\label{def:two:optimalForcasters}
A two-sided  forecaster $\Fc$ is  {\sf $(\mu,\I)$-optimal \wrt an ensemble of finite distributions $Z=\set{Z_\kappa=(X_\kappa,Y_\kappa, T_\kappa)}_{\pk\in\N}$ over $\zo \times\zo \times \zs$}, for $\mu > 0$ and infinite $\I \subseteq \N$, if  for every two-sided forecaster $\Fc'$ and   every sufficiently large $\kappa \in \II$, $\Price_{Z_\kappa}(\Fc_\kappa) \le \Price_{Z_\kappa}(\Fc'_\kappa)+\mu$.
	
\end{definition}

The following fact, proven in \cref{sec:two:optimalForcaster}, is a main tool in the proof  of \cref{thm:two:Indist}.

\def\ThmExistOPtimalFOrecaster
{
	For every ensemble of finite distributions  $Z = \set{Z_\kappa=(X_\kappa,Y_\kappa,T_\kappa)}_{\pk\in\N}$ over $\zo \times \zo \times \zs$, $\mu>0$ and infinite $\I\subseteq\N$, there exists a two-sided forecaster $\Fc$ and an infinite set $\I' \subseteq \I$, such that $\Fc$ is  $(\mu,\I')$-optimal \wrt $Z$.
}

\begin{lemma}[Existence of optimal forecaster]\label{lemLtwo:OptimalForcsterExist}
\ThmExistOPtimalFOrecaster
\end{lemma}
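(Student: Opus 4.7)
The plan is a bounded price-descent argument: since the price of any two-sided forecaster \wrt $Z_\kappa$ lies in a universally bounded interval, one cannot iteratively decrease it by $\mu$ more than a constant number of times, and the forecaster at which the descent gets stuck is the desired $(\mu, \I')$-optimal forecaster.

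First I would establish a uniform bound: for every two-sided forecaster $\Fc$ and every $\kappa \in \N$, $\Price_{Z_\kappa}(\Fc_\kappa) \in [0, 2]$. Indeed, each of the three summands in \cref{dfn:two-sidedPrice} is an expectation of a quantity of the form $(\Fc^i(T_\kappa) - V)^2$ with $\Fc^i(T_\kappa) \in [0,1]$ and $V \in \set{0,1}$, hence lies in $[0,1]$, and the three weights $1, \pr{X_\kappa=0}, \pr{X_\kappa=1}$ sum to $2$.

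Next I would construct $\Fc$ and $\I'$ by iterative refinement. Initialize $\Fc^{(0)}$ to be the trivial \pptm forecaster outputting the constant triplet $(0,0,0)$, and set $\I^{(0)} := \I$. At step $i$, given $(\Fc^{(i)}, \I^{(i)})$ with $\I^{(i)}$ infinite, check whether $\Fc^{(i)}$ is $(\mu, \I^{(i)})$-optimal \wrt $Z$. If so, halt and return $(\Fc, \I') := (\Fc^{(i)}, \I^{(i)})$. Otherwise, the negation of \cref{def:two:optimalForcasters} supplies a \pptm forecaster $\Fc^{(i+1)}$ such that infinitely many $\kappa \in \I^{(i)}$ satisfy $\Price_{Z_\kappa}(\Fc^{(i+1)}_\kappa) < \Price_{Z_\kappa}(\Fc^{(i)}_\kappa) - \mu$; take $\I^{(i+1)}$ to be the (infinite) set of exactly those witness indices and continue.

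To see that the process halts, I would telescope along the nested subsets: for every $\kappa \in \I^{(i)}$, one has $0 \le \Price_{Z_\kappa}(\Fc^{(i)}_\kappa) < \Price_{Z_\kappa}(\Fc^{(0)}_\kappa) - i\mu \le 2 - i\mu$, so the descent must terminate within $\lceil 2/\mu \rceil$ iterations, yielding a \pptm $\Fc$ and an infinite $\I' \subseteq \I$ such that $\Fc$ is $(\mu, \I')$-optimal. The only subtle point is definitional: \cref{def:two:optimalForcasters} uses ``sufficiently large $\kappa \in \I$'' rather than ``all $\kappa \in \I$'', so its negation supplies merely an infinite witness set rather than a cofinite one---defining $\I^{(i+1)}$ as precisely this witness set is what makes the strict per-$\kappa$ inequality hold on all of $\I^{(i+1)}$ and drives the telescope. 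Everything else is routine bookkeeping.
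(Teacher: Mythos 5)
Your proposal is correct and follows essentially the same route as the paper: a bounded price-descent over nested infinite index sets, using the $[0,2]$ bound on the price and the observation that the negation of $(\mu,\I)$-optimality yields an infinite witness subset on which a competitor improves the price by $\mu$, so the descent terminates within $2/\mu$ steps. The definitional subtlety you flag (infinite rather than cofinite witness set) is handled identically in the paper.
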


\begin{remark}
We emphasize that the proof of \cref{lemLtwo:OptimalForcsterExist} is what restricts us to constant distinguishability error in the main theorem. The rest of the proof goes through for any non-negligible error.
\end{remark}

\subsubsection{Existence of Indistinguishable Forecaster}\label{sec:two:ExistenceOfDIst}
In this section we prove our main result for two-sided forecasters.

\begin{theorem}[Existence of indistinguishable two-sided forecaster, restatement of \cref{thm:two:Indist}]\label{thm:two:IndistRes}
	\thmTwoIndist
\end{theorem}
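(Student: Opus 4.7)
The plan is to combine the two main technical ingredients already stated in the paper: the existence of an optimal forecaster (\cref{lemLtwo:OptimalForcsterExist}) and the distinguishability-to-price-improvement lemma (\cref{cor:one:DistinguisherGivesImprovedForecaster}). The high-level idea is that if an optimal forecaster could still be distinguished from the real distribution with advantage $\rho$, then the latter lemma would convert the distinguisher into a forecaster whose price is strictly lower by a fixed constant, contradicting optimality. So optimality buys indistinguishability for free, once the constants are matched.

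Concretely, I would fix $\rho > 0$ and an infinite $\I \subseteq \N$, and choose any constant $\mu$ with $0 < \mu < (\rho/3)^3$ (say $\mu = (\rho/3)^3 / 2$). Applying \cref{lemLtwo:OptimalForcsterExist} with this $\mu$ and the set $\I$ yields an infinite subset $\I' \subseteq \I$ and a two-sided forecaster $\Fc$ that is $(\mu, \I')$-optimal \wrt $Z$. I claim that this same $\Fc$ is already $(\rho, \I')$-indistinguishable \wrt $Z$, which establishes the theorem.

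To verify the claim, suppose for contradiction that $\Fc$ is not $(\rho,\I')$-indistinguishable \wrt $Z$. Unrolling \cref{def:two:Indist} via \cref{def:CompInd} produces a \pptm $\Dc$ and an infinite subset $\I'' \subseteq \I'$ such that
\[
\size{\pr{\Dc_\kappa(\REAL^{Z,\Fc}_\kappa) = 1} - \pr{\Dc_\kappa(\FSC^{Z,\Fc}_\kappa) = 1}} > \rho
\]
for every $\kappa \in \I''$. Feeding $\Dc$ together with $\I''$ into \cref{cor:one:DistinguisherGivesImprovedForecaster} produces a further infinite subset $\I''' \subseteq \I''$ and a two-sided forecaster $\Fc'$ with $\Price_{Z_\kappa}(\Fc'_\kappa) < \Price_{Z_\kappa}(\Fc_\kappa) - (\rho/3)^3$ for every $\kappa \in \I'''$. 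Since $(\rho/3)^3 > \mu$ and $\I''' \subseteq \I'$ is infinite, this exhibits infinitely many $\kappa \in \I'$ at which $\Price_{Z_\kappa}(\Fc_\kappa) > \Price_{Z_\kappa}(\Fc'_\kappa) + \mu$, directly contradicting the $(\mu, \I')$-optimality of $\Fc$ \wrt $Z$. Hence no such distinguisher exists, and $\Fc$ is $(\rho, \I')$-indistinguishable as desired.

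The real difficulty is not in this reduction but in the two lemmas it invokes; once those are in hand, essentially the only nontrivial step is bookkeeping the nested infinite subsets $\I \supseteq \I' \supseteq \I'' \supseteq \I'''$ and matching constants so that the price gap $(\rho/3)^3$ strictly dominates the optimality slack $\mu$. As the paper notes, it is precisely the iterative existence argument behind \cref{lemLtwo:OptimalForcsterExist} that forces $\mu$ (and hence $\rho$) to be a positive constant and restricts us to an infinite subsequence of $\N$; improving either of these caveats in the statement would require strengthening \cref{lemLtwo:OptimalForcsterExist} rather than altering the reduction above.
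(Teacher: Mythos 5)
Your proposal is correct and follows essentially the same argument as the paper's proof: fix an optimal forecaster via \cref{lemLtwo:OptimalForcsterExist} and derive a contradiction from \cref{cor:one:DistinguisherGivesImprovedForecaster} if a distinguisher exists. The only cosmetic difference is that the paper sets $\mu=(\rho/3)^3$ exactly rather than strictly below it; both choices yield the contradiction since the price improvement is strict.
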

\begin{proof}
	The proof  follows by the existence of an optimal forecaster, and by the fact that a distinguisher can be used to improve a forecaster.
	
	Let $\mu=(\rho/3)^3$. By \cref{lemLtwo:OptimalForcsterExist} there exists an infinite subset $\II'\subseteq\II$ and a forecaster $\Fc$ that is $(\mu,\I')$-optimal \wrt $Z$. We now claim that $\Fc$ is also $(\rho,\I')$-indistinguishable  \wrt $Z$, as desired.
	
		 Assume toward contradiction, that there exists an infinite subset $\II''\subseteq\II'$ and a \pptm $\Dc$ such that $\size{\pr{\Dc_\kappa(\REAL^{Z,\Fc}_\kappa)=1} - \pr{\Dc_\kappa(\FSC^{Z,\Fc}_\kappa)=1}} > \rho$, for every $\pk\in\II''$. By \cref{cor:one:DistinguisherGivesImprovedForecaster} there exists an infinite subset $\hat{\II}\subseteq\II''$ and a forecaster $\Fh$ such that, $\Price_{Z_\pk}(\Fc_\pk)-\Price_{Z_\pk}(\Fh_\pk)>(\rho/3)^3=\mu$ for every $\pk\in\hat{\I}$. Since $\hat{\I}\subseteq\I'$, this is contradiction to the fact that $\Fc$ is $(\mu,\I')$-optimal.
\end{proof}

\subsubsection{Distinguishability to Price Improvement}\label{sec:two:DistinguisherGivesImprovedForecaster}
In this section we prove the following lemma.
\begin{lemma}[Distinguishability to   price improvement, two-sided case, restatement of \cref{cor:one:DistinguisherGivesImprovedForecaster}]\label{cor:two:DistinguisherGivesImprovedForecasterRes}
	\DistToPriceLemmaTwo
\end{lemma}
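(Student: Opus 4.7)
The plan is to reduce to the one-sided Lemma~\ref{lemma:DistinguisherImplyImprove} via a three-step hybrid argument that exchanges one of the three conditional probabilities in $\Fc_\kappa(T_\kappa;R_\kappa)=(\pa,\pbz,\pbo)$ at a time. For each $\kappa$, set
\begin{itemize}
\item $H_0 = \REAL^{Z,\Fc}_\kappa$;
\item $H_1 = (X_\kappa,\tilde{Y}_\kappa,T_\kappa,R_\kappa)$, where $\tilde{Y}_\kappa=Y_\kappa$ on $\set{X_\kappa=1}$ and $\tilde{Y}_\kappa\sim U_{\pbz}$ (sampled independently) on $\set{X_\kappa=0}$;
\item $H_2 = (X_\kappa,\hat{Y}_\kappa,T_\kappa,R_\kappa)$ with $\hat{Y}_\kappa\sim U_{\pbb{X_\kappa}}$ sampled independently;
\item $H_3 = \FSC^{Z,\Fc}_\kappa$.
\end{itemize}
A triangle argument and a pass to an infinite subset of $\II$ fix an $i\in\set{0,1,2}$ for which $\Dc$ distinguishes $H_i$ from $H_{i+1}$ with advantage $>\rho/3$ uniformly on that subset.

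Consider first $i\in\set{0,1}$; by symmetry take $i=0$, so $H_0$ and $H_1$ differ only on the event $\set{X_\kappa=0}$. The distinguishing advantage therefore factors as $\pr{X_\kappa=0}\cdot\Delta_0^{(\kappa)}>\rho/3$, where $\Delta_0^{(\kappa)}$ is the advantage of the \ppt algorithm $\Dc_0:=\Dc(0,\cdot)$ at separating $\REAL^{Z^2,\Fc^2}_\kappa$ from $\FSC^{Z^2,\Fc^2}_\kappa$. Because both factors lie in $[0,1]$, the product bound forces both $\pr{X_\kappa=0}>\rho/3$ and $\Delta_0^{(\kappa)}>\rho/3$ uniformly in $\kappa$. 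Invoking the one-sided Lemma~\ref{lemma:DistinguisherImplyImprove} on $(Z^2,\Fc^2,\Dc_0)$ with parameter $\rho/3$ yields an infinite $\II'$ and an improved $(\Fc^2)'$ with one-sided price drop exceeding $(\rho/3)^2$. Defining $\Fc'$ by replacing only the second coordinate of $\Fc$ by $(\Fc^2)'$, the weighted definition of the two-sided price gives a drop of $\pr{X_\kappa=0}\cdot\bigl(\Price_{Z^2_\kappa}(\Fc^2_\kappa)-\Price_{Z^2_\kappa}((\Fc^2)'_\kappa)\bigr)>(\rho/3)\cdot(\rho/3)^2=(\rho/3)^3$. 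The case $i=1$ is identical after swapping the roles of $(0,\pbz,Z^2,\Fc^2)$ and $(1,\pbo,Z^3,\Fc^3)$. For $i=2$, the conditional law of the second coordinate given the first coincides in $H_2$ and $H_3$ (both equal $U_{\pbb{\cdot}}$), so the \ppt algorithm that on input $(x,t,r)$ internally samples $b\sim U_{\pbb{x}}$ using $\Fc$ on $(t,r)$ and returns $\Dc(x,b,t,r)$ distinguishes $(X_\kappa,T_\kappa,R_\kappa)$ from $(U_{\Fc^1_\kappa(T_\kappa;R_\kappa)},T_\kappa,R_\kappa)$ with advantage $>\rho/3$; applying Lemma~\ref{lemma:DistinguisherImplyImprove} to $(Z^1,\Fc^1)$ produces an improved $(\Fc^1)'$ whose unweighted contribution to the two-sided price drops by more than $(\rho/3)^2>(\rho/3)^3$.

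The main obstacle is the weighted book-keeping in the cases $i\in\set{0,1}$: both the probability weight $\pr{X_\kappa=b}$ and the one-sided price drop feed multiplicatively into the two-sided drop, and each must be bounded below by a constant fraction of $\rho$. Using a three-way (rather than two-way) hybrid is precisely what makes the arithmetic match the target bound $(\rho/3)^3$: splitting $\rho$ into three equal pieces of size $\rho/3$ forces both $\pr{X_\kappa=b}>\rho/3$ and $\Delta_b^{(\kappa)}>\rho/3$ simultaneously (each factor being at most $1$), so the product $\pr{X_\kappa=b}\cdot(\Delta_b^{(\kappa)})^2$ delivered by the one-sided lemma already exceeds $(\rho/3)^3$. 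The only remaining detail is that the auxiliary distinguisher in the case $i=2$ is \ppt, which is immediate from the efficiency of $\Fc$ and $\Dc$.
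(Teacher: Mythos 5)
Your proposal is correct and follows essentially the same route as the paper: reduce to the one-sided Lemma~\ref{lemma:DistinguisherImplyImprove} via a hybrid argument plus pigeonhole over $\II$, extract both $\pr{X_\kappa=b}>\rho/3$ and a one-sided advantage $>\rho/3$ from the product bound, and patch the improved coordinate back into $\Fc$ using the weighted price definition. The only (cosmetic) difference is that you use a three-step hybrid splitting the $Y$-replacement over $\set{X=0}$ and $\set{X=1}$, whereas the paper uses a single hybrid $H=(X,U_{\Ac(X,T,R)},T,R)$ and then decomposes the resulting $2\rho/3$ gap over $b\in\zo$; the two bookkeepings yield identical bounds.
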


We use the following lemma, that  allows us to reduce the proof of the above lemma  to the single-sided case.

\begin{lemma}[Two-sided distinguisher implies one-sided distinguisher]\label{lemma:two:DistinguisherLOneDistinguisher}
	Let $\Fc$ be a two-sided  forecaster, and let  $Z = \set{Z_\kappa=(X_\kappa,Y_\kappa,T_\kappa)}_{\kappa\in\N}$ be an ensemble of finite distributions over $\zo \times \zo \times\zs$. Let $\Fc^1,\Fc^2,\Fc^3$  and  $Z^1,Z^2,Z^3$, be the one-sided forecasters and the ensembles of finite distributions defined according to  \cref{not:two:TwotoOne} \wrt $\Fc$ and $Z$. Assume there exists  \pptm  $\Dc$ and an infinite $\II \subseteq \N$,  such that for every $\kappa \in \II$,
	\[ \size{\pr{\Dc_\kappa(\REAL^{Z,\Fc}_\kappa)=1} - \pr{\Dc_\kappa(\FSC^{Z,\Fc}_\kappa)=1}} > \rho \]
	 Then there exists a \pptm  $\Dc'$ and an infinite subset $\II'\subseteq \II$  such that one of the following hold:
	 \begin{itemize}
	 	\item For every $\pk\in\II'$, $\size{\pr{\Dc'_\pk(\REAL^{Z^{1},\Fc^{1}}_\kappa)=1 } - \pr{\Dc'_\kappa(\FSC^{Z^{1},\Fc^{1}}_\kappa)=1}}  > \rho/3$.
	 	
	 	\item There exists $b\in\zo$, such that for every $\pk\in\II'$,
	 	$$\size{\pr{\Dc'_\pk(\REAL^{Z^{2+b},\Fc^{2+b}}_\kappa)=1 } - \pr{\Dc'_\kappa(\FSC^{Z^{2+b},\Fc^{2+b}}_\kappa)=1}}\cdot\pr{X=b}  > \rho/3.$$
	 \end{itemize}
 where the distributions $\REAL^{Z^i,\Fc^i}$ and $\FSC^{Z^i,\Fc^i}$ above, are the ``one-sided'' distributions according to \cref{dfn:one:RealandForcasted}.
\end{lemma}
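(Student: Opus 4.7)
The plan is to prove the lemma by a three-way hybrid argument that reduces the two-sided distinguishing advantage into three pieces: one that changes only the $X$-coordinate and two that change only the $Y$-coordinate conditioned on $X = 0$ and $X = 1$ respectively. The triangle inequality then forces one of the three pieces to exceed $\rho/3$ on an infinite subset, and in each case we will exhibit a corresponding one-sided $\Dc'$.

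Concretely, I would introduce the single intermediate hybrid $H_\kappa = (X_\kappa, \tilde Y_\kappa, T_\kappa, R_\kappa)$ where $\tilde Y_\kappa$ is sampled from $U_{\Fc^{2+X_\kappa}(T_\kappa;R_\kappa)}$ using fresh internal randomness. By construction, $H_\kappa$ and $\REAL^{Z,\Fc}_\kappa$ agree on $(X_\kappa, T_\kappa, R_\kappa)$ and differ only in the distribution of the second coordinate given the first, while $H_\kappa$ and $\FSC^{Z,\Fc}_\kappa$ agree on the conditional distribution of $Y$ given $(X, T, R)$ but differ in the distribution of $X$. Writing
\[
\pr{\Dc_\kappa(\REAL^{Z,\Fc}_\kappa)=1} - \pr{\Dc_\kappa(\FSC^{Z,\Fc}_\kappa)=1} \;=\; A_\kappa + \pr{X_\kappa=0}\cdot B_\kappa^{0} + \pr{X_\kappa=1}\cdot B_\kappa^{1},
\]
where $A_\kappa = \pr{\Dc_\kappa(H_\kappa)=1}-\pr{\Dc_\kappa(\FSC^{Z,\Fc}_\kappa)=1}$ and $B_\kappa^{b} = \pr{\Dc_\kappa(b,Y_\kappa,T_\kappa,R_\kappa)=1 \mid X_\kappa=b} - \pr{\Dc_\kappa(b,\tilde Y_\kappa,T_\kappa,R_\kappa)=1 \mid X_\kappa=b}$, the triangle inequality yields $|A_\kappa| + \pr{X_\kappa=0}\cdot|B_\kappa^{0}| + \pr{X_\kappa=1}\cdot|B_\kappa^{1}| > \rho$ for all $\kappa \in \II$. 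By a three-way pigeonhole over $\II$ we extract an infinite $\II' \subseteq \II$ on which one fixed summand exceeds $\rho/3$ uniformly.

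It remains to translate each case into a one-sided distinguisher. If $|A_\kappa| > \rho/3$ on $\II'$, define $\Dc'(v,t,r)$ to use its own coins to sample $\tilde y \leftarrow U_{\Fc^{2+v}(t;r)}$ and output $\Dc(v,\tilde y,t,r)$; a direct check shows that $\Dc'$ on $\REAL^{Z^1,\Fc^1}_\kappa = (X_\kappa,T_\kappa,R_\kappa)$ distributes as $\Dc$ on $H_\kappa$, while $\Dc'$ on $\FSC^{Z^1,\Fc^1}_\kappa$ distributes as $\Dc$ on $\FSC^{Z,\Fc}_\kappa$ (since sampling $X' \sim U_{\Fc^1(T_\kappa;R_\kappa)}$ and then $\tilde y \sim U_{\Fc^{2+X'}(T_\kappa;R_\kappa)}$ exactly reproduces $U_{\Fc(T_\kappa;R_\kappa)}$), so $\Dc'$ realizes the first alternative. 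If instead $\pr{X_\kappa=b}\cdot|B_\kappa^{b}| > \rho/3$ on $\II'$ for some $b \in \zo$, define $\Dc'(y,t,r) = \Dc(b,y,t,r)$; its advantage on $\REAL^{Z^{2+b},\Fc^{2+b}}_\kappa$ versus $\FSC^{Z^{2+b},\Fc^{2+b}}_\kappa$ is exactly $|B_\kappa^{b}|$ by the definition of the conditional ensembles in \cref{not:two:TwotoOne}, which upon multiplication by $\pr{X_\kappa=b}$ gives the second alternative.

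The only mildly subtle step is the bookkeeping in Case A, i.e.\ checking that the way $\FSC^{Z,\Fc}$ generates $(X',Y')$ via $U_{\Fc(T;R)}$ indeed coincides with the two-stage sampling $X' \sim U_{\Fc^1(T;R)}$ followed by $Y' \sim U_{\Fc^{2+X'}(T;R)}$; this is immediate from \cref{nota:two:ValDis}. Everything else is a one-line triangle inequality and the standard pigeonhole to pass to an infinite subsequence, so no further technical obstacles arise.
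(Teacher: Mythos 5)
Your proposal is correct and follows essentially the same route as the paper's proof: the hybrid $H_\kappa=(X_\kappa,\tilde Y_\kappa,T_\kappa,R_\kappa)$ is exactly the paper's hybrid, your Case-A distinguisher is the paper's $\Dc^1$ (which resamples the second coordinate from the forecast), and your Case-B distinguisher is the paper's $\Dc^{2+b}$, with the same pigeonhole step to fix one case on an infinite subset. The only cosmetic difference is that you perform a single three-way triangle-inequality split, whereas the paper first splits into $\rho/3$ and $2\rho/3$ and then splits the latter over $b\in\zo$; these are equivalent.
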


\cref{lemma:two:DistinguisherLOneDistinguisher} is proven below, but we first use it for proving \cref{cor:two:DistinguisherGivesImprovedForecasterRes}.

\newcommand{\Ih}{\widehat{\I}}
\begin{proof}[Proof of \cref{cor:two:DistinguisherGivesImprovedForecasterRes} ]
	
		By \cref{lemma:two:DistinguisherLOneDistinguisher}, there exists a \pptm $\Dc'$, an infinite set $\I'\subseteq\II$ and a fixed $\is\in[3]$, such that  for every $\pk\in\I'$:
			\begin{align}\label{eq:two:DistinguisherGivesImprovedForecasterRes}
			\size{\pr{\Dc'_\pk(\REAL^{Z^\is,\Fc^\is}_\kappa)=1 } - \pr{\Dc'_\kappa(\FSC^{Z^\is,\Fc^\is}_\kappa)=1}}  > \rho/3
			\end{align}
		   and  if $\is\in\set{2,3}$, then also
		 \begin{align}\label{eq:two:DistinguisherGivesImprovedForecasterRes2}
		 \pr{X_\pk=(\is-2)}>\rho/3
		 \end{align}
		 By \cref{lemma:DistinguisherImplyImprove} and \cref{eq:two:DistinguisherGivesImprovedForecasterRes}, there exist a one-sided forecaster $\Fh$ and an infinite set $\Ih\subseteq\I'$, such that for every $\pk\in\Ih$:
		
		 \begin{align*}
		 \Price_{Z^\is_\pk}(\Fc^\is_\pk)-\Price_{Z^\is_\pk}(\Fh_\pk)>(\rho/3)^2
		 \end{align*}
		
Consider the  two-sided forecaster $\Fc'$ resulting by replacing $\Fc^\is$ with $\Fh$. That is,   $\Fc'(t) = (\Fc'^1(t),\Fc'^2(t),\Fc'^3(t))$, for $\Fc'^i  = \Fh$ if $i=\is$, and $\Fc'^i  = \Fc^i$  otherwise. The definition of the price function yields the following for every $\pk\in\Ih$:

If $\is=1$, then
			\begin{align*}
			\Price_{Z_\pk}(\Fc'_\pk)-\Price_{Z_\pk}(\Fc_\pk)=      \Price_{Z^1_\pk}(\Fc^1_\pk)-\Price_{Z^1_\pk}(\Fh_\pk) >   ({\rho}/{3})^2
				\end{align*}
and if  $\is\in\set{2,3}$,  then
		\begin{align*}
		\Price_{Z_\pk}(\Fc'_\pk)-\Price_{Z_\pk}(\Fc_\pk)&=      \pr{X_\pk=(\is-2)}\cdot(\Price_{Z^\is_\pk}(\Fc^\is_\pk)-\Price_{Z^\is_\pk}(\Fh_\pk))\\
		& >   \pr{X_\pk=(\is-2)}\cdot({\rho}/{3})^2 \\
		&\ge({\rho}/{3})^3,
		\end{align*}
where the last inequality holds by \cref{eq:two:DistinguisherGivesImprovedForecasterRes2}. This concludes the proof.
\end{proof}

\paragraph{Proof of \cref{lemma:two:DistinguisherLOneDistinguisher}.}
\begin{proof}[Proof of \cref{lemma:two:DistinguisherLOneDistinguisher}]

We use the following    algorithm to  define three different distinguishers, and then prove that at least one of them can serve as $\Dc'$.
	
	\begin{algorithm}[$\Ac$]\label{alg:sim}
		\item Input: Security parameter $1^\pk$ and $(v,t,r) \in \zo \times \zs \times \zs$.
		
		\item Operation: If $v=0$, output $\Fc^2_\pk(t,r)$, else,  output $\Fc^3_\pk(t,r)$.
	
	\end{algorithm}

	By definition,
	\begin{align*}
	\FSC^{Z,\Fc}_\pk=(X'_\pk,U_{\Ac_\kappa(X'_\pk,T_\pk,R_\pk)},T_\pk,R_\pk)
	\end{align*}
	for $X_\kappa' = U_{\Fc^1(T_\pk;R_\pk)}$. Let $\Dc^1_\pk(v,t,r)=\Dc_\pk(v,U_{\Ac_\pk(v,t,r)},t,r)$ , $\Dc^2_\pk(v,t,r)=\Dc_\pk(0,v,t,r)$ and $\Dc^3_\pk(v,t)=\Dc_\pk(1,v,t,r)$. We conclude the proof using  the following claim, proven below.

	\begin{claim}\label{claim:2-sided Distinguisher imply 1-sided Distinguisher}
		Let $\kappa \in \N$ be such that
		$\size{\pr{\Dc_\kappa(\REAL^{Z,\Fc}_\kappa)=1} - \pr{\Dc_\kappa(\FSC^{Z,\Fc}_\kappa)=1}}  > \rho$. Then (at least)  one of the following holds,
		\begin{enumerate}
			 	\item $\size{\pr{\Dc^1_\pk(\REAL^{Z^{1},\Fc^{1}}_\pk)=1} - \pr{\Dc^1_\kappa(\FSC^{Z^{1},\Fc^{1}}_\pk)=1}}  > \rho/3$,
		
			 	\item $\size{\pr{\Dc^{2}_\pk(\REAL^{Z^{2},\Fc^{2}}_\kappa)=1 } - \pr{\Dc^{2}_\kappa(\FSC^{Z^{2},\Fc^{2}}_\kappa)=1}}\cdot\pr{X=0}  > \rho/3$, or
			 	
			 	\item $\size{\pr{\Dc^{3}_\pk(\REAL^{Z^{3},\Fc^{3}}_\kappa)=1 } - \pr{\Dc^{3}_\kappa(\FSC^{Z^{3},\Fc^{3}}_\kappa)=1}}\cdot\pr{X=1}  > \rho/3$.
		\end{enumerate}
	\end{claim}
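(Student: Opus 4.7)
The plan is a two-step hybrid argument via the intermediate distribution
\[ H_\pk \;=\; \bigl(X_\pk,\; U_{\Ac_\pk(X_\pk,T_\pk,R_\pk)},\; T_\pk,\; R_\pk\bigr), \]
obtained from $\REAL^{Z,\Fc}_\pk$ by keeping the real $X_\pk$ but replacing $Y_\pk$ with a biased bit drawn according to $\Ac$'s output; by the definition of $\Ac$ in \cref{alg:sim}, this bit is distributed as $U_{\Fc^{2+X_\pk}(T_\pk;R_\pk)}$. Assuming \wlg (negate $\Dc$ if needed) that the signed advantage $\pr{\Dc_\pk(\REAL^{Z,\Fc}_\pk)=1}-\pr{\Dc_\pk(\FSC^{Z,\Fc}_\pk)=1}$ exceeds $\rho$, I will telescope this quantity as
\[ \bigl(\pr{\Dc_\pk(\REAL^{Z,\Fc}_\pk)=1}-\pr{\Dc_\pk(H_\pk)=1}\bigr) \;+\; \bigl(\pr{\Dc_\pk(H_\pk)=1}-\pr{\Dc_\pk(\FSC^{Z,\Fc}_\pk)=1}\bigr). \]

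For the first summand ($\REAL$ vs.\ $H$), the plan is to condition on the value of $X_\pk$: on the event $\{X_\pk=b\}$ the tuple $(Y_\pk,T_\pk,R_\pk)$ is by definition distributed as $\REAL^{Z^{2+b},\Fc^{2+b}}_\pk$, while the hybrid's second coordinate becomes $U_{\Fc^{2+b}(T_\pk;R_\pk)}$, so the triple $(U_{\Fc^{2+b}(T_\pk;R_\pk)},T_\pk,R_\pk)$ is precisely $\FSC^{Z^{2+b},\Fc^{2+b}}_\pk$. Since $\Dc^{2+b}(v,t,r)=\Dc_\pk(b,v,t,r)$, this first summand equals
\[ \sum_{b\in\zo}\pr{X_\pk=b}\cdot\bigl(\pr{\Dc^{2+b}_\pk(\REAL^{Z^{2+b},\Fc^{2+b}}_\pk)=1}-\pr{\Dc^{2+b}_\pk(\FSC^{Z^{2+b},\Fc^{2+b}}_\pk)=1}\bigr). \]
For the second summand ($H$ vs.\ $\FSC$), the two distributions differ only in their first coordinate ($X_\pk$ versus $X'_\pk=U_{\Fc^1(T_\pk;R_\pk)}$), and the second coordinate is obtained from the first by applying the biased-bit map $v\mapsto U_{\Ac_\pk(v,T_\pk,R_\pk)}$. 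Since $\Dc^1(v,t,r)=\Dc_\pk(v,U_{\Ac_\pk(v,t,r)},t,r)$ (with the internal biased-bit sample drawn from $\Dc^1$'s own coins), this second summand equals
\[ \pr{\Dc^1_\pk(\REAL^{Z^1,\Fc^1}_\pk)=1}-\pr{\Dc^1_\pk(\FSC^{Z^1,\Fc^1}_\pk)=1}. \]

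Writing $a_i$ for the signed advantage of $\Dc^i_\pk$ on the matching pair, the telescope becomes $a_1 + \pr{X_\pk=0}\cdot a_2 + \pr{X_\pk=1}\cdot a_3 > \rho$, so by the pigeonhole principle at least one of the three summands $a_1$, $\pr{X_\pk=0}\cdot a_2$, $\pr{X_\pk=1}\cdot a_3$ exceeds $\rho/3$; taking absolute values yields the corresponding item of the conclusion. The only subtle point I anticipate is the book-keeping around $\Dc^1$: the internal biased-bit sample $U_{\Ac_\pk(v,t,r)}$ must be drawn from $\Dc^1$'s own randomness (the string $r$ is already consumed by $\Fc$), which is legitimate because $\Ac$ is \pptm\ and the sample is a single coin flip, so $\Dc^1$ remains a valid \pptm. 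Everything else amounts to verifying the conditional distribution identities above, which hold because $R_\pk$ is chosen independently of $(X_\pk,Y_\pk,T_\pk)$.
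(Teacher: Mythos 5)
Your proposal is correct and follows essentially the same route as the paper: the identical hybrid $H_\pk=(X_\pk,U_{\Ac_\pk(X_\pk,T_\pk,R_\pk)},T_\pk,R_\pk)$, the same identification of the $H$-vs-$\FSC$ gap with $\Dc^1$ on $(\REAL^{Z^1,\Fc^1},\FSC^{Z^1,\Fc^1})$, and the same conditioning on $X_\pk=b$ for the $\REAL$-vs-$H$ gap. The only (immaterial) difference is bookkeeping: you telescope into three signed terms and apply pigeonhole once with threshold $\rho/3$, whereas the paper first splits into $\rho/3$ and $2\rho/3$ and then splits the latter over $b\in\zo$.
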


	 By  \cref{claim:2-sided Distinguisher imply 1-sided Distinguisher} and the Pigeonhole principle, there exists $i\in[3]$ and an infinite set $\II'\subseteq \II$, such that  $\Dc^i$ satisfies the \ith  item  in the claim for every $\pk\in\II'$. Thus, the proof follows by taking $\Dc'=\Dc^i$.
\end{proof}

\paragraph{Proof of  \cref{claim:2-sided Distinguisher imply 1-sided Distinguisher}.}
\begin{proof}[Proof of \cref{claim:2-sided Distinguisher imply 1-sided Distinguisher}]
	
	Fix $\pk\in \N$ that satisfies the condition of the claim, and omit it from the following text to avoid clutter. By definition,
	\begin{align}\label{eq:distance for 2-sided Distinguisher}
	\size{ \pr{\Dc((X,Y,T,R)= \REAL^{Z,\Fc})=1}-\pr{\Dc((X',U_{\Ac(X',T,R)},T,R) = \FSC^{Z,\Fc})=1}} > \rho
	\end{align}

	Consider the hybrid distribution
	$$H=(X,U_{\Ac(X,T,R)},T,R)$$ resulting from replacing  the ``forecasted'' $X'$ in $\FSC^{Z,\Fc}$ with the ``real'' value  $X$.
	 By \cref{eq:distance for 2-sided Distinguisher},
	\begin{align*}
	\size{\pr{\Dc(\REAL^{Z,\Fc})=1}-\pr{\Dc(H)=1} +\pr{\Dc(H)=1}-\pr{\Dc(\FSC^{Z,\Fc})=1}}> \rho
\end{align*}
	and thus either
	\begin{align}
	&\size{\pr{\Dc(H)=1}-\pr{\Dc(\FSC^{Z,\Fc})=1}}>\rho/3, \text{ or}\label{eq:dist for x} \\
	&\size{\pr{\Dc(\REAL^{Z,\Fc})=1}-\pr{\Dc(H)=1}}>2\rho/3 \label{eq:dist for y}
	\end{align}

	Suppose  \cref{eq:dist for x} holds. By definition, $ \Dc^1(X,T,R) \equiv \Dc(H)$ and $ \Dc^1(X',T,R) \equiv \Dc(\FSC^{Z,\Fc})$. Thus,
	$\size{ \pr{\Dc^1(X,T,R)=1}-\pr{\Dc^1(X',T,R)=1}}>\rho/3$, which concludes the proof since $(X',T,R)=\FSC^{Z^{1},\Fc^{1}}$ and $(X,T,R) =\REAL^{Z^{1},\Fc^{1}}$.
	
	Suppose now that \cref{eq:dist for y} holds. It follows that for some  $b\in \zo$
	\begin{align*}
	\pr{X=b}\cdot\size{\pr{\Dc(b,Y,T,R)=1\mid X=b}-\pr{\Dc(b,U_{\Ac(b,T,R)},T,R)=1\mid X=b}} >\rho/3
	\end{align*}
	
	Since, by definition,
	\begin{align*}
	(U_{\Ac(b,T,R)},T,R)|_{X=b} \equiv (U_{\Fc^{2+b}(T;R)},T,R)|_{X=b} \equiv \FSC^{Z^{2+b},\Fc^{2+b}}
	\end{align*}
    and
    	\begin{align*}
    	(Y,T,R)|_{X=b} \equiv \REAL^{Z^{2+b},\Fc^{2+b}}
    \end{align*}	
    	
	It follows that $\pr{X=b}\cdot\size{\pr{\Dc^{2+b}(\REAL^{Z^{2+b},\Fc^{2+b}})}-\pr{\Dc^{2+b}(\FSC^{Z^{2+b},\Fc^{2+b}})}}>\rho/3$, concluding the proof.
\end{proof}		

\subsubsection{Existence of Optimal Forecasters}\label{sec:two:optimalForcaster}
In this section we prove the following lemma.
\begin{lemma}[Existence of optimal forecaster, restatement of \cref{lemLtwo:OptimalForcsterExist}]\label{lemLtwo:OptimalForcsterExistRes}
	\ThmExistOPtimalFOrecaster
\end{lemma}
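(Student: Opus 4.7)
The plan is to prove \cref{lemLtwo:OptimalForcsterExistRes} via an iterative price-reduction argument, essentially formalizing the sketch at the end of \cref{sec:intro:Technique}. The starting observation is that the two-sided price is uniformly bounded by a constant: since a forecaster outputs values in $[0,1]^3$ and $V \in \zo \subseteq [0,1]$, each one-sided term $\Price_{Z^i_\kappa}(\Fc^i_\kappa) = \Ex[(\Fc^i_\kappa(T_\kappa;R_\kappa) - V)^2]$ lies in $[0,1]$, and the three weights $1,\pr{X_\kappa=0},\pr{X_\kappa=1}$ sum to $2$. Hence $\Price_{Z_\kappa}(\Fc_\kappa) \in [0,2]$ for every forecaster $\Fc$ and every $\kappa \in \N$, with this bound \emph{independent} of $\kappa$.

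First I would unpack the negation of $(\mu,\II)$-optimality from \cref{def:two:optimalForcasters}: if $\Fc$ is not $(\mu,\II)$-optimal \wrt $Z$, then there is a forecaster $\Fc'$ for which the clause ``for every sufficiently large $\kappa \in \II$, $\Price_{Z_\kappa}(\Fc_\kappa) \le \Price_{Z_\kappa}(\Fc'_\kappa) + \mu$'' fails. Failing means precisely that the set $\II^\ast := \{\kappa \in \II : \Price_{Z_\kappa}(\Fc_\kappa) > \Price_{Z_\kappa}(\Fc'_\kappa) + \mu\}$ is itself infinite; on this infinite subset, $\Fc'$ strictly improves $\Fc$ by more than $\mu$ pointwise. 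Next I would set up the iteration. Let $\Fc^{(0)}$ be an arbitrary fixed forecaster (say, the one that always outputs $(0,0,0)$) and set $\II^{(0)} := \I$. For $i \ge 0$, if $\Fc^{(i)}$ is $(\mu,\II^{(i)})$-optimal \wrt $Z$, halt and return $(\Fc,\II') := (\Fc^{(i)},\II^{(i)})$; otherwise, apply the negation above to obtain a forecaster $\Fc^{(i+1)}$ and an infinite set $\II^{(i+1)} \subseteq \II^{(i)}$ with $\Price_{Z_\kappa}(\Fc^{(i+1)}_\kappa) < \Price_{Z_\kappa}(\Fc^{(i)}_\kappa) - \mu$ for every $\kappa \in \II^{(i+1)}$, and continue. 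Telescoping yields $\Price_{Z_\kappa}(\Fc^{(i)}_\kappa) < 2 - i \mu$ for every $\kappa \in \II^{(i)}$, and since the price is nonnegative the process must halt at some step $i^\ast < 2/\mu + 1$. The returned $\Fc := \Fc^{(i^\ast)}$ is $(\mu,\II')$-optimal by construction, $\II' \subseteq \I$ by transitivity of the nested inclusions, and $\Fc$ is \ppt as a composition of finitely many \pptm forecasters.

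The main (in fact only) obstacle is a qualitative one: the number of iterations is $\Theta(1/\mu)$, and the sequence of forecasters is produced \emph{non-constructively}, by iterated invocation of the negation of optimality. For a \emph{constant} $\mu > 0$ this is harmless, since only constantly many \pptm's are composed and the final forecaster is itself \ppt. The argument, however, does not extend to $\mu = \mu(\kappa) = o(1)$: then the number of improvement steps would grow with $\kappa$, and we could not guarantee a single \ppt forecaster optimal on an infinite subset. This is the sole source of the constant-distinguishing-advantage restriction flagged in the remark immediately preceding \cref{sec:two:ExistenceOfDIst} and appearing in \cref{thm:Sim,thm:forecaster,thm:main}; the remainder of the machinery in this paper (in particular \cref{lemma:DistinguisherImplyImprove,cor:one:DistinguisherGivesImprovedForecaster,lemma:two:DistinguisherLOneDistinguisher}) would go through for any non-negligible $\mu$.
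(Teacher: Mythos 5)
Your proposal is correct and is essentially the paper's own argument: both bound the price by $2$, iterate the negation of $(\mu,\II)$-optimality to produce a nested chain of infinite sets and forecasters whose price drops by more than $\mu$ at each step, and conclude termination within $O(1/\mu)$ steps (the paper tracks this with an explicit bound $\nu_i = 2 - (i-1)\mu$ where you telescope directly). Your closing remark about why the argument is confined to constant $\mu$ also matches the paper's own remark following the lemma statement.
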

\begin{proof}
Let $\cF$ denote the set of all  forecasters. Consider the following iterative process:
\begin{description}
\item[Initialization:] We start by picking some $\Fc^{(1)} \in \cF$, and let $\II_1=\I$, and $\nu_1=2$.

\item[Step $i$:] (start with Step $1$)
\begin{enumerate}
\item At the beginning of step $i$ we hold $\Fc^{(i)} \in \Fc$ and an infinite set $\II_i \subseteq \N$, such that  $\Price_{Z_\kappa}(\Fc^{(i)}_\kappa) \le \nu_i$  for every $\kappa \in \II_i$. (Note that this holds trivially for $i=1$, because the price function of a forecaster is bounded from above by $2$).

\item If  exists $\Fh \in \cF$ and an infinite subset $\II' \subseteq \II_i$, such that
    \[ \Price_{Z_\kappa}(\Fh_\kappa) < \Price_{Z_\kappa}(\Fc^{(i)}_\kappa) -\mu, \]
    for every $\kappa \in \II'$, set $\Fc^{(i+1)}=\Fh$, $\nu_{i+1}=\nu_i - \mu$ and $\II_{i+1}=\II'$, and continue to step $i+1$.

     Note that we indeed have that for every $\kappa \in \II_{i+1}$,
    \[\Price_{Z_\kappa}(\Fc^{(i+1)}_\kappa) < \Price_{Z_\kappa}(\Fc^{(i)}_\kappa) - \mu \le \nu_i - \mu =\nu_{i+1}. \]
    Therefore we meet the requirement at the beginning of step $i+1$.
\item Otherwise, we have that for every $\Fh \in \cF$, there are only finitely many $\kappa \in \II_i$, for which
    \[ \Price_{Z_\kappa}(\Fh_\kappa) < \Price_{Z_\kappa}(\Fc^{(i)}_\kappa) -\mu. \]
    This means that for every sufficiently large $\kappa \in \II_i$,
    \[ \Price_{Z_\kappa}(\Fh_\kappa) \ge  \Price_{Z_\kappa}(\Fc^{(i)}_\kappa) -\mu. \]

    It follows that $\Fc^{(i)}$ is $(\mu,\II_i)$-optimal \wrt $Z$, and we obtain an optimal forecaster.
\end{enumerate}
\end{description}

Noting that at every step $i$, if we continue to the next step, then $\nu_{i+1} \le \nu_{i} - \mu$. However, at every step $i$, it is trivial that $\nu_i \le 2$. This is because, the price of of a forecaster is bounded by $2$. It follows that after at most $2/\mu$ iterations, we will obtain an infinite  set $\II'$ and a  forecaster $\Fc$ that is $(\mu,\II')$-optimal \wrt $Z$, as required.
\end{proof}

\begin{remark}[on the generality of the above argument]
It is instructive to note that we have used no specific properties of the price function or of the set $\cF$ and the argument will work just the same for every choice of price function, and every class $\cF$ of functions.
\end{remark}

\section{Correlated Forecaster to Key Agreement}\label{sec:ForecastingToKA}
In this section we show how to use a protocol that has a correlated indistinguishable forecaster  to construct a key-agreement protocol. The core of the reduction is a new information theoretic key-agreement protocol, that we can apply in the computational setting using an indistinguishable forecaster (recall that this approach is explained in the introduction).

\subsection{Non-oblivious Key Agreement  from Correlated Distributions}\label{sec:nonObliviousKA}

Key-agreement protocols in the information theoretic setting assume that  two (honest) parties $\Ac$ and $\Bc$, and an adversary (eavesdropper) $\Ec$, receive (possibly correlated) random variables $X$, $Y$ and $T$, respectively. The goal of the parties  is to interact, so that their final outputs will be identical, and statistically close to a uniform distribution even conditioned on $T$ and the transcript of their interaction. Note that in this setting, the honest parties  do not see $T$. This is in contrast to the computational setting, where we imagine that $T$ is the transcript of some earlier protocol, and is available to the honest parties.

We will now consider an information theoretic setting which we refer to as a non-oblivious. In this setting, the honest parties $\Ac$ and $\Bc$ receive inputs $X'$ and $Y'$ respectively, and in addition they also receive $T$ (for this reason we refer to this setup as a non-oblivious, since the parties are not oblivious to the transcript $T$). The adversary $\Ec$ is unbounded, and receives (only) $T$. Loosely speaking, this setting corresponds to the following setup:  a protocol $\pi$ was run on input $1^\kappa$ generating transcript $T$, and the parties'  outputs are $X$ and $Y$ respectively. We consider a simulation of that protocol (in the sense of Section \ref{sec:Classification}) that produces a triplet $(X',Y',T)$ that is somewhat indistinguishable from $(X,Y,T)$. Indeed, in this information theoretic setting, $\Ac$ and $\Bc$ receive $X'$ and $Y'$ respectively, and also receive access to $T$. The adversary $\Ec$ receives $T$.
There are several advantages in considering this scenario:
\begin{itemize}
\item $(X',Y')$ has information theoretic uncertainty given $T$, and so we can work in an information theoretic setting where $\Ec$ is unbounded.
\item The honest parties see $T$.
\item Moreover, the honest parties have access to a (\ppt) forecaster, which given $t$, allows them to compute all probabilities in the probability space $(X',Y')|_{T=t}$.
\end{itemize}

We now describe a key-agreement protocol in this setting. More precisely, in the protocol below, in addition to their inputs, parties are given access to a function $f:\zs \to [0,1]^3$ which on input $t$, produces a description of the probability space $(X',Y')|_{{T=t}}$. We will show that this protocol is a key-agreement that has perfect secrecy, and agreement that depends on the ``correlation distance'' of the forecasted distribution. A precise statement appears below. Later, we will ``pull back'' this protocol to the computational world, using an indistinguishable forecaster.

\begin{protocol}[Non-oblivious key-agreement protocol $\KA^f = (\Ac,\Bc)$]\label{proto:info-WKA}
	\item Common input $t\in \zs$.
	\item \Ac's private input: $x\in\zo$.
	\item \Bc's private input: $y\in\zo$.
	
	\item Oracle:  function $f\colon \zo^\ast\mapsto[0,1]^3$.
	
	\item Operation:
	
	\begin{enumerate}	
		\item Both parties compute $p=(p_1,p_2,p_3) = f(t)$.

		\item \Ac  samples $x'\la  U_{p_1}$.\label{proto:info-WKA:step1}
		
		\item If $x\neq x'$,  $\Ac$ outputs $x$. Otherwise it outputs a random bit.
		\item  $\Bc$ outputs $y$  if $p_3 >p_2$, and $(1-y)$ otherwise.
	\end{enumerate}
\end{protocol}
The following lemma relates the quality of the above protocol, as key agreement, to the ``correlation'' of its inputs  distribution.

Recall, that $U_{p =(p_1,p_2,p_3)}$ is a random variable over $\zo^2$ distributed according to $p$ (\ie $\pr{U_p = (x,y)} = \pr{U_{p_1} = x} \cdot \pr{U_{p_{x+2}} = y}$),   and that, see \cref{nota:ProdOfDis},  $\Prod(p)$ is  the description of the product of $U_p$ marginals (\ie $\Prod(p) = (p_1, (1-p_1) \cdot p_2+ p_1 \cdot p_3)$).
\begin{lemma}\label{lemma:info-ka}
	Let  $Z=(X,Y,T)$ be a triplet distributed over $\zo\times\zo\times\zs$, and let $f\colon \zs\mapsto \zo^3$ be such that
	$(X,Y)|_{ f(T) =t} \equiv U_{f(t)}$ for every $t\in \Supp(T)$.
Let  $(\Ac,\Bc)=\KA^f$ be the protocol as specified in \cref{proto:info-WKA} and let $\eta = \SD((X,Y,T),(U_{\Prod(f(T))},T))$. Then
	\begin{description}
		\item[Agreement:] $\pr{(\Ac(X),\Bc(Y))(T) = (b,b) \mbox{ for some $b\in \zo$}}=\half+\eta/2$.
		\item[Secrecy:] $\pr{(\Ac(X),\Bc(Y))(T) = (1,\cdot)\mid T=t}=  1/2$, for every $t\in\Supp(T)$.
	\end{description}
\end{lemma}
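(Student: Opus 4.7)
The plan is to fix an arbitrary $t \in \Supp(T)$ and perform the entire analysis conditioned on $T=t$, then average over $T$ to obtain the statement about $\eta$. Throughout the analysis, write $p = f(t) = (p_1,p_2,p_3)$ and use the hypothesis $(X,Y)\mid_{T=t} \equiv U_{f(t)}$ to read off the joint distribution on $\set{0,1}^2$: $\pr{X=0,Y=0\mid T=t}=(1-p_1)(1-p_2)$, $\pr{X=0,Y=1\mid T=t}=(1-p_1)p_2$, $\pr{X=1,Y=0\mid T=t}=p_1(1-p_3)$, $\pr{X=1,Y=1\mid T=t}=p_1 p_3$. Let $X' \from U_{p_1}$ be the coin sampled by $\Ac$ in \stepref{proto:info-WKA:step1}; by construction $X'$ is independent of $(X,Y)$ given $T=t$.

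For secrecy, the idea is the standard von-Neumann trick. Conditioned on $T=t$, the output of $\Ac$ equals $1$ precisely when either $X=X'$ and a fresh coin turns up $1$, or $X\ne X'$ and $X=1$. Writing this out, $\pr{\Ac(X)(T)=1\mid T=t} = \tfrac{1}{2}(p_1^2+(1-p_1)^2) + p_1(1-p_1) = \tfrac{1}{2}$, independently of $p_1$. This establishes the secrecy claim for every $t\in\Supp(T)$.

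For agreement, first compute $\pr{(\Ac(X),\Bc(Y))(T)=(b,b)\mid T=t}$ assuming $p_3>p_2$ (so $\Bc$ outputs $Y$). Enumerating the four values of $(X,Y)$ and using $\pr{\Ac(x)=x\mid X=x,T=t}= \tfrac{1}{2}+\tfrac{1}{2}\pr{X'\ne x}$ (with the complementary probability of outputting $1-x$), a direct expansion collapses to
\[
\pr{\Ac(X)=\Bc(Y)\mid T=t} \;=\; \tfrac{1}{2} + p_1(1-p_1)(p_3-p_2).
\]
The case $p_3\le p_2$ is symmetric: $\Bc$ outputs $1-Y$, and the same computation yields $\tfrac{1}{2}+p_1(1-p_1)(p_2-p_3)$. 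In both cases we get $\tfrac{1}{2}+p_1(1-p_1)|p_3-p_2|$. Next, relate this to $\eta$: by \cref{lemma:SD for joint}, $\eta = \Ex_{t\from T}\big[\SD(U_{f(t)},U_{\Prod(f(t))})\big]$, and a direct four-term calculation of $\SD(U_p,U_{\Prod(p)})$ (using that $U_p$ and $U_{\Prod(p)}$ share marginals, so all four signed differences have equal magnitude $p_1(1-p_1)|p_3-p_2|$) gives $\SD(U_p,U_{\Prod(p)}) = 2p_1(1-p_1)|p_3-p_2|$. Averaging over $T$ yields $\pr{\Ac(X)=\Bc(Y)} = \tfrac{1}{2} + \tfrac{\eta}{2}$, completing the proof.

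The main obstacle is purely bookkeeping: making sure that the ``$p_3$ vs.\ $p_2$'' case split in $\Bc$'s output, combined with the absolute value arising from $\SD(U_p,U_{\Prod(p)})$, matches up so that both cases contribute the same positive quantity $p_1(1-p_1)|p_3-p_2|$ to the agreement probability. Once that is verified, both claims fall out of elementary arithmetic.
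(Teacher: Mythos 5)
Your proposal is correct and follows essentially the same route as the paper's proof: the von-Neumann trick for secrecy, the case split on $p_3$ versus $p_2$ yielding conditional agreement $\tfrac12+p_1(1-p_1)\lvert p_2-p_3\rvert$, the identity $\SD(U_p,U_{\Prod(p)})=2p_1(1-p_1)\lvert p_2-p_3\rvert$, and averaging over $T$ via \cref{lemma:SD for joint}. The only difference is organizational — the paper decomposes the agreement event by $X=X'$ versus $X\ne X'$, while you enumerate the four values of $(X,Y)$ directly — and both computations check out.
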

\begin{proof}
Let $X'$ denote the value of $x'$  sampled by $\Ac(X,T)$ (\stepref{proto:info-WKA:step1}), and let $(O_\Ac,O_\Bc)= (\Ac(X),\Bc(Y))(T)$. We use the following claims, proven below.

\begin{claim}\label{claim:info-ka:1}
	$\pr{O_\Ac=1\mid T=t}=  1/2$ for every $t\in\Supp(T)$.
\end{claim}	


\begin{claim}\label{claim:info-ka:33}
	$\pr{O_\Ac = O_\Bc \land  X= X'} =\pr{X= X'}/2$.
\end{claim}

\begin{claim}\label{claim:info-ka:4}
 $\pr{O_\Ac = O_\Bc \land  X\ne X'} =\half(\pr{X\ne X'} +  \eta)$.
 \end{claim}
The secrecy part immediately follows from \cref{claim:info-ka:1}. For the agreement part, using \cref{claim:info-ka:33,claim:info-ka:4} we get that
	\begin{align*}
	\pr{O_\Ac = O_\Bc}=&\pr{O_\Ac = O_\Bc \land  X= X'}+ \pr{O_\Ac = O_\Bc \land  X\ne X'}\\
					  =& \half(1-\pr{X\ne X'} + \pr{X\ne X'} +  \eta)\nonumber\\
					  =& \half+\eta/2.
	\end{align*}
\end{proof}
We now proceed to  proving  \cref{claim:info-ka:1,claim:info-ka:33,claim:info-ka:4}.

\begin{proof}[Proof of \cref{claim:info-ka:1}]
	Fix $t\in\Supp(T)$. Since  $\Ac$ outputs a uniform bit if $X= X'$, it holds that
	\begin{align*}
	\pr{O_\Ac=1\mid T=t,X= X' } = \half
	\end{align*}
  Hence, we can assume \wlg that $\pr{X\ne X' \mid T=t}\ne 0$, as otherwise by the above equality we are done. Note that
	\begin{align*}
	\pr{O_\Ac=1\mid T=t,X\ne X' } &= 	\pr{X=1\mid T=t,X\ne X' }\\
	&= \frac{\pr{X=1 \land X'=0 \mid T=t}}{\pr{X\neq X' \mid T=t }}\nonumber\\
		&= \frac{\pr{X=1 \land X'=0 \mid T=t}}{2\pr{X=1 \land X'=0 \mid T=t}}\nonumber\\
	&= \half,\nonumber
	\end{align*}
where  the penultimate equality holds since $\pr{X=1 \land X'=0 \mid T=t} = f(t)_1 \cdot (1- f(t)_1) =  \pr{X=0 \land X'=1 \mid T=t}$. 	It follows that,
	\begin{align*}
	\lefteqn{\pr{O_\Ac=1\mid T=t}}\\
	&= \pr{X\ne X' \mid T=t,} \cdot \pr{O_\Ac=1\mid T=t,X\ne X' } + \pr{X= X'\mid T=t,} \cdot \pr{O_\Ac=1\mid T=t,X= X' }\\
	& = \pr{X\ne X'\mid T=t,} \cdot\half+ \pr{X= X'\mid T=t,} \cdot \half = \half.
	\end{align*}
\end{proof}

\begin{proof}[Proof of \cref{claim:info-ka:33}]
Holds since  $\Ac$ outputs a uniform bit if $X= X'$.
\end{proof}

\begin{proof}[Proof of \cref{claim:info-ka:4}]
	We will show that for every $t\in \Supp(T)$,
	\begin{align*}
	\pr{O_\Ac = O_\Bc \land  X\ne X'\mid T=t} =  f(t)_1\cdot (1-f(t)_1)\cdot(1 +\size{f(t)_2-f(t)_3})
	\end{align*}
	We assume \wlg that $\pr{X\ne X' \mid T=t}\ne 0$, as otherwise $f(t)_1\in\zo$ and the above equality trivially holds. Fix $t\in\Supp(T)$, and let $p = (p_1,p_2,p_3)=f(t)$, we want to calculate $\pr{O_\Ac=O_\Bc\mid X\ne X',T=t}$.  The proof continues according to whether $p_3>p_2$.

	 Assuming $p_3>p_2$, then $\pr{O_\Ac=O_\Bc\mid X\ne X',T=t}=\pr{X=Y\mid T=t, X\ne X'}$. Thus
	\begin{align*}
	\pr{X=Y\mid T=t, X\ne X'} =&\pr{X=1\mid T=t, X\ne X'}\cdot\pr{Y=1\mid T=t,X=1} \\&+ \pr{X=0\mid T=t, X\ne X'}\cdot\pr{Y=0\mid T=t,X=0}\nonumber\\
	=&\half\cdot\pr{Y=1\mid T=t,X=1} + \half\cdot\pr{Y=0\mid T=t,X=0}\nonumber\\
	=&\half\cdot p_3 +\half\cdot(1-p_2)\nonumber\\
	=&\half(1+(p_3-p_2)). \nonumber
	\end{align*}
	
		Assuming $p_3 \le p_2$, then $\pr{O_\Ac=O_\Bc\mid X\ne X',T=t}=\pr{X\ne Y\mid T=t, X\ne X'}$. Thus
	\begin{align*}
	\pr{X\ne Y\mid T=t, X\ne X'} =&\pr{X=1\mid T=t, X\ne X'}\cdot\pr{Y=0\mid T=t,X=1} \\
	&+\pr{X=0\mid T=t, X\ne X'}\cdot\pr{Y=1\mid T=t,X=0}\nonumber\\
	=&\half\cdot\pr{Y=0\mid T=t,X=1} + \half\cdot\pr{Y=1\mid T=t,X=0}\nonumber\\
	=&\half\cdot (1-p_3) +\half\cdot p_2\nonumber\\
	=&\half(1+(p_2-p_3)).\nonumber
	\end{align*}
	Putting it together, $\pr{O_\Ac=O_\Bc\mid T=t, X\ne X'}  =\half(1+\size{p_2-p_3})$. Since, $\pr{X\ne X'\mid T=t}=2\cdot p_1\cdot(1-p_1)$, it follows that
	\begin{align}\label{eq:info-ka:4}
	\pr{O_\Ac = O_\Bc \land  X\ne X'\mid T=t}  = p_1\cdot(1-p_1)\cdot(1+\size{p_2-p_3})
	\end{align}
	We conclude that
	\begin{align*}
	\pr{O_\Ac = O_\Bc \land  X\ne X'}&=\eex{t\la T}{O_\Ac = O_\Bc \land  X\ne X\mid T=t}\\
					&=\eex{t\la T}{ f(t)_1\cdot (1-f(t)_1)\cdot(1+\size{f(t)_2-f(t)_3})}\\
					&=\eex{t\la T}{ f(t)_1\cdot (1-f(t)_1) +  f(t)_1\cdot (1-f(t)_1)\cdot\size{f(t)_2-f(t)_3}}\\
					&=\eex{t\la T}{ f(t)_1\cdot (1-f(t)_1)} + \eex{t\la T}{ f(t)_1\cdot (1-f(t)_1)\cdot\size{f(t)_2-f(t)_3}}\\
					&=\half \cdot \pr{X\ne X'} + \eex{t\la T}{ f(t)_1\cdot (1-f(t)_1)\cdot\size{f(t)_2-f(t)_3}}\\
				    &=\half \cdot \pr{X\ne X'} + \eta/2.
	\end{align*}
	The second equality is by \cref{eq:info-ka:4} and the last  one by   \cref{claim:info-ka:3}, given below.
\end{proof}

\begin{claim}\label{claim:info-ka:3}
	$\eex{t\la T}{(1-f(t)_1)\cdot f(t)_1\cdot\size{f(t)_2-f(t)_3}} = \mu/2$.
\end{claim}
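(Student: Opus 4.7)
The plan is to compute the claimed expectation by direct manipulation, using the hypothesis $(X,Y)|_{T=t} \equiv U_{f(t)}$ together with the formula for statistical distance of jointly distributed variables.

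First, I would apply \cref{lemma:SD for joint} to rewrite
\[ \eta = \SD\bigl((X,Y,T),(U_{\Prod(f(T))},T)\bigr) = \Exp_{t \gets T}\bigl[\SD\bigl((X,Y)|_{T=t},\,U_{\Prod(f(t))}\bigr)\bigr]. \]
Since by assumption $(X,Y)|_{T=t} \equiv U_{f(t)}$, this reduces the problem to showing that for every fixed $p = (p_1,p_2,p_3) \in [0,1]^3$,
\[ \SD\bigl(U_p,\,U_{\Prod(p)}\bigr) = 2\,p_1(1-p_1)\,\size{p_2 - p_3}, \]
from which the claim follows immediately by averaging over $t \gets T$ and noting that the factor of $2$ cancels with the $1/2$ in $\eta/2$.

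The core step is the direct calculation of this statistical distance. Writing $q = (1-p_1)p_2 + p_1 p_3$, the four atoms of $U_p$ and $U_{\Prod(p)}$ on $\zo^2$ are $\set{(1-p_1)(1-p_2),(1-p_1)p_2,p_1(1-p_3),p_1 p_3}$ and $\set{(1-p_1)(1-q),(1-p_1)q,p_1(1-q),p_1 q}$ respectively. A short computation shows that each of the four differences equals $\pm\,p_1(1-p_1)(p_3 - p_2)$; for instance, on $(0,0)$ the difference is $(1-p_1)(q - p_2) = (1-p_1)p_1(p_3 - p_2)$, and the other three are obtained analogously. Taking absolute values and halving the sum yields $2\,p_1(1-p_1)\,\size{p_2 - p_3}$ as desired.

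The main obstacle is just bookkeeping in the four-atom calculation; there is no conceptual difficulty, and no use of additional machinery beyond \cref{lemma:SD for joint} and \cref{nota:tValDis,nota:ProdOfDis}. Putting the pieces together,
\[ \Exp_{t \gets T}\bigl[f(t)_1(1-f(t)_1)\size{f(t)_2 - f(t)_3}\bigr] = \tfrac{1}{2}\,\Exp_{t \gets T}\bigl[\SD(U_{f(t)},U_{\Prod(f(t))})\bigr] = \eta/2, \]
which is the claim.
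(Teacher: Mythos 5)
Your proof is correct and follows essentially the same route as the paper's: both reduce, via \cref{lemma:SD for joint} and the hypothesis $(X,Y)|_{T=t}\equiv U_{f(t)}$, to the per-transcript identity $\SD(U_p,U_{\Prod(p)})=2p_1(1-p_1)\size{p_2-p_3}$ and then average over $t\gets T$. The only (immaterial) difference is that you verify this identity by enumerating the four atoms directly, whereas the paper conditions on the first coordinate and applies \cref{lemma:SD for joint} a second time; your computation of the four signed differences checks out.
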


\begin{proof}
	Since $ \eta = \SD((U_{f(T)},T),(U_{\Prod(f(T))},T)) = \eex{t\la T}{\SD(U_{f(t)},U_{\Prod(f(t))})}$,
	it suffices to prove that
	\begin{align}\label{eq:info-ka:3}
	\SD(U_{f(t)},U_{\Prod(f(t))})= 2\cdot f(t)_1\cdot (1-f(t)_1)\cdot \size{f(t)_2-f(t)_3}
	\end{align}
	 for every $t\in\Supp(T)$.

	 Fix such $t$ and  let $p=(p_1,p_2,p_3)= f(t)$, let  $q = p_1 p_3 + (1-p_1) p_2$,  let  $(X_t,Y_t)  = U_{p}$ and   $(X'_t,Y'_t)  = U_{\Prod(p)}$.   We assume \wlg that $p_1 \in (0,1)$, as otherwise  \cref{eq:info-ka:3}  holds trivially.  Compute
	\begin{align*}
	\SD((X_t,Y_t)|_{X_t=0},(X_t',Y_t')|_{X_t'=0})&=\SD((0,U_{p_2},t),(0,U_q,t))  \\
	&=\size{p_2-q} \nonumber\\
	&=\size{p_2-p_1\cdot p_3 -(1-p_1)\cdot p_2}\nonumber\\
	&=p_1\cdot\size{p_2-p_3}. \nonumber
	\end{align*}
	And similarly,
	\begin{align*}
	\SD((X_t,Y_t)|_{X_t=1},(X_t',Y_t')|_{X_t'=1})&=(1-p_1)\cdot\size{p_2-p_3}
	\end{align*} 									
	Since $X_t \equiv X'_t$, by \cref{lemma:SD for joint} we conclude that $\SD((X_t,Y_t),(X_t',Y_t'))=2\cdot p_1\cdot(1-p_1)\cdot\size{p_2-p_3}$.
\end{proof}

\begin{remark}
We remark that \cref{proto:info-WKA} is noninteractive and does not require communications between the parties. Note however that our final key-agreement protocol also includes amplification by Hollenstein \cite{Holenstein06b} which is interactive, and so the final protocol is interactive.
\end{remark}

\subsection{Key Agreement from Correlated Protocols}
In this section  we invoke   \cref{proto:info-WKA} on  a distribution induced by a protocol outputs and transcript, using the  forecaster for this distribution as the oracle $f$ used by \cref{proto:info-WKA}. The  resulting  protocol  inherits  the forecasted distribution indistinguishability and correlation, with small losses that depend on the indistinguishability parameter $\rho$.

Given a protocol $\pi$ and a forecaster \Fc for $\pi$, consider the following protocol



\begin{protocol}[key agreement protocol $\KA^{\Fc,\pi,m} = (\Ac,\Bc)$]\label{proto:compu-WKA}
	
	 \item Parameters: security parameter $1^\secParam$.

	\item Oracles: forecaster $\Fc$, next message  function of a   two-party single output protocol $\pi=(\Ah,\Bh)$ and a function $m\colon \N \mapsto \N$.

	\item Operation:
	\begin{enumerate}
		\item Parties interact in a random executions of $\pi(1^\secParam)$, with  $\Ac$ and $\Bc$ taking the role of $\Ah$ and $\Bh$, respectively. Let $(x,y,t)$, be the local outputs of $\Ac$ and $\Bc$, and the protocol transcript.

		\item \Ac sample a uniform string $r\la \zo^{m(\kappa)}$ and sends it to \Bc.

		\item The parties $\Ac$ and $\Bc$  interact in  $(\At(x),\Bt(y))(1^\pk,t,r)$, for $ (\At,\Bt)= \KA^{\Fc}$ being according to \cref{proto:info-WKA}, and party $\Ac$ plays the role of $\At$, and party $\Bc$ plays the role of $\Bt$.
		
		The parties  output their outputs in the above execution.
	\end{enumerate}
\end{protocol}


\begin{lemma}[Weak key-agreement protocol from correlated protocols]\label{lemma:WKA from correlation}
	Let $\pi$ be a \ppt two-party single-bit output protocol,  let $\Fc$ be a forecaster and let $m\in \poly$ be a bound on number of coins used by $\Fc$  on transcripts of  $\pi(1^\kappa)$ and let $Z_\kappa= (X_\kappa,Y_\kappa,T_\kappa)$ be the distribution of the parties' output and protocol transcripts induce by a random execution of $\pi(1^\kappa)$. Let $\I \subseteq \N$ and  $\rho,\eta>0$ be such that $\Fc$ is $(\rho,\I)$-indistinguishable and $(\eta,\I)$-correlated \wrt $Z= \set{Z_\kappa}_{\kappa\in \N}$,   then protocol $\KA^{\Fc,\pi,m}$, defined in \cref{proto:compu-WKA}, is an $(\rho,\eta/2-\rho)$-key agreement-protocol in $\II$.
\end{lemma}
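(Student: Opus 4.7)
The plan is to reduce the analysis of $\KA^{\Fc,\pi,m}$ to the information-theoretic setting of \cref{lemma:info-ka} via the indistinguishability of $\Fc$. First, I would observe that the transcript of $\KA^{\Fc,\pi,m}$ consists exactly of the pair $(t, r)$ -- the transcript of the inner execution of $\pi$ together with the string $r$ sent by $\Ac$ -- because the inner subprotocol $\KA^{\Fc}$ from \cref{proto:info-WKA} is non-interactive (both $\Ac$ and $\Bc$ output locally computed bits without sending any message). Hence every event of interest (the agreement event or an eavesdropper's success) depends only on the four random variables $(X_\kappa, Y_\kappa, T_\kappa, R_\kappa)$ together with the private coins used by $\Ac,\Bc$ in the inner subprotocol.

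Next, consider an \emph{idealized} variant $\widetilde{\KA}$ in which, after generating $T_\kappa$ and $R_\kappa$, the parties' inputs to the inner subprotocol are replaced by freshly drawn $(X'_\kappa, Y'_\kappa)\la U_{\Fc_\kappa(T_\kappa; R_\kappa)}$. I would apply \cref{lemma:info-ka} to $\widetilde{\KA}$, taking the ``common input'' of \cref{proto:info-WKA} to be $(t,r)$ and the oracle to be the deterministic function $f(t, r) := \Fc_\kappa(t; r)$: the hypothesis $(X', Y')|_{(T,R)=(t,r)} \equiv U_{f(t,r)}$ holds by construction, and the quantity playing the role of $\eta$ in that lemma is $\SD((X'_\kappa,Y'_\kappa,T_\kappa,R_\kappa),(U_{\Prod(\Fc(T_\kappa;R_\kappa))},T_\kappa,R_\kappa)) = \SD(\FSC^{\pi,\Fc}_\kappa, \PFSC^{\pi,\Fc}_\kappa)\ge \eta$ by the $(\eta, \I)$-correlation assumption. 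The lemma therefore yields that, for every $\kappa\in\I$, $\widetilde{\KA}$ has agreement at least $1/2 + \eta/2$ and perfect secrecy, meaning that $O_\Ac$ is uniform conditioned on any fixing of $(T_\kappa,R_\kappa)$.

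Finally, I would transfer these bounds to the real protocol via $(\rho,\I)$-indistinguishability of $\Fc$. For any \pptm $\Ec$, construct a \pptm distinguisher $\Dc$ that, given $(x, y, t, r)$, samples fresh coins to compute $(O_\Ac, O_\Bc)$ exactly as \cref{proto:info-WKA} prescribes, and outputs $1$ iff $\Ec(t, r) = O_\Ac$. By construction, $\Pr[\Dc_\kappa(\REAL^{\pi,\Fc}_\kappa)=1]$ equals $\Pr[\Ec(T_\kappa, R_\kappa) = O_\Ac]$ in the real protocol, while $\Pr[\Dc_\kappa(\FSC^{\pi,\Fc}_\kappa)=1] = 1/2$ by perfect secrecy of $\widetilde{\KA}$; indistinguishability then gives secrecy $\leq 1/2 + \rho$. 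An analogous distinguisher that accepts iff $O_\Ac = O_\Bc$ converts the agreement bound for $\widetilde{\KA}$ into agreement $\geq 1/2 + \eta/2 - \rho$ for the real protocol, yielding the claimed $(\rho, \eta/2 - \rho)$-key agreement in $\I$. The only subtlety -- not a genuine obstacle -- is checking that the distinguishers $\Dc$ are truly poly-time, which is immediate since $\Fc$ is \pptm, $m$ is a polynomial bound on its coin length, and all steps of \cref{proto:info-WKA} (sampling $x' \sim U_{p_1}$, comparing bits, and optionally flipping $y$) are trivially efficient given $\Fc(t;r)$.
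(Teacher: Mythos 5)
Your proposal is correct and follows essentially the same route as the paper's proof: apply \cref{lemma:info-ka} to the forecasted distribution (with the oracle being $\Fc$ viewed as a deterministic function of $(t,r)$, and the correlation assumption supplying the lower bound on the statistical-distance parameter), then transfer agreement and secrecy to the real execution via the $(\rho,\I)$-indistinguishability of $\Fc$, losing $\rho$ in each. The paper phrases this as running the same efficient protocol $\KA^{\INFO}$ on forecasted versus real inputs rather than as an explicit distinguisher construction, but the argument is identical in substance.
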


 \cref{lemma:WKA from correlation} is proven below, but we first use it for proving  the main result of this section.
\begin{theorem}[Key-agreement from correlated protocols, restatement of \cref{thm:fulKAfromCorrelation}]\label{thm:fulKAfromCorrelationRes}
	\CorToKAThm
\end{theorem}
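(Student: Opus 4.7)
The plan is to chain together the two black-box ingredients already in hand: the information-theoretic-flavored weak key-agreement construction (\cref{lemma:WKA from correlation}) and the Holenstein amplification theorem (\cref{thm:KeyAggAmp}). First, I would apply \cref{lemma:WKA from correlation} to $\pi$ and $\Fc$. Setting $m \in \poly$ to be an efficiently computable upper bound on the number of coins $\Fc$ uses on transcripts of $\pi(1^\kappa)$ (which exists since $\Fc$ is \ppt), the lemma yields that $\KA^{\Fc,\pi,m}$ is a $(\rho,\;\eta/2-\rho)$-key agreement in $\I$, i.e.\ it has secrecy parameter $s(\kappa)=\rho$ and agreement parameter $a(\kappa)=\eta/2-\rho$.

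Next I would verify that the parameters $s,a$ satisfy the hypothesis $s<a^2/10$ of \cref{thm:KeyAggAmp}. From the assumption $\eta>30\sqrt{\rho}$ we get $\eta/2>15\sqrt{\rho}$, and since $\eta\le 1$ forces $\rho<1/900<1$ (so in particular $\rho\le \sqrt{\rho}$), we obtain
\[
a \;=\; \eta/2 - \rho \;>\; 15\sqrt{\rho}-\sqrt{\rho} \;=\; 14\sqrt{\rho},
\]
hence $a^2/10 > 196\rho/10 > \rho = s$, as required. Both $s$ and $a$ are constants (and in particular poly-time computable), so the hypothesis of \cref{thm:KeyAggAmp} is met.

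Finally, I would invoke \cref{thm:KeyAggAmp} on $\KA^{\Fc,\pi,m}$ with the infinite set $\I$. The theorem's fully-black-box reduction, which is oblivious to $\I$, transforms this $(s,a)$-key agreement in $\I$ into a fully-fledged key-agreement in $\I$, concluding the proof.

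I do not foresee a real obstacle here: the two ingredients were designed precisely to compose in this way, and the only quantitative content is the constant $30$ in the hypothesis $\eta>30\sqrt{\rho}$, which is exactly what makes the Holenstein gap condition $s<a^2/10$ go through with slack. The substantive work of the theorem is already hidden inside \cref{lemma:WKA from correlation} (which instantiates the one-sided von-Neumann-style \cref{proto:info-WKA} in the computational regime via the forecaster) and inside \cref{thm:KeyAggAmp}; the statement being proven here is really just the composition of these two results.
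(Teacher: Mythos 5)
Your proposal is correct and matches the paper's proof exactly: the paper also derives the theorem by composing \cref{lemma:WKA from correlation} with \cref{thm:KeyAggAmp}. You additionally spell out the arithmetic verification that $\eta>30\sqrt{\rho}$ guarantees $s<a^2/10$, which the paper leaves implicit.
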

\begin{proof}
	The proof directly follows from \cref{lemma:WKA from correlation,thm:KeyAggAmp}.
\end{proof}

\begin{proof}[ Proof of \cref{lemma:WKA from correlation}]

Let $\KA^{\INFO}= (\Ac^\INFO,\Bc^\INFO)$ be the protocol defined by $\KA^{\INFO}_\pk = \KA^{\Fc_\pk}$, for $\KA^{\Fc}$ being according to   \cref{proto:info-WKA}.  Let  $\Zt = \set{\Zt_\pk = (\Xt_\pk,\Yt_\pk,T_\pk,R_\pk)= \FSC^{\pi,\Fc}_\pk}_{\kappa\in \N}$ and let $Z=\set{Z_\pk = (X_\pk,Y_\pk,T_\pk,R_\pk) = \REAL^{\pi,\Fc}_\pk}_{\kappa\in \N}$ be the real and forecasted distribution of $\Fc$ \wrt $Z$ (see \cref{dfn:RealandForcasted}). For $\pk \in \I$, let $(\Ot^\Ac_\pk,\Ot^\Bc_\pk)$ denote the parties' output in a random execution of  $(\Ac^\INFO(\Xt_\pk),\Bc^\INFO(\Yt_\pk))(1^\pk,T_\pk,R_\pk)$. By  \cref{lemma:info-ka},
 \begin{itemize}
 	\item $\pr{\Ot^\Ac_\pk = \Ot^\Bc_\pk}\ge \half+\eta/2$, and
 	
 	\item $\pr{\Ec(T_\pk,R_\pk)=\Ot^\Ac_\pk}=  1/2$  for every (even unbounded)  algorithm $\Ec$.
 \end{itemize}
Now let $(O^\Ac_\pk,O^\Bc_\pk)$ denote the parties' output in a random execution of  $(\Ac^\INFO(X_\pk),\Bc^\INFO(Y_\pk))(1^\kappa,T_\pk,R_\pk)$. Since $\KA^{\INFO}$ can be computed efficiently (recall that   $\Fc$ is \ppt), and since, by definition,  $Z \cindist_{\rho,\II}\Zt$, it follows that

 \begin{itemize}
 	\item $\pr{O^\Ac_\pk = O^\Ac_\pk}\ge\half+\eta/2-\rho$, for  large enough $\kappa\in \I$, and
 	\item  For every \ppt  $\Ec$ it holds that $\pr{\Ec(T_\pk,R_\pk)=O^\Ac_\pk}\le  1/2 +\rho$, for  large enough $\kappa\in \I$.
 \end{itemize}
     Indeed,  otherwise there exists a \ppt algorithm $\Ec$  that distinguishes between the real and the forecasted distributions with advantage greater than $\rho$, contradicting the fact that $\Fc$ is a $(\rho,\II)$-forecaster \wrt $Z$.

      Let $\KA^{\COMP}=\KA^{\Fc,\pi,m}$  be the (``computational'') protocol defined in  \cref{proto:compu-WKA}. Noting that the transcript and outputs induced by  a random execution of $\KA^{\COMP}(1^\kappa)$ are identical to that of   $(\Ac^\INFO(X_\kappa),\Bc^\INFO(Y_\kappa))(1^\kappa,T_\kappa,R_\kappa)$, yields the proof.
\end{proof}

\newcommand{\DP}{{\sf DP}\xspace}
\newcommand{\EDP}{{\sf EDP}\xspace}
\newcommand{\private}{{\sf private}\xspace}
\newcommand{\hPi}{{\widehat{\pi}}}
\newcommand{\hAc}{{\widehat{\Ac}}}
\newcommand{\hBc}{{\widehat{\Bc}}}
\newcommand{\correct}{{\sf correct}\xspace}
\renewcommand{\io}{{\sf io}\xspace}
\newcommand{\teps}{{\widetilde{\eps}}}
\newcommand{\tdelta}{{\widetilde{\delta}}}

\section{ Non-Trivial Differentially Private XOR  Implies Key Agreement}\label{sec:DPXRtoKA}
In this section we  use our  classification from \cref{sec:Classification}, to prove that a non-trivial differentially private protocol for computing XOR, implies the existence of a  key-agreement protocol.  In \cref{sec:EDP} we extend the reduction for protocols whose privacy guarantee only assumed  to hold against  \emph{external} observers.

\paragraph{Notation.}
	We introduce some new notation to be used for with input protocols.
	Given a two-party protocol  $\pi= (\Ac,\Bc)$, $\Pc\in \set{\Ac,\Bc}$ and $z\in \zs$,  let  $(\trans_\pi(z),\out^\Pc_\pi(z),\view_\pi^\Pc(z))$, denote the transcript, $\Pc$'s output and $\Pc$'s view receptively,  in a random execution of $\pi(z)$.

\paragraph{Differential privacy.}
Since the  focus of this result  is on single bit input  protocol, we only define differential privacy for such protocols. Also, since we are in the computational setting,  we only define the notion for efficient  distinguishers.

\begin{definition}[$(\eps,\delta)$-differential privacy]\label{def:DP}
A single-bit input two-party protocol $\pi = (\Ac,\Bc)$ is {\sf $(\eps,\delta)$-differentially private, denoted $(\eps,\delta)$- \DP},  \wrt  $\eps,\delta \colon \N \mapsto \R^+$, if  for  any \ppt distinguisher $\Dc$ and $x\in\zo$, for all but finitely many $\kappa$'s it holds that

$$ \pr{\Dc(\view^{\Ac}_{\pi}(1^\kappa,x,0))= 1}  \in  e^{\pm \eps(\kappa)} \cdot \pr{\Dc(\view^{\Ac}_{\pi}(1^\kappa,x,1))= 1}  \pm \delta(\kappa)$$
and similarly for any $y\in \zo$:
$$ \pr{\Dc(\view^{\Bc}_{\pi}(1^\kappa,0,y))= 1}  \in  e^{\pm \eps(\kappa)}\cdot  \pr{\Dc(\view^{\Bc}_{\pi}(1^\kappa,1,y))= 1}  \pm\delta(\kappa)$$
\end{definition}
Namely,  an adversary seeing the view of one of the parties, cannot tell the other party's input too well.

\paragraph{Computing XOR.}

\begin{definition}[$\alpha$-accurate XOR]\label{def:XOR}
	Protocol $\pi = (\Ac,\Bc)$  is computing the XOR functionality in a {\sf $\alpha$-correct manner, denoted  $\alpha$-\correct}, \wrt  $\alpha \colon \N \mapsto \R^+$, if  for any $x,y\in \zo$ it holds that  $ \pr{\out_{\pi}^\Ac(1^\pk,x,y)= \out_\pi^\Bc(1^\kappa,x,y)=x\xor y }  \ge \frac12 +\alpha(\kappa)$.
	
	Such protocols are   {\sf symmetric}, if the parties always agree on the output (\ie $\out_\pi^\Ac(1^\kappa,x,y)= \out_\pi^\Bc(1^\kappa,x,y)$).
\end{definition}
We focus on symmetric protocols with constant $\alpha$ (independent of $\kappa$). 

\paragraph{Our result.}

 \begin{theorem}\label{thm:DPXO}
	Let  $\eps\in [0,1]$.  Assume there exists a symmetric $(21\eps^2)$-\correct,  $(\eps,\eps^3)$-\DP  protocol for computing  XOR,  then there exists an \io key-agreement protocol.
\end{theorem}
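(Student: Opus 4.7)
The plan is to apply \cref{thm:main} to a no-input, single-bit output variant $\hPi$ of $\pi$, and to argue that $\hPi$ cannot be $\rho$-uncorrelated for appropriately small $\rho$, whence an \io key-agreement must exist. Define $\hPi=(\hAc,\hBc)$ as follows: party $\hAc$ samples $X\la\zo$, party $\hBc$ samples an independent $Y\la\zo$, the parties execute $\pi(X,Y)$ with $\hAc$ appending her output $W$ to the transcript as a final message, and finally $\hAc$ outputs $X$ while $\hBc$ outputs $Y$. Then $\hPi$ is a \ppt, no-input, single-bit output two-party protocol; its transcript $T$ deterministically determines the common output bit $W(T)$ of $\pi$, and by correctness $\pr{W(T)=X\oplus Y}\ge 1/2+21\eps^2$. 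Since $T$ is a function of either party's view in $\pi$ and the two inputs are sampled independently inside $\hPi$, the transcript $T$ inherits $(\eps,\eps^3)$-DP with respect to each of $X$ and $Y$ individually.

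Set $\rho=c\eps^4$ for a sufficiently small constant $c>0$, and invoke \cref{thm:forecaster} on $\hPi$ to obtain a forecaster $\Fc$ and an infinite set $\I\subseteq\N$ such that $\Fc$ is $(\rho,\I)$-indistinguishable \wrt $\hPi$. The argument reduces to showing that $\Fc$ is also $(\eta,\I)$-correlated for some $\eta\ge 20\eps^2$. Since then $30\sqrt\rho=30\sqrt c\cdot\eps^2\ll 20\eps^2$ for $c$ small, \cref{thm:fulKAfromCorrelation} produces an \io key-agreement protocol in $\I$ with constant gap between agreement and secrecy, which is amplified by \cref{thm:KeyAggAmp} into an \io key-agreement as required.

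To lower-bound $\eta$, write $\Fc(T,R)=(\pa,\pbz,\pbo)$ and let $r$ (resp.~$r'$) denote $\pr{X'\oplus Y'=1\mid T,R}$ under $U_{\Fc(T,R)}$ (resp.~under the product $U_{\Prod(\Fc(T,R))}$). As computed in the proof of \cref{lemma:info-ka}, $\SD(U_p,U_{\Prod(p)})=2\pa(1-\pa)|\pbz-\pbo|=|r-r'|$; hence $\eta=\ex{|r-r'|}\ge\ex{|r-1/2|}-\ex{|r'-1/2|}$ by the triangle inequality. Applying the \ppt distinguisher $D(x,y,t,r)=\1_{x\oplus y=W(t)}$ to the $(\rho,\I)$-indistinguishability of $\REAL^{\hPi,\Fc}$ and $\FSC^{\hPi,\Fc}$, together with correctness, gives $\ex{|r-1/2|}\ge 21\eps^2-\rho$. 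Using the identity $r'-1/2=-2(\pa-1/2)(\bar q-1/2)$ for $\bar q=(1-\pa)\pbz+\pa\pbo$ (the marginal of $Y'$ under the product), Cauchy-Schwarz reduces bounding $\ex{|r'-1/2|}$ to controlling $\ex{(\pa-1/2)^2}$ and $\ex{(\bar q-1/2)^2}$.

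The main obstacle is establishing $\ex{(\pa-1/2)^2},\ex{(\bar q-1/2)^2}=O(\eps^2)$ from the DP hypothesis. The plan is to leverage the $\mu$-optimality of $\Fc$ implicit in the proof of \cref{thm:forecaster} (with $\mu=(\rho/3)^3$): comparison with the constant-$1/2$ forecaster yields $\ex{(\pa-q_X)^2}\le\ex{(q_X-1/2)^2}+\mu$ for $q_X(T)=\pr{X=1\mid T}$, and symmetrically for $\bar q$ and $q_Y(T)=\pr{Y=1\mid T}$. The pointwise posterior bound $|q_X(T)-1/2|\le\eps/4$ holding on all but an $O(\eps^3)$-fraction of transcripts (a high-probability consequence of $(\eps,\eps^3)$-DP) then yields $\ex{(q_X-1/2)^2}\le\eps^2/16+O(\eps^3)$, and hence $\ex{(\pa-1/2)^2}\le\eps^2/4+O(\eps^3+\mu)$ by the $L^2$ triangle inequality. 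Symmetrically for $\bar q$, Cauchy-Schwarz gives $\ex{|r'-1/2|}\le\eps^2/2+O(\eps^3+\mu)$. Combining with the lower bound on $\ex{|r-1/2|}$, we get $\eta\ge 21\eps^2-\rho-\eps^2/2-O(\eps^3+\mu)\ge 20\eps^2$ for $c$ (and hence $\rho,\mu$) sufficiently small, concluding the argument.
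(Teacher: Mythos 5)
Your overall architecture (pass to a no-input protocol $\hPi$, invoke the forecaster/dichotomy machinery, use DP to force the forecasted marginals near $1/2$, and use correctness to extract correlation) matches the paper's, and your accounting of the correlation via $\eta=\ex{|r-r'|}$ with $r'-\tfrac12=-2(\pa-\tfrac12)(\bar q-\tfrac12)$ is a clean reorganization of the computation in \cref{claim:info-ka:3}. However, there is a genuine gap in the step where you bring in differential privacy. You claim that $(\eps,\eps^3)$-DP implies the \emph{true posterior} $q_X(T)=\pr{X=1\mid T}$ satisfies $|q_X(T)-\tfrac12|\le\eps/4$ on all but an $O(\eps^3)$-fraction of transcripts. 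That is a consequence of \emph{information-theoretic} DP, but the paper's \cref{def:DP} is computational: the likelihood-ratio condition is only required against \ppt distinguishers. Under that definition the true posterior is completely unconstrained --- e.g., the transcript may contain an encryption of $X$, in which case $q_X(T)\in\zo$ and $\ex{(q_X-\tfrac12)^2}=\tfrac14$, while the protocol remains computationally DP. Your subsequent bound $\ex{(\pa-\tfrac12)^2}\le\eps^2/4+O(\eps^3+\mu)$ is derived by the $L^2$ triangle inequality through $q_X$, so it collapses. (The conclusion about $\pa$ may well still be true, but not by this route.)

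The fix is to apply the DP guarantee only to quantities a \ppt machine can evaluate, namely the forecaster's output itself: if $\pr{\pa(T,R)\ge\tfrac12+3\eps}$ is non-negligible, the \ppt test ``output $1$ iff $\pa(T,R)\ge\tfrac12+3\eps$ and $x=1$'', evaluated first on $x\from U_{\pa}$ and then transported to $x\from X$ via the forecaster's $(\rho,\I)$-indistinguishability, contradicts computational DP of $X$ given the transcript. This is exactly the paper's \cref{claim:DPXO}, and it bypasses the true posterior entirely (it also makes your appeal to the $\mu$-optimality of $\Fc$ unnecessary). A secondary quantitative issue: even granting information-theoretic DP, the additive $\delta=\eps^3$ slack only bounds the exceptional transcript mass by $O(\delta/\eps)=O(\eps^2)$, not $O(\eps^3)$; an $O(\eps^2)$ exceptional set contributes $\Theta(\eps^2)$ to $\ex{(q_X-\tfrac12)^2}$ with an uncontrolled constant, which is fatal for the tight $21\eps^2$ versus $\eps^2/2$ bookkeeping your argument requires.
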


The proof is given below. But first note that this range of parameters is indeed achievable by a computationally secure deferentially private protocol.
Consider the functionality $f_{\alpha}(x,y)$ which outputs $x \xor y \xor U_{1/2-\alpha}$ (where $U_{1/2-\alpha}$ is an independent biased coin which is one with probability $1/2 - \alpha$). Assuming OT (oblivious transfer), there exists a two-party protocol that securely implements  $f_{\alpha}$, and this protocol is $\eps$-DP, for $\eps = \Theta(\alpha)$. This is the best possible differential privacy that can be achieved for accuracy $\alpha$. On the other extreme, an $\Theta(\eps^2)$-accurate, $\eps$-differential private,  protocol for computing  XOR  can  be constructed (with information theoretic security) using the so-called \textit{randomized response} approach of \citet{W65a}, as shown in \cite{GoyalMPS2013}. Thus, it is natural to ask what are the minimal computational assumption that are needed for an $\alpha$-accurate, $\eps$-DP computation of XOR, for intermediate choices of $\eps^2 \ll \alpha \ll \eps$.  In this paper, we take a step toward resolve this problem and prove that $\io$ key-agreement is implied for essentially any intermediate $\eps^2 \ll \alpha \ll \eps$.

We now prove \cref{thm:DPXO}.

\begin{proof}
Let $\pi = (\Ac,\Bc)$ be an   $\alpha$-\correct,  $(\eps, \delta)$-\DP  protocol for computing  XOR. We assume \wlg that  $\pi$'s transcript contain the security parameter, so we can omit it from the distinguisher list of inputs. Consider the following no-input protocol $\hPi$.
\begin{protocol}[$\hPi= (\hAc,\hBc)$]
	\item Parameters: security parameter $1^\secParam$.
	
	\item Operation:
	
	\begin{enumerate}
\item 	 $\hAc$ samples  $x\la\zo$ and  $\hBc$ samples  $y\la\zo$.

\item The parties interact in $(\Ac(x),\Bc(y))(1^\kappa)$, with $\hAc$ and $\hBc$ taking the role of $\Ac$ and $\Bc$ respectively.
          Let $\out$ be the (common) output of this interaction.

  \item If $\out = 0$, the parties locally outputs $x$ and $y$ respectively.

       Otherwise, the parties locally outputs $x$ and $1-y$ respectively.
\end{enumerate}
\end{protocol}

Since $\pi$ is symmetric, its $\alpha$ correctness yields that
\begin{align}\label{eq:DPXO:00}
\pr{\out^{\hAc}_\hPi(1^\kappa)= \out^{\hBc}_\hPi(1^\kappa)} \ge \frac12 +\alpha
\end{align}

Since $\pi$ is symmetric  and $(\eps,\eps^3)$-\DP, the \emph{output} of each party of $\hPi$ is  $(\eps,\eps^3)$ differential private from the other party. Namely,  for any \ppt distinguisher $\Dc$ and uniformly chosen bit $X$,
\begin{align}\label{eq:DPXO:01}
\pr{\Dc(1^\kappa,\view^{\hAc}_\hPi(1^\kappa),\out^\hBc_\hPi(1^\kappa))= 1}  \in  e^{\pm \eps} \cdot \pr{\Dc(1^\kappa,\view^{\hAc}_\hPi(1^\kappa),X)= 1} \pm \delta
\end{align}
for all but finitely many $\kappa$'s,  and similarly for the output of $\hAc$.

Let $\rho= \eps^3$. By \cref{thm:main},  either $\hPi$ can be used to construct an io key-agreement protocol, or  it is  io-$\rho$-uncorrelated.  Since we would like to prove the former,  we assume that the latter  holds and derive a contradiction for the assumed combination of privacy and accuracy of $\pi$.

Since protocol $\pi$ is io-$\rho$-uncorrelated, there exists a \pptm  (decorrelator) \Decr that outputs a pair of numbers in $[0,1]$, and an infinite set $\I\subseteq \N$ such that the following holds:
Let $Z=\set{Z_\pk=(X_\kappa,Y_\kappa,T_\kappa,R_\pk)= (\out^\hAc_\hPi(1^\kappa),\out^\hBc_\hPi(1^\kappa),\trans_\hPi(1^\kappa),R_\pk)}_{\pk\in\N}$, where $R_\kappa$ is the uniform string whose length bounds the number of coins used by $\Decr$ on input $t\in \Supp(T_\kappa)$, and let $Z'=\set{Z'_\pk=(U_{p},T_\kappa,R_\kappa)_{p\la \Decr(T_\kappa;R_\kappa)}}_{\kappa\in\N}$. It holds that,
\begin{align}\label{eq:DPXO:000}
Z \cindist_{\rho,\I} Z'
\end{align}
for $U_{p = (p_1,p_2)}$ being the output of two  independent coins, first coin taking the value one with probability $p_1$, and the  second with probability  $p_2$.
Namely, given the transcript and the decorrelator's coins, it is impossible to distinguish too well the  parties' output  from  the  pair of independent coins sample according to the predictor  prediction.

We call  a pair $(p_1,p_2) \in [0,1]^2$   \emph{private}, if $p_1,p_2 \in \frac12  \pm 3\eps$.  Similarly,  $\Decr$ is \emph{private on $\kappa$}, denoted $\kappa$-\private, if
\begin{align*}\label{eq:DPXO}
\pr{\Decr(T_\kappa;R_\kappa) \text{ is  private}} \ge 1- \eps^2
\end{align*}
We use  the privacy of $\hPi$ to derive the following fact.
\begin{claim}\label{claim:DPXO}
$\Decr$ is  $\kappa$-\private for all   but finitely many  $\kappa\in \I$.
\end{claim}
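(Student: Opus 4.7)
The plan is to argue by contradiction, exploiting the $(\eps,\eps^3)$-differential privacy that $\hPi$ inherits from $\pi$: a non-private output of $\Decr$ would let me build an efficient predictor that sees only the transcript (plus its own coins) and predicts the output $Y_\kappa$ of $\hBc$ too well, contradicting DP. Suppose for contradiction that there is an infinite set $\hat\II \subseteq \II$ on which $\pr{\Decr(T_\kappa;R_\kappa) \text{ non-private}} > \eps^2$. Since non-privateness decomposes into four sub-events ($p_i \gtrless \tfrac12 \pm 3\eps$ for $i=1,2$), the pigeonhole principle lets me pass to an infinite sub-sub-set and a fixed one of these sub-events whose probability is at least $\eps^2/4$; without loss of generality, say $q_\kappa := \pr{\Decr_2(T_\kappa;R_\kappa) > \tfrac12 + 3\eps} \ge \eps^2/4$. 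The three other sub-events will be handled by symmetric arguments (using the analogous predictors for $\Decr_1$ and the ``below $\tfrac12$'' cases, together with the DP guarantee for the input of $\hAc$).

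Next, I introduce the \ppt binary predictor $\Dc(t,r) := \1[\Decr_2(t,r) > \tfrac12 + 3\eps]$, which depends only on the transcript and the decorrelator's coins. In the simulated ensemble $Z'_\kappa$, conditional on $\set{\Dc=1}$ the coordinate $Y'_\kappa = U_{p_2}$ is Bernoulli with bias at least $\tfrac12 + 3\eps$, so a direct computation yields
\[
\pr{\Dc(T_\kappa,R_\kappa)=1 \wedge Y'_\kappa = 1} - \pr{\Dc(T_\kappa,R_\kappa)=1 \wedge Y'_\kappa = 0} \ge 6\eps \cdot q_\kappa.
\]
Applying the $\rho$-indistinguishability of $Z_\kappa$ and $Z'_\kappa$ ($\rho = \eps^3$) to the two \ppt binary tests $(x,y,t,r) \mapsto \1[\Dc(t,r)=1 \wedge y = b]$ for $b\in\zo$ transfers this lower bound to the real ensemble with an additive loss of $2\rho$.

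On the other hand, I upper-bound the same real-world quantity using DP. Since $\pi$ is symmetric the common output $\out$ of $\pi$ is included in $T_\kappa$, and $Y_\kappa = y \oplus \out$. Splitting $\Dc$ into the two sub-predictors $\Dc \wedge (\out=0)$ and $\Dc \wedge (\out=1)$ (each a function of $(T,R)$, hence covered by the DP of $\hPi$) and applying $(\eps,\eps^3)$-DP to each in the appropriate direction, since the event $\set{Y_\kappa=1}$ corresponds to $\set{y=1}$ on the first branch and to $\set{y=0}$ on the second, then summing, yields
\[
\pr{\Dc=1\wedge Y_\kappa = 1} - \pr{\Dc=1\wedge Y_\kappa = 0} \le (e^\eps-1)\pr{\Dc=1} + \eps^3 = (e^\eps-1)q_\kappa + \eps^3.
\]

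Comparing the two estimates and using $e^\eps - 1 = O(\eps)$ for $\eps \in [0,1]$, both sides take the form $c\eps q_\kappa + O(\eps^3)$; the ``gain'' $6\eps q_\kappa$ from non-privateness in the simulated world must, after subtracting the $2\rho$ indistinguishability loss, strictly exceed the $(e^\eps-1)q_\kappa + \eps^3$ DP allowance. This forces $q_\kappa = O(\eps^2)$ with a constant that contradicts $q_\kappa \ge \eps^2/4$, yielding the claim. The main obstacle is precisely this three-way balance: the gain $6\eps q_\kappa$ must beat both the DP multiplicative allowance $\approx 2\eps q_\kappa$ and the additive $O(\eps^3)$ losses from indistinguishability and from DP's $\delta = \eps^3$ slack; the specific constants ``$3\eps$'' in the definition of private and ``$\eps^2$'' in the definition of $\kappa$-private are chosen to make this balance tight (and, for $\eps$ small enough, absorb the lower-order terms in $e^\eps - 1$).
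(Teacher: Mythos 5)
Your overall strategy is the same as the paper's: assume some coordinate of $\Decr$'s output leaves the range $\frac12\pm3\eps$ with noticeable probability, turn that event into an efficient test, compute the test's advantage in the uncorrelated (simulated) ensemble, transfer it to the real ensemble via $\rho$-indistinguishability with $\rho=\eps^3$, and contradict the $(\eps,\eps^3)$-differential privacy inherited by $\hPi$. The paper runs this for the first coordinate against $X_\kappa$ and compares to a uniform bit; you run it for the second coordinate against $Y_\kappa$ and split by the value of $\out$ --- equivalent in substance, and your DP upper bound $(e^\eps-1)q_\kappa+\eps^3$ is correct (modulo a small point: under \cref{def:DP} the output $\out$ need not appear in the transcript, so you should view $\Dc\wedge(\out=b)$ as a function of $\hAc$'s \emph{view}, which does determine $\out$, rather than of $T_\kappa$ alone).

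The gap is in the final arithmetic. After your (honest) pigeonhole you only have $q_\kappa\ge\eps^2/4$, and your two estimates combine to $6\eps q_\kappa-2\eps^3\le(e^\eps-1)q_\kappa+\eps^3\le 2\eps q_\kappa+\eps^3$, hence $4\eps q_\kappa\le 3\eps^3$, i.e.\ $q_\kappa\le\tfrac34\eps^2$. This is perfectly consistent with $q_\kappa\ge\eps^2/4$, so no contradiction is reached; the closing assertion that this ``contradicts $q_\kappa\ge\eps^2/4$'' is not supported by the constants you actually derive. The paper's proof sidesteps this by declaring, ``without loss of generality,'' that a \emph{single} one of the bad sub-events already has probability $\beta\ge\eps^2$ (not $\eps^2/4$); with $\beta\ge\eps^2$ the analogous chain does close, since then the transferred advantage $3\beta\eps-\eps^3\ge2\beta\eps$ strictly exceeds the DP allowance $(e^\eps-1)\tfrac\beta2+\eps^3<\beta\eps+\beta\eps$. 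Your more careful union bound thus exposes that the constants are genuinely delicate here: to make your version airtight you must either avoid the factor-$4$ loss (e.g., a single signed predictor per coordinate covering both directions only recovers a factor of $2$, which is still not enough), or weaken the definition of $\kappa$-\private to failure threshold $4\eps^2$ and correspondingly enlarge the correctness constant in \cref{thm:DPXO}.
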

The proof of  \cref{claim:DPXO} is given below, but we first  use it to conclude the theorem's proof.  Let  $\kappa\in \I$  be such that $\Decr$ is $\kappa$-\private.
It follows that
\begin{align*}
\ppr{(p_1,p_2) \la \Decr(T_\kappa;R_\kappa) }{U_{p_1}=U_{p_2}} \le \frac12  + 18\eps^2 +\eps^2 = \frac12 + 19\eps^2
\end{align*}
where $U_p$ is a uniform coin that takes the value one with probability $p$. By \cref{eq:DPXO:000}, for large enough $\kappa\in \I$ it holds that
\begin{align*}
\pr{X_\kappa=Y_\kappa} &\le \frac12 + 19\eps^2 + \rho\\
&\le \frac12 + 20\eps^2 \\
&< \frac 12 + \alpha ,
\end{align*}
in contradiction to \cref{eq:DPXO:00}.
\end{proof}

\begin{proof}[Proof of \cref{claim:DPXO}]
	
For $\kappa\in \I$ for which  $\Decr$ is not $\kappa$-\private, assume \wlg that  $\beta = \pr{\Decr(T_\kappa;R_\kappa)_1 \ge \frac12  + 3\eps} \ge \eps^2$. Consider the distinguisher $\Dc$ that on input $(p,x)$, outputs one if $p \ge \frac12  + 3\eps$ and $x=1$.  By assumption
\begin{align*}
\ppr{p  \la \Decr(T_\kappa;R_\kappa)_1;x \la U_p}{\Dc(p,x) = 1} \ge\beta\cdot (\frac12 +3\eps ) = \frac{\beta}2 +3\beta\eps
\end{align*}
  Hence, \cref{eq:DPXO:000} yields that large enough $\kappa\in \I$,
 \begin{align*}
 \ppr{p  \la \Decr(T_\kappa;R_\kappa)_1;x \la X_\kappa}{\Dc(p,x) = 1}  \ge  \frac{\beta}2 + 3\beta\eps- \rho \ge  \frac{\beta}2 +2\beta\eps
\end{align*}
Since,
 \begin{align*}
\ppr{p  \la \Decr(T_\kappa;R_\kappa)_1;x \la U_{1/2}}{\Dc(p,x) = 1} =  \beta/2 ,
\end{align*}
we conclude that
 \begin{align*}
\pr{\Dc( \Decr(T_\kappa;R_\kappa)_1,X_\kappa) = 1} & \ge \frac\beta 2 +2\beta\eps  = \frac{\beta}2 (1+ 2\eps)  + \beta\eps\\
& > e^{\eps} \cdot \frac\beta2+  \beta\eps\\
& = e^{\eps} \cdot \pr{\Dc( \Decr(T_\kappa;R_\kappa)_1,U) = 1}+   \beta\eps\\
&\ge e^{\eps} \cdot \pr{\Dc( \Decr(T_\kappa;R_\kappa)_1,U) = 1}+   \eps^3.
\end{align*}
Namely, the algorithm that on input $(t,x)$ samples an independent uniform string $r$, and returns $\Dc(\Decr(t;r)_1,x)$, contradicts the assumed differential privacy  of $\pi$ (see \cref{eq:DPXO:01}).
\end{proof}

 \subsection{External Differential Privacy}\label{sec:EDP}
 Our  result  extends to a weaker notion of differential  privacy, that only  guarantee to hold  against external observers.
 \begin{definition}[$(\eps,\delta)$-external differential privacy]\label{def:EDP}
 	A single-bit input two-party protocol $\pi = (\Ac,\Bc)$ is {\sf $(\eps,\delta)$-external differentially private, denoted $(\eps,\delta)$- \EDP},  \wrt  $\eps,\delta \colon \N \mapsto \R^+$, if  for  any \ppt distinguisher $\Dc$ and $x,y,y'\in\zo$, for all but finitely many $\kappa$'s it holds that
 	
 	$$ \pr{\Dc(1^\kappa,\trans_\pi(1^\kappa,x,y))= 1}  \in  e^{\pm \eps(\kappa)} \cdot \pr{\Dc(1^\kappa,\trans_\pi(1^\kappa,x,y'))= 1}  \pm \delta(\kappa)$$
 	for all but finitely many $\kappa$'s, and same for $\Bc$'s input.
 \end{definition}
 Namely,  privacy is only required to hold against an external viewer that sees only the protocol transcript. Achieving  external privacy is  typically much simpler than the full-fledged notion of \cref{def:DP}.  In particular, functionalities such as XOR (with external privacy) can by implemented using key-agreement protocols, this is in contrast to the full-fledged notion of differential privacy \cref{def:DP} that requires oblivious transfer (as was recently shown in \cite{HMSS}).   
 
 we can construct differ for privacy definition  such protocol only need to assume  key-agreement, where we currently only know how to construct them assuming  oblivious transfer require for the  full-fledged notion.

 A protocol has \emph{explicit output} if the parties' common   output appears explicitly in the transcript. For such protocols we have the following result.
 \begin{theorem}\label{thm:EDPXO}
 	Let  $\eps\in [0,1]$.  Assume there exists an explicit-output    $(21\eps^2)$-\correct,  $(\eps,\eps^3)$-\EDP  protocol for computing  XOR,  then there exists an \io key-agreement protocol.
 \end{theorem}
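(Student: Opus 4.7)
The proof mirrors the blueprint of \cref{thm:DPXO}, using external differential privacy together with the explicit-output property to replace internal differential privacy. Let $\pi = (\Ac, \Bc)$ be the assumed $(21\eps^2)$-\correct, $(\eps, \eps^3)$-\EDP, explicit-output XOR protocol, and construct the no-input protocol $\hPi = (\hAc, \hBc)$ exactly as in the proof of \cref{thm:DPXO}: parties sample uniform inputs $x, y$, execute $\pi$ to obtain the common output $o$ (readable from the transcript by explicit output), and $\hAc$ outputs $x$ while $\hBc$ outputs $y \oplus o$. Symmetric correctness gives $\pr{\out^{\hAc}_\hPi(1^\kappa) = \out^{\hBc}_\hPi(1^\kappa)} \geq 1/2 + 21\eps^2$. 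Applying \cref{thm:main} with $\rho := \eps^3$, either $\hPi$ yields an io key-agreement (and we are done), or $\hPi$ is $(\rho,\I)$-uncorrelated for some infinite $\I \subseteq \N$; assume the latter and let $\Decr$ be the guaranteed decorrelator.

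The crux is an EDP-based analogue of \cref{claim:DPXO}: for all but finitely many $\kappa \in \I$, each of the four tail events ``$\Decr(T_\kappa;R_\kappa)_i \geq 1/2 + 3\eps$'' and ``$\Decr(T_\kappa;R_\kappa)_i \leq 1/2 - 3\eps$'' ($i \in \sset{1,2}$) has probability less than $\eps^2$. Consider the case $i=1$ with positive deviation; set $\beta := \pr{\Decr(T_\kappa;R_\kappa)_1 \geq 1/2 + 3\eps}$ and suppose for contradiction that $\beta \geq \eps^2$. Let $\Dc(t,r,b)$ be the \ppt predicate outputting $1$ iff $\Decr(t;r)_1 \geq 1/2 + 3\eps$ and $b = 1$. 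A direct calculation gives $\pr{\Dc(T_\kappa, R_\kappa, U_{p_1}) = 1} \geq \beta(1/2 + 3\eps)$ and $\pr{\Dc(T_\kappa, R_\kappa, U) = 1} = \beta/2$ for an independent uniform $U$; combined with $\rho$-indistinguishability of $\Decr$ this yields
\[
\pr{\Dc(T_\kappa, R_\kappa, X_\kappa) = 1} - \pr{\Dc(T_\kappa, R_\kappa, U) = 1} \;\geq\; 3\beta\eps - \rho \;\geq\; 2\beta\eps,
\]
where $X_\kappa$ is Alice's (uniform) input in $\hPi$. Letting $q_a := \eex{Y}{\pr{\Dc(T^\pi_{a,Y}, R_\kappa, 1) = 1}}$ with $T^\pi_{a,y}$ the transcript distribution of $\pi(a,y)$, a short computation (using $\Dc(\cdot,\cdot,0) \equiv 0$) shows the left-hand side above equals $(q_1 - q_0)/4$. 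Averaging the EDP guarantee against Alice's input over uniform $Y$ gives $|q_1 - q_0| \leq (e^\eps - 1)(q_0+q_1)/2 + \eps^3 = (e^\eps - 1)\beta + \eps^3$, so the left-hand side is bounded by $(e^\eps - 1)\beta/4 + \eps^3/4 \leq \beta\eps/2 + \eps^3/4$ for $\eps \leq 1$. Combining, $2\beta\eps \leq \beta\eps/2 + \eps^3/4$, forcing $\beta \leq \eps^2/6$, a contradiction.

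The remaining three tail events are handled symmetrically; the two events concerning $p_2$ use the explicit-output property to reduce to EDP against Bob's input. Writing $\out^{\hBc}_\hPi = Y \oplus O(T_\kappa)$, where $O$ is the deterministic extractor of the protocol output from the transcript, one defines $\tilde\Dc(t,r,y) := \Dc(t,r,y \oplus O(t))$ and verifies the identities $\pr{\Dc(T_\kappa, R_\kappa, \out^{\hBc}_\hPi) = 1} = \pr{\tilde\Dc(T_\kappa, R_\kappa, Y) = 1}$ and $\pr{\Dc(T_\kappa, R_\kappa, U) = 1} = \pr{\tilde\Dc(T_\kappa, R_\kappa, U) = 1}$ (the second because XORing an independent uniform bit with a function of $T_\kappa$ yields an independent uniform bit); the same calculation as before, now invoking EDP against $Y$, gives the identical contradiction. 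A union bound over the four tail events shows $(p_1,p_2) \in (1/2 \pm 3\eps)^2$ with probability at least $1 - 4\eps^2$, so $\pr{U_{p_1} = U_{p_2}} \leq (1/2 + 18\eps^2)(1 - 4\eps^2) + 4\eps^2 \leq 1/2 + 20\eps^2$, and $\rho$-indistinguishability yields $\pr{\out^{\hAc}_\hPi = \out^{\hBc}_\hPi} \leq 1/2 + 20\eps^2 + \eps^3 < 1/2 + 21\eps^2$ for $\eps \in (0,1)$, contradicting the correctness of $\hPi$. The chief obstacle is that external DP delivers only a factor-$e^{\eps/2}$ multiplicative bound on a distinguisher taking a bit alongside the transcript (versus $e^\eps$ in the internal case), which turns out to be tight enough thanks to the conservative $3\eps$ deviation threshold, while the $p_2$ analysis relies on the explicit-output reduction to access EDP against Bob's input.
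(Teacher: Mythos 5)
Your proof is correct and follows exactly the route the paper intends: the paper's own ``proof'' of this theorem is a one-line pointer to the argument for \cref{thm:DPXO}, and you have reproduced that argument while correctly supplying the two EDP-specific details it leaves implicit --- expressing the distinguisher's advantage purely in terms of transcript-only acceptance probabilities $q_0,q_1$ so that external privacy applies, and using the explicit-output map $O(t)$ to reduce the $p_2$ tail events to EDP against Bob's input via $\tilde\Dc$. The only nitpick is that in the $p_2$ case $\tilde\Dc(\cdot,\cdot,0)\not\equiv 0$, so the calculation invokes EDP twice and the additive term becomes $\eps^3/2$ rather than $\eps^3/4$; this does not affect the contradiction.
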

 \begin{proof}
 Follows the same line as the proof of \cref{thm:EDPXO}.
 \end{proof}

\section{Conclusion and Open Problems}

In this paper, we prove a dichotomy theorem (\cref{thm:main}) for \ppt two-party protocols with no inputs and single bit outputs: every such protocol is either $\rho$-uncorrelated (for every $\rho>0$, on infinitely many $\kappa$'s) or it implies key agreement (on infinitely many $\kappa$'s). The theorem comes with caveats: it has ``infinitely many $\kappa$'s'' in both statements (rather than just in one), and it only achieves constant $\rho>0$. A natural open problem is to remove these caveats from \cref{thm:main} (it is natural  to first try and  remove the caveats from \cref{thm:Sim} and \cref{thm:forecaster}).

In this paper, we only discuss protocols where each party outputs a single bit. Our results can be extended to the case that each party outputs a number of bits that is constant (and does not grow with the security parameters). We point out that our results on simulators and forectasters do not extend to the case where the number of bits that each party outputs is large. More specifically, assuming the existence of one-way functions, there do not exist simulators or forecasters for such protocols, and this is the case even if we ignore the second party and only focus on simulating the output $X$ of the first party $A$, given the transcript $T$. In order to see this, consider the case that $T=f(X)$, for a one-way function $f$, where $X$ is uniformly chosen by the party $A$, and is also her output in the protocol. By the security of the one-way function, it is impossible for a polynomial time simulator that is given $T=f(X)$ to output $X'$ such that that the pair $(T,X)$ is computationally indistinguishable from the pair $(T,X')$. This shows that the existence of simulators and forecasters that is guaranteed  in  \cref{thm:Sim,thm:forecaster}, does not hold for protocols where the outputs of the parties is long. It is natural to ask whether some form of a dichotomy theorem applies for protocols that output many bits.

Other interesting open problems are related to our applications. What is the minimal assumption needed for differentially private computation of natural functions? (This question can be asked for various ranges of accuracy and differential privacy parameters). 
For the differentially private XOR functionality this question was fully resolved by the resent  subsequent work of Haitner, Mazor, Shaltiel and Silbak \cite{HMSS}. Where they showed that any non-trivial $\XOR$ can be used to construct an oblivious transfer protocol (without the infinitely often), moreover, this result also applies for sub constant leakage and accuracy.

Can the coin tossing result of \cite{HMO} be extended to hold for a number of rounds that depends on the security parameter?

\bibliographystyle{abbrvnat}
\bibliography{KADichotomy}


\end{document}